\definecolor{lightgray}{rgb}{.9,.9,.9}
\definecolor{darkgray}{rgb}{.4,.4,.4}
\definecolor{purple}{rgb}{0.65, 0.12, 0.82}
\definecolor{darkgreen}{rgb}{0, 0.5, 0}
\definecolor{turquoise}{rgb}{0, 0.5, 0.5}
\definecolor{plum}{rgb}{.4, .14, .37}
\newif\ifauthors
\crefname{section}{\S}{\S\S}
\Crefname{section}{\S}{\S\S}
\crefname{lemma}{lemma}{lemmas}
\Crefname{lemma}{Lemma}{Lemmas}
\crefname{definition}{definition}{definitions}
\Crefname{definition}{Definition}{Definitions}
\renewcommand{\autoref}{\Cref}
\lstdefinelanguage{JavaScript}{
  keywords={typeof, new, true, false, catch, function, struct, mapping, return, null, catch, switch, var, if, in, while, do, else, case, break},
  keywordstyle=\color{blue}\bfseries,
  ndkeywords={class, export, boolean, throw, implements, import, this, contract, library},
  ndkeywordstyle=\color{turquoise}\bfseries,
  keywords={[3]bool,address,uint, uint256, string},
  keywordstyle=[3]\color{darkgreen}\bfseries,
  identifierstyle=\color{black},
  sensitive=false,
  comment=[l]{//},
  morecomment=[s]{/*}{*/},
  commentstyle=\color{purple}\ttfamily,
  stringstyle=\color{red}\ttfamily,
  morestring=[b]',
  morestring=[b]"
}
\lstdefinelanguage{HoRSt}{
  sensitive = true,
  keywords = [1]{for, in, op, sel, expect, test, query, datatype, pred, clause, rule, let },
  keywords = [2]{match, with, store, select, mod },
  keywords = [6]{true, false, @T, @V, @D, @ADD, @STOP,@INVALID, @SELFDESTRUCT, 0, 1, 2 },
  keywords = [4]{int, bool, array, AbsDom, Opcode, CallData },
  keywords = [5]{MState, Exc, Halt, ReturnData },
  keywordstyle=[4]\color{plum}\bfseries,
  keywordstyle=[5]\color{darkgreen}\bfseries,
  keywordstyle=[6]\color{darkgreen!90!black}\bfseries,
  comment=[l]{//},
  morecomment=[s]{/*}{*/}
}
\renewcommand{\subsubsection}[1]{\smallskip\noindent\textbf{#1.}}
\newcommand{\myparagraph}[1]{\smallskip \noindent \textit{#1}.}
\newtheorem*{theorem-non}{Theorem}
\newcommand*\circled[1]{\tikz[baseline=(char.base)]{
            \node[shape=circle,draw,inner sep=0.8pt] (char) {#1};}}
\newcommand\doubleplus{+\kern-1.3ex+\kern0.8ex}
\newcommand\mdoubleplus{\ensuremath{\mathbin{{+}\mkern-8mu{+}}}}
\newcommand\mcons{\ensuremath{\mathbin{:\mkern-8.5mu:}}}
\newcommand{\define}{:=}
\newcommand{\nil}{\epsilon}
\newcommand{\cons}[2]{{#1}\mcons{#2}}
\newcommand{\BB}{\mathbb{B}}
\newcommand{\NN}{\mathbb{N}}
\newcommand{\setof}[1]{\mathcal{P}(#1)}
\newcommand{\lam}[2]{\lambda {#1}. \, {#2}}
\newcommand{\access}[2]{{#1}. {#2}} 
\newcommand{\size}[1]{|{#1}|}
\newcommand{\fun}[2]{\lambda {#1}. \, {#2}}
\newcommand{\store}[3]{{#1}^{#2}_{#3}}
\newcommand{\select}[2]{{#1}[#2]}
\newcommand{\exstates}{\mathcal{S}}
\newcommand{\contracts}{\mathcal{C}}
\newcommand{\sstep}[3]{{#1} \vDash {#2} \, \rightarrow {#3}} 
\newcommand{\ssteps}[3]{{#1} \vDash {#2} \, \rightarrow^* {#3}} 
\newcommand{\nsteps}[4]{{#1} \vDash {#2} \, \rightarrow^{#4} {#3}}
\newcommand{\callstack}{S} 
\newcommand{\callstackb}{U}  
\newcommand{\transenv}{\Gamma} 
\newcommand{\memv}{\textit{m}}
\newcommand{\stackv}{\textit{s}}
\newcommand{\actwv}{\textit{i}}
\newcommand{\datav}{\textit{data}}
\newcommand{\instr}{\textit{inst}}
\newcommand{\accountv}{a}
\newcommand{\mstate}{\mu}
\newcommand{\exenv}{\iota}
\newcommand{\gstate}{\sigma}
\newcommand{\regstatefull}[4]{(#1, #2, #3, #4)}
\newcommand{\excstate}{\textit{EXC}}
\newcommand{\haltstate}[4]{\textit{HALT}(#1, #2, #3, #4)}
\newcommand{\haltstatefull}[4]{\textit{HALT}(#1, #2, #3, #4)}
\newcommand{\lgas}{\textit{gas}}
\newcommand{\callstackplain}{U}
\newcommand{\smstate}[5]{(#1, #2, #3, #4, #5)}
\newcommand{\accountstate}[4]{(#1, #2, #3, #4)}
\newcommand{\gas}{\textsf{gas}}
\newcommand{\pc}{\textsf{pc}}
\newcommand{\stack}{\textsf{s}}
\newcommand{\activeaccount}{\textsf{actor}}
\newcommand{\activecode}{\textsf{code}}
\newcommand{\transeffects}{\eta}
\newcommand{\exconf}[2]{(#1, #2)}
\newcommand{\concatstack}[2]{{#1}\mdoubleplus{#2} }
\newcommand{\getcontractcode}[1]{\textit{code} \, ({#1})}
\newcommand{\annotate}[2]{{#1}_{#2}}
\newcommand{\exstate}{s}
\newcommand{\ADD}{\textsf{ADD}}
\newcommand{\CALLCODE}{\textsf{CALLCODE}}
\newcommand{\DELEGATECALL}{\textsf{DELEGATECALL}}
\newcommand{\STATICCALL}{\textsf{STATICCALL}}
\newcommand{\CALL}{\textsf{CALL}}
\newcommand{\STOP}{\textsf{STOP}}
\newcommand{\RETURN}{\textsf{RETURN}}
\newcommand{\CREATE}{\textsf{CREATE}}
\newcommand{\JUMP}{\textsf{JUMP}}
\newcommand{\JUMPI}{\textsf{JUMPI}}
\newcommand{\CALLDATALOAD}{\textsf{CALLDATALOAD}}
\newcommand{\MLOAD}{\textsf{MLOAD}}
\newcommand{\MSTORE}{\textsf{MSTORE}}
\newcommand{\SSTORE}{\textsf{SSTORE}}
\newcommand{\JUMPDEST}{\textsf{JUMPDEST}}
\newcommand{\INVALID}{\textsf{INVALID}}
\newcommand{\pluseq}{\mathrel{+}=}
\newcommand{\minuseq}{\mathrel{-}=}
\newcommand{\cond}[3]{{#1} \, ? \,  {#2}\, : \,  {#3}}
\newcommand{\update}[3]{{#1}[{#2} \rightarrow {#3}]}
\newcommand{\arraypos}[2]{{#1} \, [#2]}
\newcommand{\updategstate}[3]{{#1} \big \langle {#2} \rightarrow {#3} \big \rangle}
\newcommand{\extract}[3]{{#1}_{[#2, #3]}}
\newcommand{\inc}[3]{{#1}[{#2}\pluseq{#3}]}
\newcommand{\dec}[3]{{#1}[{#2}\minuseq{#3}]}
\renewcommand{\exconf}[2]{#1}
\newcommand{\addr}{\textit{addr}}
\newcommand{\code}{\textit{code}}
\newcommand{\emptymemory}{\lam{x}{0}}
\newcommand{\emptystack}{\epsilon}
\newcommand{\anacontract}{c^*}
\newcommand{\absdom}{\hat{D}}
\newcommand{\pkec}[2]{\textsf{Kec} \, (#1, #2)}
\newcommand{\pmstate}[6]{\textsf{MState}_{\mathsf{#1}} \,((#2, #3), #4, #5)}
\newcommand{\hornclause}{H}
\newcommand{\lpc}{\textit{pc}}
\newcommand{\stor}{\textit{stor}}
\newcommand{\siz}{\textit{size}}
\newcommand{\cl}{\textit{c$\ell$}}
\newcommand{\abs}[1]{\hat{#1}} 
\newcommand{\absop}[1]{~\widehat{#1}~} 
\newcommand{\derives}{\vdash}
\newcommand{\vars}{\textit{Vars}}
\newcommand{\callinstructions}{\textit{Inst}_{\textit{call}}}
\newcommand{\binop}{\textit{op}_\textit{bin}}
\newcommand{\compop}{\textit{op}_\textit{comp}}
\newcommand{\aconcat}[1]{{||}_{#1}}
\newcommand{\config}{c}
\newcommand{\absconfig}{\Delta}
\newcommand{\smalls}{S_\textit{s}}
\newcommand{\configs}{\mathcal{C}}
\newcommand{\absconfigs}{\mathcal{A}}
\newcommand{\predsig}{\mathcal{S}}
\newcommand{\abss}{\Lambda} 
\newcommand{\hclauses}[1]{\mathcal{H}(#1)}
\newcommand{\prednames}{\mathcal{N}}
\newcommand{\absdoms}{\mathcal{D}}
\newcommand{\proj}[2]{\pi_{#1}(#2)}
\newcommand{\ord}[1]{\leq_{#1}}
\newcommand{\dro}[1]{\geq_{#1}}
\newcommand{\tsize}[1]{|{#1}|}
\newcommand{\configabs}[1]{\alpha}
\newcommand{\absts}{\Lambda}
\newcommand{\pred}{p}
\newcommand{\pnmstate}[1]{\textsf{MState}_{#1}}
\newcommand{\pnexception}{\textsf{Exc}}
\newcommand{\pnhalt}{\textsf{Halt}}
\newcommand{\mem}{m}
\newcommand{\predmstate}[6]{\textsf{MState}_{#1}((#2, #3), #4, #5, #6)}
\newcommand{\predexception}[1]{\textsf{Exc}(#1)}
\newcommand{\predhalt}[2]{\textsf{Halt}(#1, #2)}
\newcommand{\cstar}{c^*}
\newcommand{\exstateabs}{\alpha_{\exstate}}
\newcommand{\stacktoarray}{\textsf{stackToArray}}
\newcommand{\towordmem}{\textsf{toWordMem}}
\newcommand{\instabs}[1]{\llparenthesis #1 \rrparenthesis}
\newcommand{\arrayinit}[1]{\fun{x}{#1}}
\newcommand{\queryset}{\Delta_{\textit{query}}}
\newcommand{\prop}[2]{\mathcal{R}(#1, #2)}
\newcommand{\bad}{R}
\newcommand{\ncollfreestep}[2]{\stackon[-3pt]{\leadsto^{#1}}{\scriptstyle #2\,\,}}
\newcommand{\accessword}[2]{\textit{getWord}(#1, #2)}
\newcommand{\sstepfun}[1]{F_{#1}}
\newcommand{\derivefun}[1]{F'_{#1}}
\newcommand{\lfp}[1]{\textit{lfp}[#1]}
\newcommand{\compose}{\circ}
\newcommand{\isregular}[1]{\textit{isRegular}(#1)}
\newcommand{\ishalt}[1]{\textit{isHalt}(#1)}
\newcommand{\hc}[4]{\forall{#1}.~{#2}, {#3} \Rightarrow {#4}}
\newcommand{\constraints}{\Phi}
\newcommand{\premises}{P}
\newcommand{\conclusion}{c}
\newcommand{\derivesone}{\derives^1}
\newcommand{\valuation}{V}
\newcommand{\satisfies}{\vDash}
\newcommand{\xvdash}[1]{%
  \vdash^{\mkern-7mu\scriptscriptstyle\rule[-.5ex]{0pt}{0pt}#1}%
}
\newcommand\mydots{\hbox to 1em{.\hss.\hss.}}
\newcommand{\horst}{\emph{HoRSt}}
\newcommand{\souffle}{\emph{Souffl\'e}}
\newcommand{\java}{\emph{Java\textsuperscript{\texttrademark}}}
\newcommand{\smtlib}{\texttt{smt-lib}}
\newcommand{\zz}{\emph{z3}}
\newcommand{\ethor}{\emph{eThor}}
\newcommand{\hornc}{Horn\xspace}
\definecolor{seablue}{HTML}{004C99}
\definecolor{darkorange}{HTML}{CC6600}
\definecolor{mediumblue}{HTML}{0050EF}
\definecolor{darkblue}{HTML}{001DBC}
\definecolor{pink}{HTML}{D80073}
\definecolor{darkpink}{HTML}{A50040}
\definecolor{lilac}{HTML}{AA00FF}
\definecolor{darklilac}{HTML}{7700CC}
\definecolor{brown}{HTML}{663300}
\newcommand{\configpic}{\tikz{\filldraw[color=seablue] (0,0) circle [radius=2pt];}}
\newcommand{\badconfigpic}{\tikz{\node[circle, fill=seablue, draw=red, text width=4pt, inner sep=0pt]{};}}
\newcommand{\absconfigpic}{\tikz{\node[regular polygon, regular polygon sides =3, fill, color=darkorange, text width=3pt, inner sep=0pt]{};}}
\newcommand{\badabsconfigpic}{\tikz{\node[regular polygon, regular polygon sides =3, fill, color=brown, text width=3pt, inner sep=0pt]{};}}
\newcommand{\absrulepic}[3]{\tikz{\node[regular polygon, regular polygon sides =4, fill=#2, draw= #3, minimum width=2pt, inner sep=0pt, scale=0.7, text=white, anchor=base]{#1};}}
\newcommand{\absruletwo}[1]{\absrulepic{#1}{mediumblue}{darkblue}}
\newcommand{\absruleone}[1]{\absrulepic{#1}{pink}{darkpink}}
\newcommand{\absrulethree}[1]{\absrulepic{#1}{lilac}{darklilac}}
\begin{document}

\title{\ethor{}: Practical and Provably Sound Static Analysis \\ of Ethereum Smart Contracts}

\author{Clara Schneidewind \and Ilya Grishchenko \and Markus Scherer \and Matteo Maffei}
\affiliation{TU Wien}
\email{{clara.schneidewind, markus.scherer,ilya.grishchenko,matteo.maffei}@tuwien.ac.at}

\begin{abstract}
  Ethereum has emerged as the most popular smart contract development platform, with hundreds of thousands of contracts stored on the blockchain and covering a variety of application scenarios, such as auctions, trading platforms, and so on. Given their financial nature, security vulnerabilities may lead to catastrophic consequences and, even worse, they can be hardly fixed as data stored on the blockchain, including the smart contract code itself, are immutable.  An automated security analysis of these contracts is thus of utmost interest, but at the same time technically challenging for a variety of reasons, such as the specific transaction-oriented programming mechanisms, which feature a subtle semantics, and the fact that the blockchain data which the contract under analysis interacts with, including the code of callers and callees, are not statically known.

  In this work, we present \ethor{}, the first sound and automated static analyzer  for EVM bytecode, which is based on an abstraction of the EVM bytecode semantics based on \hornc clauses. In particular, our static analysis supports reachability properties, which we show to be sufficient for capturing interesting security properties for smart contracts (e.g., single-entrancy) as well as contract-specific functional properties. Our analysis is proven sound against a complete semantics of EVM bytecode and an experimental large-scale evaluation on real-world contracts demonstrates that \ethor{} is practical and outperforms the state-of-the-art static analyzers: specifically, \ethor{} is  the only one to provide soundness guarantees,  terminates on 95\% of a representative set of real-world contracts, and achieves an $F$-measure (which combines sensitivity and specificity) of 89\%.
\end{abstract}




\maketitle

\section{Introduction}
\label{sec:intro}

Smart contracts introduced a radical paradigm shift in distributed computation, promising security in an adversarial setting thanks to the underlying consensus algorithm. Software developers can implement sophisticated distributed, transaction-based computations by leveraging the scripting language offered by the underlying blockchain technology. While many cryptocurrencies have an intentionally limited scripting language (e.g., Bitcoin~\cite{nakamoto2008bitcoin}), Ethereum was designed from the ground up with a quasi Turing-complete language\footnote{While the language itself is Turing complete, computations are associated with a bounded computational budget (called gas), which gets consumed by each instruction thereby enforcing termination.}. Ethereum smart contracts have thus found  a variety of appealing use cases, such as  auctions~\cite{hahn2017smart}, data management systems~\cite{adhikari2017secure}, financial contracts~\cite{biryukov2017findel}, elections~\cite{McCorrySH17},  trading platforms~\cite{notheisen2017trading,mathieu2017blocktix}, permission management \cite{azaria2016medrec} and verifiable cloud computing~\cite{DWAMM::17}, just to mention a few. 
  Given their financial nature, bugs and vulnerabilities in smart contracts may lead to catastrophic consequences. For instance, the infamous DAO vulnerability~\cite{thedao} recently led to a 60M\$ financial loss and similar vulnerabilities occur on a regular basis~\cite{paritya,parityb}. Furthermore, many smart contracts in the wild are  intentionally fraudulent, as highlighted in a recent survey~\cite{survey}. Even worse, due to the unmodifiable nature of blockchains, bugs or vulnerabilities in deployed smart contracts cannot be fixed.

 A rigorous security analysis of smart contracts is thus crucial for the trust of the society in blockchain technologies and their widespread deployment. Unfortunately, this task is quite challenging for various reasons. First, Ethereum smart contracts are developed in an ad-hoc language, called Solidity, which resembles JavaScript but features non-standard semantic behaviours and transaction-oriented mechanisms, which complicate smart contract development and verification. Second, smart contracts are uploaded on the blockchain in the form of Ethereum Virtual Machine (EVM) bytecode, a stack-based low-level code featuring  very little static information, which makes it extremely difficult to analyze. Finally, most of the data available at runtime on the blockchain, including the contracts which the contract under analysis may interact with, may not be known statically, which requires ad-hoc abstraction techniques. As a result, while effective bug finding tools for smart contracts have been recently presented,  \emph{there exists at present no automated security analysis for EVM bytecode that provides formal security guarantees} (i.e., absence of false negatives, as proven against a formal semantics of EVM bytecode), as further detailed below.

\subsection{State-of-the-art in Security Analysis of Smart Contracts}
\label{related}

Existing approaches to smart contract analysis can be mainly classified as interactive frameworks for semantic-based machine-checked proofs~\cite{hirai2017defining,amani2018towards,hildenbrandt2017kevm,bhargavan2016formal,GMS::POST18,yang2019fether} and automated, heuristic-driven bug finding tools~\cite{luu2016making,zhou2018security,nikolic2018finding,Grech:2018:MSO,Krupp:2018}.

Some recent works try to fill the middle ground between these two approaches, aiming at the best of the two worlds, i.e.,  an automated, yet sound static analysis of Ethereum smart contracts that can prove generic security properties~\cite{kalra2018zeus,securify,lu2019neucheck,GMS::CAV18}. We conducted a thorough investigation, finding out that all of them  fail to provide  the intended soundness guarantees, which showcases the difficulty of this task. In the following, we further expand on this point, highlighting the particular challenges that occur in the process of designing a sound static analysis tool for Ethereum smart contracts.

\paragraph{Semantic foundations}
A first fundamental limitation of most existing static analysis tools is that they do not establish a formal connection with a semantic model of smart contract execution.
ZEUS~\cite{kalra2018zeus} leverages existing symbolic model checking tools for LLVM bitcode in order to analyze contracts written in Solidity. To this end, ZEUS first translates Solidity code into an abstract intermediate language and in a next step into LLVM bitcode.
However, upto now, there is no complete formal semantics of the Solidity language, hence making it impossible to prove the performed translation to be semantics-preserving and consequently to derive formal guarantees from the results of the LLVM model checking. This can easily lead to flaws, as confirmed by the theoretical investigation conducted in~\cite{GMS::CAV18} as well as by empirical evidence provided in~\cite{torres2018osiris}, which contradict the original soundness claim~\cite{kalra2018zeus}.
\cite{GMS::CAV18} reviews a theoretical approach to a static analysis technique based on \hornc clauses which is claimed to be provably sound, still we could find sources of unsoundness in the presented abstraction as detailed in \autoref{sec:ethertrust}.
Securify~\cite{securify} is an abstract interpreter working at the level of EVM bytecode that also aims to offer soundness guarantees: unfortunately, it does not come with any formal semantics or proof of soundness, which leads to both false positives and false negatives, as discussed below.

\paragraph{Formal security properties}
As hinted in the previous paragraph, for providing reliable guarantees, not only the analysis but also the security properties have to be formalized in the underlying semantic model. All reportedly sound tools do not accomplish that. While for ZEUS the intended properties are just informally described, Securify comes with an ad-hoc formalism for characterizing security properties of smart contracts. This, however, is not related to a formal EVM bytecode semantics, nor are the security patterns that are used in the analysis to determine  the fulfillment  and  violation of these properties\footnote{\cite{securify} introduces compliance and violation patterns for security properties where a contract matching a compliance pattern is meant to satisfy the property and a contract matching a violation pattern  to violate it.} provably related to such formal characterization.
This omission results in the lack of soundness and completeness guarantees, as we illustrate in \autoref{sec:securify} by providing counterexamples for the majority of the proposed patterns.
Similarly, the tool NeuCheck~\cite{lu2019neucheck} performs a purely syntactic analysis on Solidity source code and defines security properties by syntactic patterns on the smart contract's syntax tree. These patterns cannot be related to any semantic property due to a lacking formalism, and can be shown to be neither necessary nor sufficient for the corresponding security properties, see \autoref{sec:neucheck}.

\paragraph{Correct control flow reconstruction}
Analyzing EVM bytecode is particularly challenging as the underlying execution model allows for dynamic jump destinations. Most works~\cite{securify,luu2016making,albert2018ethir,trailofbits-manticore} reconstruct the control flow of a given smart contract before the analysis. However, recovering jump destinations is interconnected with the contract's execution, and hence, performing such a sound reconstruction is not trivial. For instance, ~\cite{securify} uses a custom algorithm -- whose correctness is never discussed -- for doing so. Indeed we found an example showing that this algorithm yields unsound results(see \autoref{sec:appendix-cfg}), undermining the soundness of the analysis.

\paragraph{Practicality}
A useful, automated analysis tool needs to be performant, not only in terms of overall execution time, but also in terms of precision.
This is particularly challenging as the soundness goal prevents the use of (potentially cheap and fast) heuristics to guide the analysis, but instead requires the chosen abstractions to provably over-approximate the set of all possible executions.
Appropriate abstractions hence need to be sound, but still efficiently encodable and precise enough to account for a contract's safety.

\paragraph{Benchmarking, testing, and community validation}
The previous problems which affect the design of reliable analysis tools are aggravated by the fact that there is no reliable and comparable benchmarking or testing infrastructure for Ethereum smart contract analysis tools.
One reason for that is the lack of clear definitions for the generic security properties targeted by the analysis tools in the first place. Another explanation is the difficulty of manually investigating the bytecode of real-world contracts for assessing their ground truth.
Even though the existing works evaluate their performance on real world smart contracts (fetched from the blockchain), the used ground truth is spurious: While~\cite{securify} reports quality metrics only on a dataset of 100 contracts which are not made available, ~\cite{kalra2018zeus} presents results on a dataset encompassing over 1500 contracts from the blockchain. When manually investigating this dataset, however, we found several issues ranging from non-existing contracts to deviating ground truths. These problems are detailed  in~\autoref{sec:eval}.

Inspired by the issues that we see in the state of the art, we introduce a principled approach to the design and implementation of a sound, yet performant, static analysis tool for EVM bytecode.

\subsection{Our Contributions}
The contributions of this work can be summarized as follows:

\begin{itemize}
	\item We design the first provably sound static analyzer for EVM bytecode, which builds on top of a reachability analysis realized by \hornc clause resolution. We show that a reachability analysis suffices to verify various interesting security properties for smart contracts as well as contract-specific functional properties via an encoding into Hoare-style reasoning. The design of such static analysis is technically challenging, since it requires careful abstractions of various EVM components (e.g., the stack-based execution model, the gas used to bound the smart contract execution, and the memory model) as well as a dedicated over-approximation of the data stored on the blockchain, which is not statically known and yet the contract under analysis can interact with (e.g., the code of other contracts which may act both as callers and callees);
	\item  We prove the soundness of our static analysis technique against the semantics of EVM bytecode formalized  by Grishchenko et al.~\cite{GMS::POST18};
	\item In order to facilitate future refinements of our analysis, as well as the design of similar static analyses for other languages, we design and implement \horst{}, a framework for the specification and implementation of static analyses based on \hornc clause resolution. Specifically, \horst{} takes as input a (pen-and-paper like) specification of the \hornc clauses defining the static analysis and produces a \texttt{smt-lib}~\cite{smtlib} encoding suitable for \zz{}~\cite{pdr}, which includes various optimizations such as \hornc clause and constant folding;
	\item We use \horst{} to implement the static analyzer \ethor{}. To gain confidence in the resulting implementation, we encode the relevant semantic tests (604 in total) of the official EVM suite as reachability properties, against which we successfully test the soundness and precision of \ethor{};
	\item  We conduct a large-scale experimental evaluation on real-world contract data comparing \ethor{} to the state-of-the-art analyzer ZEUS~\cite{kalra2018zeus} which claims to provide soundness guarantees.
	While ZEUS shows a remarkable specificity (i.e., completeness) of $99.8\%$, \ethor{} clearly outperforms ZEUS in terms of recall (i.e., soundness) -- $100\%$ vs. $11.4\%$ --  which empirically refutes ZEUS' soundness claim. With a specificity of  $80\%$, \ethor{}  results in an overall performance of $88.9\%$  (according to the F-measure) as compared to ZEUS' F-measure of $20.4\%$.
\end{itemize}

The remainder of this paper is organized as follows.~\autoref{sec:ethereum} reviews Ethereum and the semantics of EVM bytecode.~\autoref{sec:stat} introduces our static reachability analysis, specifies its soundness guarantee and discusses relevant smart contract properties in scope of the analysis.~\autoref{sec:horst} introduces the specification language \horst{}. \autoref{sec:eval} describes \ethor{} and presents our experimental evaluation.~\autoref{sec:conclusion} concludes by discussing interesting future research directions.
The appendix provides proofs and additional material. The source code of \ethor{} and \horst{} with the dataset used in  the experimental evaluation are available online~\cite{extended}.

\section{Ethereum}
\label{sec:ethereum}
We first introduce the required background on Ethereum~(\autoref{subsec:eth-background}) and then overview an existing semantics of EVM bytecode (\autoref{subsec:smallstep}), which this work builds on.

\subsection{Background}
\label{subsec:eth-background}
The Ethereum platform can be seen as a transaction-based state machine where transactions alter the global state of the system, which consists of accounts. There are two types of accounts: External accounts, which are owned by a user of the system, and contract accounts, which can be seen as  a distributed program. All accounts hold a balance in  the virtual currency \emph{Ether}. Additionally, contract accounts include persistent storage and the contract's code.
Transactions can either create new contract accounts or call existing accounts. Calls to external accounts can only transfer Ether to this account, but calls to contract accounts additionally execute the account's contract code. The contract execution might influence the storage of the account and might as well perform new transactions -- in this case, we speak of \emph{internal transactions}.
The effects of contract executions are determined by the \emph{Ethereum Virtual Machine} (EVM). This virtual machine characterizes the \emph{quasi Turing complete} execution model of Ethereum smart contracts where the otherwise Turing complete execution is restricted by an upfront defined resource \emph{gas} that effectively limits the number of execution steps.
A transaction's originator can specify an upper bound on the gas that she is willing to pay for the contract execution and also sets the gas price (the amount of Ether to pay for a unit of gas). The originator then prepays the specified gas limit and gets refunded according to the remaining gas in case of successful contract execution.

\subsubsection{EVM bytecode}
Contracts are published on the blockchain in form of \emph{EVM bytecode}--  an Assembler like bytecode language. The EVM is a stack-based machine and specifies the semantics of bytecode instructions. Consequently, EVM bytecode mainly consists of standard instructions for stack operations, arithmetics, jumps and local memory access. The instruction set is complemented with blockchain-specific instructions such as an opcode for the SHA3 hash and several opcodes for accessing information on the current (internal) transaction.
In addition, there are opcodes for accessing and modifying the storage of the executing account and distinct opcodes for initiating internal transactions.

Each instruction is associated with (a potentially environment-dependent) gas cost. If the up-front defined gas-limit is exceeded during execution, the transaction execution halts exceptionally and the effects of the current transaction on the global state are reverted.
For nested transactions, an exception only reverts the effects of the executing transaction, but not those of the calling transactions.

\subsubsection{Solidity}
In practice, Ethereum smart contracts are shipped and executed in EVM bytecode format but are, for a large part, written in the high-level language Solidity, which is developed by the Ethereum Foundation~\cite{solidity}.
The syntax of Solidity resembles JavaScript, enriched with additional primitives accounting for the distributed setting of Ethereum.
Solidity exhibits specific features that give rise to smart contract vulnerabilities, as will be in discussed in~\autoref{subsec:props}.
We will not give a full account of Solidity's language features here, but add explanations throughout the paper when needed.

\subsection{EVM Semantics}
\label{subsec:smallstep}
Our static analysis  targets a recently introduced small-step semantics for EVM bytecode~\cite{GMS::POST18}, which we shortly review below.\footnote{More recent changes to the EVM semantics such as the introduction of $\STATICCALL$, $\CREATE$, and $\CREATE2$, are not explicitly mentioned in this paper, but covered by our static analysis as specified in~\cite{extended}.}

The semantics of EVM bytecode is given by a small-step relation $\sstep{\transenv}{\exconf{\callstack}{\transeffects}}{\exconf{\callstack'}{\transeffects'}}$ that encompasses the possible steps that a callstack $\callstack$, representing the overall state of a contract execution,  can make under the transaction environment $\transenv$.
The transaction environment $\transenv$ summarizes static information about the transaction execution such as the information on the block that the transaction is part of and transaction-specific information such as gas price or limit.
We write $\ssteps{\transenv}{\exconf{\callstack}{\transeffects}}{\exconf{\callstack'}{\transeffects'}}$ for the reflexive transitive closure of the small-step relation and call the pair $(\transenv, \callstack)$ a \emph{configuration}.

\subsubsection{Configurations}
The most important formal components of EVM configurations are summarized in \autoref{fig:grammar'}.

\begin{figure}
{\small
\begin{mathpar}
\begin{array}{rlll}
\text{Callstacks} & \callstack & \define & \cons{\excstate}{\callstackplain} ~|~  \cons{\haltstatefull{\gstate}{\lgas}{d}{\transeffects}}{\callstackplain} ~|~ \callstackplain  \\
\text{Plain callstacks} & \callstackplain & \define & \cons{\regstatefull{\mstate}{\exenv}{\gstate}{\transeffects}}{\callstackplain} ~|~ \epsilon \\
\text{Machine states} & \mstate & \define & \smstate{\lgas}{\lpc}{m}{i}{s} \\
\text{Account states} & \accountv & \define & \accountstate{n}{b}{\textit{code}}{\textit{stor}}   \\
\end{array}
\end{mathpar}}
\caption{Grammar for calls stacks}
\label{fig:grammar'}
\end{figure}

\paragraph{Global State}
Ethereum's global state $\gstate$ is formally captured as a (partial) mapping from account addresses to account states.
An account state consists of a nonce $n$ that is incremented with every other account that the account creates, a balance $b$, a persistent storage $\stor$, and the account's $\code$. External accounts have no code and hence cannot access storage.

\paragraph{Callstacks}
The overall state of an external transaction is captured by a stack of execution states that we will refer to as \emph{callstack}. The individual execution states reflect the states of the pending internal transactions.
More formally, the elements of a callstack are either regular execution states of the form $\regstatefull{\mstate}{\exenv}{\gstate}{\transeffects}$ or terminal execution states $\haltstate{\gstate}{\lgas}{d}{\transeffects}$ and $\excstate$ which can only occur as stack top elements.
For terminated executions we differentiate between exceptional halting ($\excstate$), which will revert all effects of the transaction, and regular halting $\haltstate{\gstate}{\lgas}{d}{\transeffects}$, in which case the effects of the transaction are captured by the global state $\gstate$ at the point of halting, the $\lgas$ remaining from the execution, the return data $d$, and the transaction effects $\transeffects$ (effects that will only be applied after completing the external execution).

The state of a non-terminated internal transaction is described by a regular execution state of the form $\regstatefull{\mstate}{\exenv}{\gstate}{\transeffects}$. During execution, this state tracks the current global state $\gstate$ of the system, the execution environment $\exenv$ to the internal transaction (which specifies parameters such as the input to the transaction and the code to be executed) as well as the local state $\mstate$ of the stack machine, and the transaction effects $\transeffects$ that will be applied after transaction execution.

The local machine state $\mstate$ reflects the state of the stack machine that handles local computations.
It is represented by a tuple $\smstate{\lgas}{\textit{pc}}{\textit{m}}{\textit{i}}{\textit{s}}$ holding the amount of $\lgas$ available for execution, the program counter $\lpc$, the local memory $\textit{m}$, the number of active words in memory $\textit{i}$, and the machine stack $\textit{s}$.
As the stack machine models local computations, the execution of every new (internal) transaction starts again in a fresh machine state at program counter zero with an empty stack and zero-initialized memory. Only the gas value is initialized as specified by the initiator of the transaction. 

\subsubsection{Small-step Rules}
We illustrate the mechanics of the EVM bytecode semantics by an example and refer to ~\cite{GMS::POST18} for a full definition.

Local instructions, e.g., \ADD{}, only operate on the machine state:
\begin{mathpar}
\small\infer{
\arraypos{\access{\exenv}{\code}}{\access{\mstate}{\pc}}= \ADD \\
\access{\mstate}{\stack} = \cons{a}{\cons{b}{s}} \\
\access{\mstate}{\gas} \geq 3 \\
\mstate'= \dec{\inc{\update{\mstate}{\stack}{\cons{(a+b)}{s}}}{\pc}{1}}{\gas}{3}}
{\sstep{\transenv}{\cons{\regstatefull{\mstate}{\exenv}{\gstate}{\transeffects}}{\callstack}}{\cons{\regstatefull{\mstate'}{\exenv}{\gstate}{\transeffects}}{\callstack}}}
\end{mathpar}
Given a stack that contains at least two values and given a sufficient amount of gas (here $3$ units), an $\ADD$ instruction takes two values from the stack and pushes their sum. These effects, as well as the advancement in the program counter and the substraction of the gas cost, are reflected in the updated machine state $\mstate'$.

A more evolved semantics is exhibited by the class of transaction initiating instructions  ($\CALL$, $\CALLCODE$ and $\DELEGATECALL$, $\CREATE$). Intuitively, $\CALL$ executes the callee's code in its own environment, $\CALLCODE$ executes the callee's code in the caller's environment, which might be useful to call libraries  implemented in a separate  contract,  and $\DELEGATECALL$ takes a step further by preserving not only the caller's environment but even part of the environment of the previous call (e.g., the sender information), which effectively treats the callee's code as an internal function of the caller's code. Finally, the $\CREATE$ instruction initiates an internal transaction that creates a new account.

Instructions from this set are particularly difficult to analyze, since their arguments are dynamically evaluated and the execution environment has to be tracked and properly modified across different calls. Furthermore, it can well be that the code of a called function is not accessible at analysis time, e.g., because 
the contract allows for money transfers to a dynamic set of contracts (like in the DAO contract as will be discussed in the next section).

\subsection{Security Properties of Smart Contracts}
\label{subsec:props}

Ethereum smart contracts have undergone several severe attacks in the past that where enabled by major bugs in the contract code, most prominently the DAO hack~\cite{thedao}.
Interestingly, this bug can be traced back to the violation of a generic security property of the attacked contract, called single-entrancy. We will shortly present the class of reentrancy attacks and the corresponding security property.

\subsubsection{Preliminary Notions}
In order to present security properties in a concise fashion, the previously presented small-step semantics is augmented with 
an annotation to callstack elements that reflects the currently executed contract.
We say that an execution state $\exstate$ is \emph{strongly consistent} with annotation $c$ if $\exstate$ executes $c$ (according to the execution environment) and $c$ is present in the global state of $\exstate$.
Further, for arguing about EVM bytecode executions, we are only interested in those initial configurations that might result from a valid external transaction in a valid block. We call these configurations \emph{reachable} and refer to~\cite{GMS::POST18} for a detailed definition.

\subsubsection{Single-entrancy}
For motivating the definition of single-entrancy, we introduce a class of bugs in Ethereum smart contracts called \emph{reentrancy bugs}~\cite{luu2016making,survey}.
Reentrancy attacks exploit that a contract which hands over control to another contract by calling it can be called back (reentered) before completing the original internal transaction. At the point of reentering the contract can then be in an inconsistent state which allows for unintended behavior. In the DAO hack, the attacker stole all funds of the contract  by  reentering the contract and   sending money to itself.
We exemplify this kind of attack by the  \lstinline|Bank| contract in \autoref{fig:reentrancy}: this  has a basic reentrancy protection in place which however can easily be circumvented.

The \lstinline|Bank| contract implements a simple banking functionality, keeping the balance of all users (identified by their addresses) in the mapping \lstinline|bal.| We only discuss the contract function \lstinline|drain| which allows a user to transfer all its money from its bank account to the provided beneficiary address \lstinline|a|.
For protecting against reentrancy, the \lstinline|drain| function implements a simple locking functionality: it is only entered in case the lock is not taken . Otherwise it takes the lock (using function \lstinline|take|), transfers the remaining balance of the function callee (denoted by \lstinline|msg.sender|) to the beneficiary address \lstinline|a|, updates the user's balance, and releases the lock again.
One needs to note that the \lstinline|call| construct (being translated to a $\CALL$ instruction in EVM bytecode) does not only trigger the value transfer, but also invokes the execution of the callee's so-called \emph{fallback function} (written as a function without name or argument in Solidity as depicted in the \lstinline|Mallory| contract in \autoref{fig:reentrancy}). Hence, the use of a \lstinline|call| can cause the the executed contract to be reentered during execution, potentially undermining the contract integrity. The locking mechanism should prevent this problem by causing an exception in case the contract is reentered (indicated by the lock being taken).
However, since the locking functionality is publicly accessible, a reentrancy attack (as depicted in \autoref{fig:reentrancy}) is still possible: An attacker calling the \lstinline|drain| function (via \lstinline|Mallory|) with \lstinline|Mallory|'s address as argument ({\color{darkgreen}\circled{1}}) transfers all of \lstinline|Mallory|'s money back to her and executes her fallback function ({\color{darkgreen}\circled{2}}). \lstinline|Mallory| then first calls the public \lstinline|release| function to release the lock ({\color{red}\circled{3}}) and next calls the \lstinline|drain| function of \lstinline|Bank| again ({\color{red}\circled{4}}). Since the attacker's balance has not been updated yet at this point, another transfer of the prior amount to \lstinline|Mallory| can be performed ({\color{red}\circled{5}}).
These steps can be continued until running out of gas or reaching the callstack limit. In both cases the last value transfer is rolled back, but the effects of all former internal transactions persist, leaving the contract \lstinline|Bank| drained-out.

\begin{figure}
\includegraphics[width=\columnwidth]{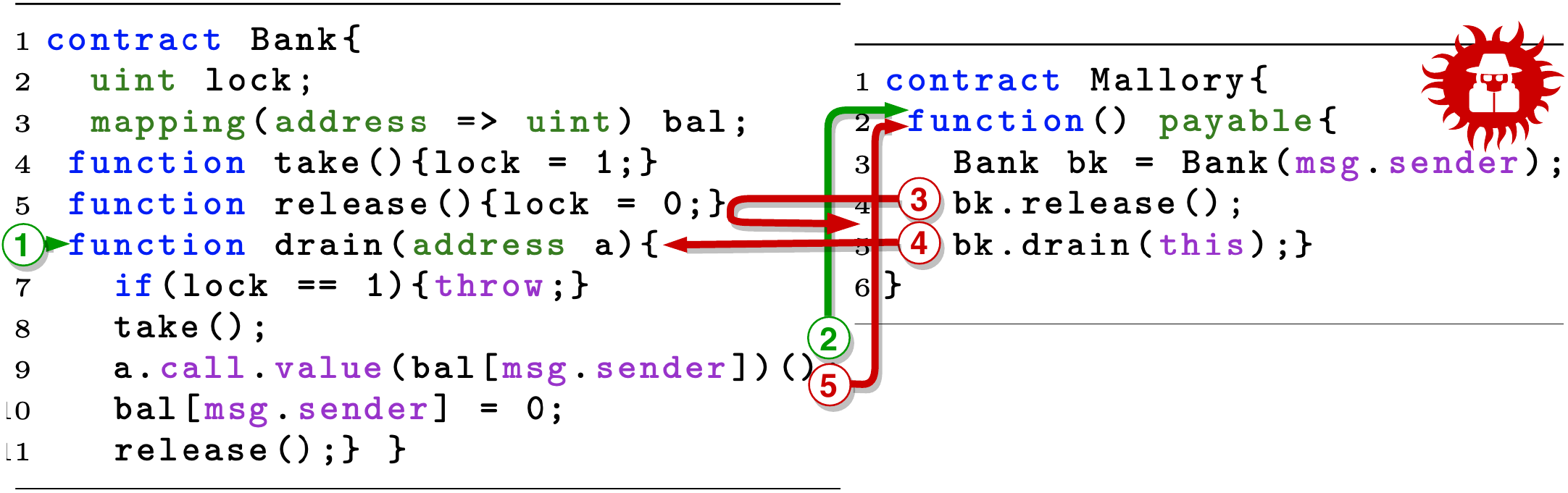}
\caption{Reentrancy Attack.}
\label{fig:reentrancy}
\end{figure}
The security property ruling out these attacks is called \emph{single-entrancy} and is formalized below. Intuitively, a contract is single-entrant if it cannot perform any more calls after reentering.

\begin{definition}[Single-entrancy~\cite{GMS::POST18}]
\label{def:single-entrancy}
	A contract $c$ is single-entrant if for all reachable configurations $(\transenv,\cons{\annotate{\exstate}{c}}{\callstack})$, for all $s', s'',\callstack' $
\par
\nobreak
	{\small
	\noindent
	\begin{align*}
	&\ssteps{\transenv}{\cons{\annotate{\exstate}{c}}{\callstack}}{\concatstack{\cons{\annotate{\exstate}{c}'}{\callstack'}}{\cons{\annotate{\exstate}{c}}{\callstack}}} \\ 
&\implies
	\neg \exists \exstate'', c'. \, \ssteps{\transenv}{\concatstack{\cons{\annotate{\exstate}{c}'}{\callstack'}}{\cons{\annotate{\exstate}{c}}{\callstack}}}{\cons{\annotate{s}{c'}''}{\concatstack{\cons{\annotate{\exstate}{c}'}{\callstack'}}{\cons{\annotate{\exstate}{c}}{\callstack}}}} 
	\end{align*}
	}%
\end{definition}%
where $\mdoubleplus$ denotes concatenation of callstacks.
The property expresses that after reentering a contract $c$ (in state $\exstate'_c$)  while executing a call initiated by the very same contract, it is not possible anymore to perform another internal transaction (which would result in adding another element $\exstate''_{c'}$ to the call stack). Note that the call stack records the sequence of calling states, hence the suffix $\cons{\annotate{\exstate}{c}}{\callstack}$ indicates a pending call initiated by the execution $\exstate$ of contract $c$.

Single-entrancy is particularly interesting in that it constitutes a generic robustness property of smart contracts. In contrast, other prominent vulnerabilities~\cite{paritya,parityb} are caused by functional correctness issues that are particular to a specific contract. For spotting such issues, contract-specific correctness properties need to be defined and verified. We discuss the formalization of such properties in~\autoref{subsec:reachability}.

\section{Static Analysis of EVM Bytecode}
\label{sec:stat}
Starting from the small-step semantics presented in \autoref{subsec:smallstep}, we design a sound reachability analysis that supports (among others) the validation of the single-entrancy property.
We follow the verification chain depicted in \autoref{fig:formal-verification-chain}:
For showing the executions of a contract to satisfy some property $\Phi$, we
formulate a \emph{\hornc-clause based abstraction} that abstracts the contract execution behavior and argue about an \emph{abstracted property} over \emph{abstract executions} instead. This reasoning is sound given that all concrete small-step executions are modeled by some abstract execution and given that the abstracted property over-approximates $\Phi$.

A \hornc-clause based abstraction for a small-step semantics $\rightarrow$ is characterized by an abstraction function $\configabs{\predsig}$ that translates concrete configurations (here \configpic) into \emph{abstract configurations} (here \absconfigpic). Abstract configurations are sets of predicate applications where predicates (formally characterized by their signature $\predsig$) range over the values from abstract domains. These abstract arguments are equipped with an order $\ord{}$ that can be canonically lifted to predicates and further to abstract configurations, hence establishing a notion of precision on the latter. Intuitively, $\configabs{\predsig}$ translates a concrete configuration into its most precise abstraction.
The \emph{abstract semantics} is specified by a set of Constrained \hornc clauses  $\abss$ over the predicates from $\predsig$ and describes how abstract configurations evolve during abstract execution. A Constrained \hornc clause is a logical implication that can be interpreted as an inference rule for a predicate, consequently an abstract execution consists of logical derivations from an abstract configuration using $\abss$.
A \hornc-clause based abstraction constitutes a sound approximation of small-step semantics $\rightarrow$ if every concrete (multi-step) execution $\configpic \rightarrow^* \configpic'$ can be simulated by an abstract execution: More precisely, from the abstract configuration $\alpha(\configpic)$ one can logically derive using $\abss$ an abstract configuration $\absconfigpic$ that constitutes an over-approximation of $\configpic'$ (so is at least as abstract as $\configabs{}(\configpic')$). A formal presentation of the soundness statement is given in~\autoref{subsec:soundness} while a characterization in abstract interpretation terminology is deferred to \autoref{sec:appendix-key}.
A sound abstraction allows for the provable analysis of \emph{reachability properties}: Such properties can be expressed as sets of problematic configurations (here \badconfigpic). Correspondingly, a sound abstraction for such a property is a set of bad abstract configurations (here \badabsconfigpic) which contains all possible over-approximations of the bad concrete states. The soundness of the abstract semantics then guarantees that if no bad abstract configuration from this set can be entered, also no bad configuration can be reached in the concrete execution.

\begin{figure}
\includegraphics[width=\columnwidth]{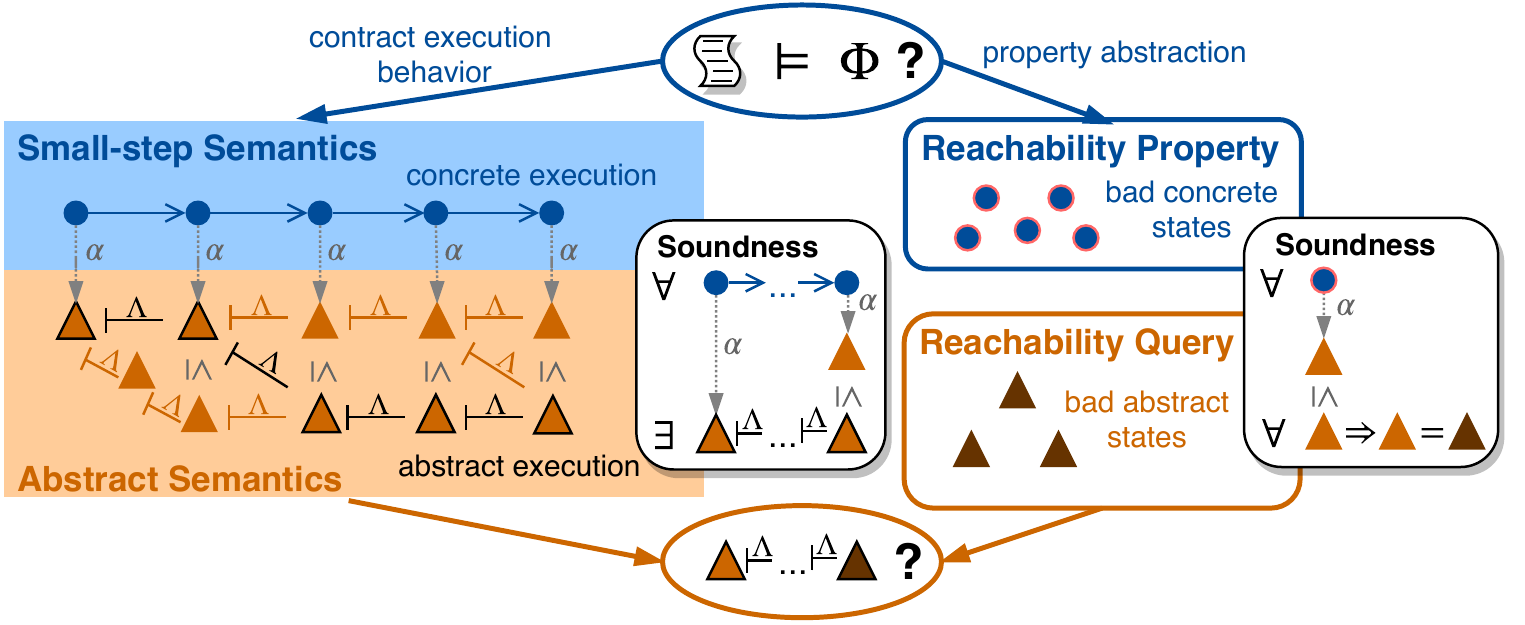}
\caption{Formal verification chain of \ethor{}. \normalfont $\absconfig \xvdash{\abss} \absconfig'$ denotes that the abstract configuration $\absconfig'$ can be logically derived from $\absconfig'$ (within one step) using the \hornc clauses in $\abss$.}
\label{fig:formal-verification-chain}
\end{figure}

\subsection{Main Abstractions}
Our analysis abstracts from several details of the original small-step semantics. In the following we overview the main abstractions:

\myparagraph{Blockchain environment}
The analysis describes the invocation of a contract (in the following denoted as $\cstar$) in an arbitrary blockchain environment, hence is not modeling the execution environment as well as large fractions of the global state.
Indeed, most of this information is not statically known as the state of the blockchain at contract execution cannot be reliably predicted.
As a consequence, the analysis has to deal with a high number of unknown environment inputs in the abstract semantics. Most prominently, the behavior of other contracts needs to be appropriately over-approximated, which turns out to be particularly challenging since such contracts can interact with $\cstar$ in multitudinous ways.

\myparagraph{Gas modelling}
The contract gas consumption is not modeled. The gas resource, which is meant to bound the contract execution, is set by the transaction initiator and hence not necessarily known at analysis time. For this reason, our analysis takes into account that a contract execution might exceptionally halt at any point due to an out-of-gas exception, which does not affect the precision of the analysis for security properties that consider arbitrary contract invocations (and hence arbitrary gas limits).

\myparagraph{Memory model}
In the EVM the local memory is byte-indexed, while the machine stack holds words (encompassing 32 bytes). Consequently loading a machine word from memory requires to assemble the byte values from 32 consecutive memory cells.
However, as already described in~\cite{park2018formal}, in practice reasoning about this conversion between words and bytes is hard. Therefore, we model memory in our abstraction as a word array: this enables very cheap accesses in case that memory is accessed at the start of a new memory word, and otherwise just requires the combination of two memory words.

\myparagraph{Callstack}
The callstack is captured by a two-level abstraction distinguishing only between  the original execution of $\cstar$ (call level $0$) and reentrancies of $\cstar$ ultimately originating from the original execution (call level $1$). This abstraction reflects that given the unknown blockchain environment, the state of the callstack when reentering is obscure: it is unknown who initiated the reentering call and which other internal transactions have been executed before.

\subsection{Analysis Definition}
\label{subsec:analysis-definition}
In the following we formally specify our analysis by defining the underlying \hornc-clause based abstraction.
\begin{figure}
\centering
\small
\begin{mathpar}
\begin{array}{rclr}
  \predsig_{\cstar} &\ni \pred & \define \\
  & | & \pnmstate{\pc} :
 (\NN \times (\NN \to \absdom)) \times (\NN \to \absdom) \times (\NN \to \absdom) \times \BB \to \BB  \\
  & | &\pnexception: \BB \to \BB\\
  & | &\pnhalt: (\NN \to \absdom) \times \BB \to \BB \\
&  \multicolumn{2}{l}{\pc \in \{0, \dots, \size{\access{\cstar}{\code}}-1 \}} \\
   \absdom & \define & \NN \cup \{ \top \} 
\end{array}
\begin{array}{llcl}
\end{array}
\end{mathpar}
  \caption{Definition of the predicate signature $\predsig_{\cstar}$ and the abstract domain $\absdom$.
  \normalfont
}
\label{fig:pred-signature}
\end{figure}
An abstract configuration is a set of predicate applications representing one or several concrete configurations. Since we are interested in analyzing executions of the contract $\cstar$, we consider EVM configurations representing such executions, which are call stacks having an execution state of contract $\cstar$ as a bottom element. 
We abstract such a call stack by the set of all its elements that describe executions of $\cstar$, reflecting the stack structure only by indicating whether a relevant execution state represents the original execution of $\cstar$ (call level $0$) or a reentering execution that hence appears higher on the call stack (call level $1$). The individual execution states are abstracted as predicate applications using the predicates listed in~\autoref{fig:pred-signature}: A predicate application of the form $\predmstate{\pc}{\siz}{\stackv}{\mem}{\stor}{\cl}$ describes a regular execution of $\cstar$ at program counter $\pc$ that has a local stack of size $\siz$ with elements as described by the mapping $\stackv$ (from stack positions to elements) and a local memory $\mem$, and the global storage of contract $\cstar$ at this point being $\stor$. 
Accordingly, the predicate application $\predexception{\cl}$ denotes that an execution of $\cstar$ exceptionally halted on call level $\cl$ and $\predhalt{\stor}{\cl}$ represents an execution that halted regularly on call level $\cl$ with the global storage of $\cstar$ being $\stor$.
Since during the abstract execution, a precise modeling of all the described state components is not always possible, the argument domains of the predicates encompass the abstract domain $\absdom$ that enriches $\NN$ with the join element $\top$ over-approximating any natural number. Formally, the described abstractions of EVM configurations are captured by the abstraction function $\configabs{}$ in \autoref{fig:absfun} that maps call stacks into the corresponding sets of predicates, yielding an abstract configuration.

Note that $\configabs{}$ is parametrized by $\cstar$ and that only the callstack elements modeling executions of $\cstar$ are translated.
\begin{figure*}[!t]
\small
\begin{align*}
\configabs{\cstar}_{\cstar}(\callstack) &:=
\begin{cases}
\emptyset & \callstack = \nil \\
\exstateabs(s, \access{\cstar}{\addr}, \cl) \cup \configabs{}_{\cstar}(\callstack') & \callstack = \cons{s_{\cstar}}{\callstack'} ~\land~ \cl = \cond{(\callstack' = \nil)}{0}{1} \\
\configabs{}_{\cstar}(\callstack')  & \callstack=\cons{s_{c}}{\callstack'} ~\land~ c \neq \cstar \\
\end{cases}
  &\\
\exstateabs(s, a, \cl) &:=
\begin{cases}
\{ \predmstate{\lpc}{\size{\stackv}}{\stacktoarray(\stackv)}{\towordmem(\memv)}{\access{\gstate(a)}{\stor}}{\cl} \} & s = ((\lgas, \lpc, \memv, \actwv, \stackv), \exenv, \gstate, \transeffects) \\
\{ \predexception{\cl} \} & s = \excstate \\
\{ \predhalt{\access{\gstate(a)}{\stor}}{\cl} \} & s = \haltstate{\gstate}{\lgas}{\datav}{\transeffects} \\
\emptyset & \textit{otherwise} 
\end{cases}
  &\\
\stacktoarray(s) &:=
\begin{cases}
\fun{x}{0} & s = \nil \\
\store{(\stacktoarray(s'))}{\size{s'}}{x} & s = \cons{x}{s'}
\end{cases}
  \hspace{0.8cm}
\towordmem(\mem) \define
\fun{x}{\select{\mem}{x \cdot 32} \aconcat{1} \select{\mem}{x \cdot 32 + 1} \cdots \aconcat{1} \select{\mem}{x \cdot 32 + 31}}
\end{align*}
\caption{Configuration abstraction function.
\normalfont
Here $v \aconcat{n} w$ denotes the value obtained by concatenating $v$'s and $w$'s byte representation, assuming that $w$ is represented by $n$ bytes.
}
\label{fig:absfun}
\end{figure*}
The transitions between abstract configurations (as yielded by $\configabs{}$) are described by an abstract semantics in the form of Constrained \hornc clauses. The abstract semantics is also specific to the contract $\cstar$: Depending on the EVM instructions that appear in $\cstar$, it contains \hornc clauses that over-approximate the execution steps enabled by the corresponding instructions. We hence formulate the abstract semantics as a function $\delta$ that maps a contract $\cstar$ to the union over the \hornc clauses that model the individual instructions in the contract:
{\small
\begin{align*}
\delta(\cstar) := \bigcup_{0 \leq i < \size{\access{\cstar}{\code}}}{\instabs{\access{\cstar}{\code}[i]}_i}
\end{align*}}
The core of the abstract semantics is defined by the \emph{instruction abstraction function} $\instabs{\cdot}_i$ that maps a contract instruction at position $i$ to a set of \hornc clauses over-approximating the semantics of the corresponding instruction. We will discuss the translation of the $\ADD$, $\MLOAD$, and $\CALL$ instruction depicted in~\autoref{fig:absrules} to illustrate the main features of the abstract semantics. 

\begin{figure*}[!t]
{\small
\begin{minipage}{\textwidth}
\begin{align*}
\instabs{\ADD}_\pc
\define \{ \,
& \predmstate{\pc}{\siz}{\stackv}{\mem}{\stor}{\cl}
~\land~ \siz > 1
~\land~ \abs{x} = \select{\stackv}{\siz-1}
~\land~ \abs{y} = \select{\stackv}{\siz-2} \\
&\implies
\predmstate{\pc +1}{\siz-1}{\update{\stackv}{\siz-2}{\abs{x} \absop{+} \abs{y}}}{\mem}{\stor}{\cl},  \tag{A1} \label{rule:add-1}\\
& \predmstate{\pc}{\siz}{\stackv}{\mem}{\stor}{\cl}
\implies \predexception{\cl} \, \}  \label{rule:add-2}  \tag{A2} \\
\instabs{\MLOAD}_\pc
\define \{ \,
&\predmstate{\pc}{\siz}{\stackv}{\mem}{\stor}{\cl}
~\land~ \siz > 1
~\land~ \abs{o} = \select{\stackv}{\siz-1}
~\land \abs{v} = \cond{(\abs{o} \in \NN)}{\accessword{\mem}{\abs{o}}}{\top} \nonumber \\
&\implies
\predmstate{\pc +1}{\siz}{\update{\stackv}{\siz-1}{\abs{v}}}{\mem}{\stor}{\cl}, \dots \} \label{rule:mload-1} \tag{M1}
\\
\instabs{\CALL}_\pc
\define \{ \,
&\predmstate{\pc}{\siz}{\stackv}{\mem}{\stor}{\cl}
~\land~ \siz > 6
\implies
\predmstate{\pc +1}{\siz-6}{\update{\stackv}{\siz-7}{\top}}{\arrayinit{\top}}{\arrayinit{\top}}{\cl}, \label{rule:call-1} \tag{C1} \\
&\predmstate{\pc}{\siz}{\stackv}{\mem}{\stor}{\cl}
~\land~ \siz > 6
\implies
\predmstate{0}{0}{\arrayinit{0}}{\arrayinit{0}}{\stor}{1} , \label{rule:call-2} \tag{C2} \\
&\predmstate{\pc}{\siz}{\stackv}{\mem}{\stor}{\cl}
~\land~ \siz > 6
~\land~ \predhalt{\stor_h}{1}
\implies
\predmstate{0}{0}{\arrayinit{0}}{\arrayinit{0}}{\stor_h}{1} , \mydots \} \label{rule:call-3} \tag{C3}
\end{align*}
\end{minipage}
}
\caption{Partial definition of $\instabs{\cdot}_\pc$: selection of abstract semantics rules.
\normalfont
For $\MLOAD$ and $\CALL$ the exception rule is omitted.
}
\label{fig:absrules}
\end{figure*}

\subsubsection{Addition}
\label{subsubsec:addition}
The abstract semantics of the addition instruction ($\ADD$) encompasses two \hornc clauses describing the successful execution and the failure case.
A prerequisite for a successful addition is the existence of a sufficient amount of arguments on the machine stack. In this case, the top stack values are extracted and the stack at the next program counter (modeled by the predicate $\pnmstate{\pc+1}$) is updated with their sum.
As the stack elements, however, range over the abstract value domain $\absdom$, the addition operation on $\NN$ needs to be lifted to $\absdom${}: Following the general intuition of $\top$ representing all potential values in $\NN$, the occurrence of $\top$ as one of the operands immediately declassifies the result to $\top$.
Similar liftings are performed for all unary, binary and comparison operators in the instruction set.
A precise definition is given in \autoref{sec:appendix-ana-def}.

In accordance to the choice of not modeling gas consumption, the \hornc clause modeling, the failure case -- which is common to the abstract semantics of all instructions -- does not have any preconditions, but the instruction reachability. This rule subsumes all other possible failure cases (such as stack over- and underflows).

\subsubsection{Memory Access}
Memory access on the level of EVM bytecode is enabled by the $\MLOAD$ instruction which takes the memory offset to be accessed as argument from the stack and pushes instead the word from the memory starting at this index. In our abstraction defined by the abstract semantic rule depicted in ~\autoref{fig:absrules} either immediately $\top$ is pushed to the stack (in case that the offset $\abs{o}$ is not a concrete value and hence the value to be loaded cannot be determined) or the word from the concrete memory offset is extracted. The extraction needs to account for the word-indexed memory abstraction that we chose and is formally defined by the function $\accessword{\cdot}{\cdot}$ depicted in~\autoref{fig:accessword}.
In case that the offset is a word address (divisible by $32$), the corresponding value can be accessed from the word memory $\mem$ by converting the byte address to the word address ({\footnotesize $\frac{p}{32}$}). Otherwise, the word at the next lower word address ({\footnotesize $\left \lfloor \frac{p}{32}\right \rfloor$}) and the word at the next higher byte address ({\footnotesize$\left \lceil \frac{p}{32}\right \rceil$}) are accessed to combine their relevant parts to a full word.

\begin{figure*}[!t]
{\small
\begin{align*}
\accessword{\mem}{p}
&\define
\begin{cases}
\select{\mem}{\frac{p}{32}} & p \bmod{32} = 0 \\
	(\extract{\select{\mem}{\left \lfloor \frac{p}{32}\right \rfloor}}
	{p \bmod 32}
	{31})
	\aconcat{p}
 (\extract{\select{\mem}{\left \lceil \frac{p}{32}\right \rceil}}
	{0}
	{(p \bmod 32) -1})
& \text{otherwise}
\end{cases}
\end{align*}}
\caption{Function extracting the word at byte offset $p$ from word-indexed memory $\mem$.
\normalfont
Here $\extract{v}{l}{r}$ denotes the value represented by $v$'s $l$th byte till $r$th byte in big endian byte representation. $v \aconcat{n} w$ is defined as in \autoref{fig:absfun}. We assume both operations to be lifted to $\absdom$.}
\label{fig:accessword}
\end{figure*}

\subsubsection{Contract Calls}
The abstraction for $\CALL$ is the most interesting. This instruction takes seven arguments from the stack that specify parameters to the call such as the target of the call or the value to be transferred along with the memory addresses specifying the location of the input and the return data.
When returning from a successful contract call, the value $1$ is written to the stack and the return value is written to the specified memory fragment. The persistent storage after a successful call contains all changes that were performed during the execution of the called contract. In the case that the contract call terminated exceptionally instead, the storage is rolled back to the point of calling and the value $0$ is written to the stack to indicate failure.

Since a contract $\CALL$ initiates the execution of another (unknown) contract, all its effects on the executions of $\cstar$ need to be modeled. More precisely, these effects are two-fold: the resuming execution of $\cstar$ on the current call level needs to be approximated, as well as the reentering executions of $\cstar$ (on a higher call level). For obtaining an analysis that is precise enough to detect real-world contracts with reentrancy protection as secure, it is crucial to model $\cstar$'s persistent  storage as accurately as possible in reentering executions. This makes it necessary to carefully study how the storage at the point of reentering relates to the one in the previous executions of $\cstar$, taking into account that (in the absence of $\DELEGATECALL$ and $\CALLCODE$ instructions in $\cstar$) only $\cstar$ can manipulate its own storage.
\autoref{fig:call-abstraction} overviews the storage propagation in the case of a contract call: To this end it shows the sequence diagram of a concrete execution of $\cstar$ that calls a contract $c'$ which again triggers several reentrancies of $\cstar$. In this course three ways of storage propagation between executions of $\cstar$ are exhibited:
1) The storage is \emph{forward propagated} from a calling execution to a reentering execution of $\cstar$ (\absruletwo{A}, \absruletwo{C})
2) The storage is \emph{cross propagated} from a finished reentering execution to another reentering execution of $\cstar$ (\absrulethree{B})
3) The storage is \emph{back propagated}  from a finished reentering execution to a calling execution of $\cstar$ (\absruleone{D}, \absruleone{E})
These three kinds of propagation are reflected in the three abstract rules for the call instruction given in \autoref{fig:absrules} and correspondingly visualized in \autoref{fig:call-abstraction}.

Rule (C1) describes how the execution of $\cstar$ (original and reentering alike) resumes after returning from the call, and hence approximates storage back propagation:
For the sake of simplicity, storage gets over-approximated in this case by $\fun{x}{\top}$. The same applies to the local memory and stack top value since those are affected by the result of the computation of the  unknown contract.
Rule (C2) captures the initiation of a reentering execution (at call level $1$) with storage forward propagation: As  contract execution always starts at program counter $0$ with empty stack and zeroed-out local memory, only abstractions (instances of the $\pnmstate{0}$ predicate) of this shape are implied. The forward propagation of storage is modeled by initializing the $\pnmstate{0}$ predicate with the storage $\stor$ at call time.
Rule (C3) models storage cross propagation: Similar to rule (C2), an abstract reentering execution in a fresh machine state is triggered. However, the storage is not propagated from the point of calling, but from another finished reentering execution whose results are abstracted by the halting predicate $\pnhalt$ at call level $1$. This rule is independent of the callee in that it is only conditioned on the reachability of some $\CALL$ instruction, but it does not depend on any of the callee's state. Instead its cyclic structure requires to extrapolate an invariant on the potential storage modifications that are computable by $\cstar$: Intuitively, when reentering $\cstar$ it needs to be considered that priorly the storage was modified by applying an arbitrary sequence of $\cstar$'s public functions. The significance of this abstraction is motivated by the example in \autoref{fig:reentrancy} where the attack is only enabled by calling \lstinline|Bank|'s \lstinline|release| function first, to release the lock before reentering.

\begin{figure}
\includegraphics[width=0.6\columnwidth]{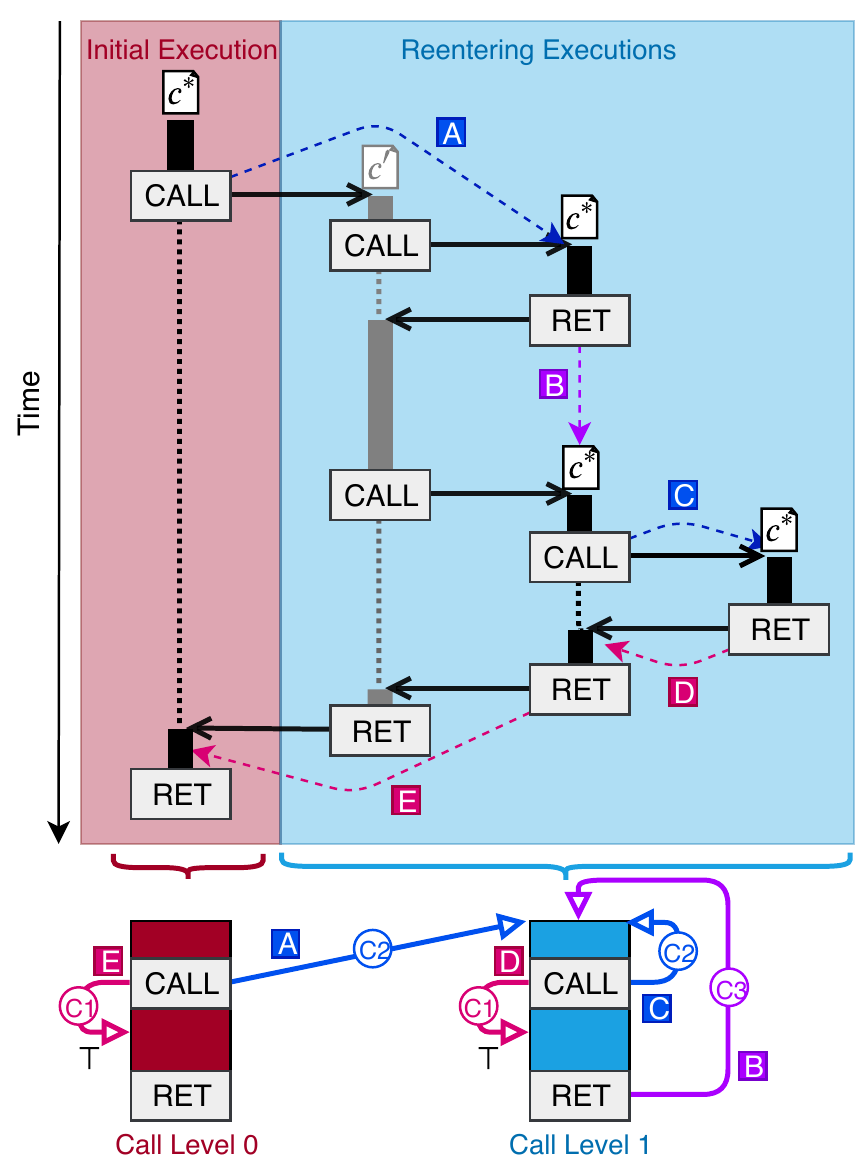}
\caption{Illustration of the different call abstractions.}
\label{fig:call-abstraction}
\end{figure}

\subsection{Scope of the analysis}
\label{subsec:scope}
Before presenting the soundness result, we discuss the scope of the analysis.
The analysis targets contracts in a stand-alone setting, which means that the behavior of all contracts that $\cstar$ might interact with is over-approximated.
This abstraction is not merely a design choice, but rather a necessity as the state of the blockchain (including the code of the contracts residing there) at execution time cannot be statically determined.
Still, we could easily accommodate the precise analysis of a set of known contracts e.g., library contracts that are already present on the blockchain. We omitted this straightforward extension in this work for the sake of clarity and succinctness in the analysis definition and the soundness claim.

Following this line of argumentation, we assume $\cstar$ not to contain $\DELEGATECALL$ and $\CALLCODE$ instructions: these instructions enable the execution of another contract code in the context of  $\cstar$, allowing for the modification of the persistent storage of $\cstar$ and even of money transfers on behalf of $\cstar$. Using $\DELEGATECALL$ or $\CALLCODE$ to call an unknown contract can therefore potentially result in the reachability of any execution states of contract $\cstar$. Consequently every property relying on the non-reachability of certain problematic contract states would be considered violated.
In a setting of multiple known contracts the restriction on $\DELEGATECALL$ and $\CALLCODE$ instructions could be relaxed to allow for  such calls that are guaranteed to target known contracts.

We now briefly illustrate the key design choices behind our abstraction, which we carefully crafted   to find the sweet spot between accuracy and practicality. The analysis is value sensitive in that concrete stack, memory, and storage values are tracked until they get abstracted due to influence of unknown components.
For local computations, the analysis is partly flow-sensitive (considering the order of instructions, but merging abstract configurations at the same program counters) and path-sensitive (being sensitive to branch conditions).
On the level of contract calls, a partial context sensitivity is given in that the storage at the time of calling influences the analysis of the subsequent call, but no other inputs to the call are tracked. In particular (due to the lack of knowledge on interactions with other contracts) all reentering calls are merged into a single abstraction, accumulating all possible storage states at the point of reentering. For this reason, the analysis of calls on level $1$ is  less precise than the one of the original execution on call level $0$, where only the restrictions of flow sensitivity apply.

\subsection{Soundness Result}
\label{subsec:soundness}

We prove, for each contract $\cstar$, that the defined \hornc-clause based abstraction soundly over-approximates the small-step semantics presented in \autoref{subsec:smallstep}. Formally, this property is stated as follows:
\begin{theorem}[Soundness]
\label{theorem:soundness}
Let $\cstar$ be a contract  whose code does not contain $\DELEGATECALL$ or $\CALLCODE$. Let $\transenv$ be a transaction environment and let $\callstack$ and $\callstack'$ be annotated callstacks such that $\size{\callstack'} > 0$. Then for all execution states $\exstate$ that are strongly consistent with $\cstar$ it holds that
\par
\nobreak
{\small
\noindent
\begin{align*}
\ssteps{\transenv}{\cons{\annotate{\exstate}{\anacontract}}{\callstack}}{\concatstack{\callstack'}{\callstack}}
\implies
\forall \absconfig_I.~ \configabs{\predsig}_{\cstar}([\annotate{\exstate}{\cstar}]) \leq \absconfig_I \\
\implies
\exists \absconfig.~ \absconfig_I, \delta(\cstar) \derives \absconfig \ ~\land~ \configabs{\predsig}_{\cstar}(\callstack') \leq \absconfig
\end{align*}
}%
\end{theorem}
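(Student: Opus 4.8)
The claim is a simulation statement, so my plan is to prove it by induction on the length of the concrete execution, after two preliminary simplifications. First, since internal transactions are sandboxed and the callstack frames strictly below the currently active region are inert, an execution $\ssteps{\transenv}{\cons{\annotate{\exstate}{\cstar}}{\callstack}}{\concatstack{\callstack'}{\callstack}}$ with $\size{\callstack'}>0$ never inspects or modifies $\callstack$ and factors through configurations all of the form $\concatstack{\callstack''}{\callstack}$ with $\size{\callstack''}>0$; in particular the ``above-$\callstack$'' part evolves exactly as in $\ssteps{\transenv}{[\annotate{\exstate}{\cstar}]}{\callstack'}$ (a frame lemma I would read off the small-step rules or cite from \cite{GMS::POST18}). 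Since $\configabs{}_{\cstar}$ is applied in the statement only to $[\annotate{\exstate}{\cstar}]$ and to $\callstack'$, it suffices to treat $\callstack=\epsilon$. Second, $\derives$ is monotone with respect to $\leq$ (the Constrained \hornc clauses are built from monotone abstract operators, e.g.\ $\absop{+}$ and $\accessword{\cdot}{\cdot}$, and facts are only added, never retracted), so it is enough to establish the conclusion for the tightest choice $\absconfig_I=\configabs{}_{\cstar}([\annotate{\exstate}{\cstar}])$ and then transport it to any over-approximation $\absconfig_I$.

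\textbf{The induction.} For $\callstack=\epsilon$ I would actually prove a \emph{trace} strengthening: for every run $\nsteps{\transenv}{[\annotate{\exstate}{\cstar}]}{\config_n}{n}$ with intermediate configurations $\config_0,\dots,\config_n$ there is an $\absconfig$ with $\configabs{}_{\cstar}([\annotate{\exstate}{\cstar}]),\delta(\cstar)\derives\absconfig$ and $\configabs{}_{\cstar}(\config_i)\leq\absconfig$ for \emph{all} $i\leq n$. Over-approximating the whole trace rather than just the endpoint is essential because $\derives$ accumulates facts that are never retracted, whereas the concrete callstack discards frames when subcalls return, and the cross-propagation rule (C3) must be able to refer to $\predhalt{\cdot}{1}$ facts of reentrant executions that have long been popped. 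The base case $n=0$ is immediate; for the step I apply the hypothesis to the $n$-step prefix and extend the derived $\absconfig$ by at most one further \hornc-clause application, chosen by case analysis on $\config_n\to\config_{n+1}$.

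\textbf{Routine transitions.} A step taken by a non-$\cstar$ frame, or the pushing of a non-$\cstar$ callee frame (including $\CREATE$, which spawns a fresh, hence non-$\cstar$, account), leaves $\configabs{}_{\cstar}$ unchanged and needs no new fact; a local step of $\cstar$ ($\ADD$, $\MLOAD$, comparisons, stack/memory ops, $\SSTORE$, jumps, \dots) is matched by the corresponding clause of $\instabs{\cdot}_{\pc}$, using that the lifted abstract operators over-approximate their concrete counterparts and that $\accessword{\cdot}{\cdot}$ faithfully reflects byte memory in the word-indexed model; regular halting of $\cstar$ ($\STOP$/$\RETURN$) is matched by the halting clause and exceptional halting (out of gas, stack under/overflow, invalid jump, \dots) by the precondition-free exception clause; and the return of a subcall into a $\cstar$ frame suspended at a $\CALL$ is matched by rule (C1), whose conclusion soundly over-approximates the unknown result flag, the written return data, and the callee's committed storage all by $\top$. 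This part is voluminous -- it must be discharged opcode by opcode against the semantics of \cite{GMS::POST18} -- but conceptually straightforward.

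\textbf{The delicate transition and the main obstacle.} The one genuinely subtle case is $\config_{n+1}=\cons{s}{\config_n}$ with $s$ a fresh $\cstar$-execution (program counter $0$, empty stack, zeroed memory, storage $\stor'$) placed at call level $1$, whose abstraction is $\predmstate{0}{0}{\arrayinit{0}}{\arrayinit{0}}{\stor'}{1}$. The heart of the proof is a storage-propagation invariant: because $\cstar$ contains no $\DELEGATECALL$ or $\CALLCODE$, only $\cstar$-executions ever write $\cstar$'s own storage, and hence $\stor'$ is always either (a)~the $\cstar$-storage carried by some $\cstar$-frame still suspended at a $\CALL$ in $\config_n$ (this covers a direct reentrant call by $\cstar$, the first reentrancy whose storage is the call-time storage of the level-$0$ frame, and in general any case where control reached $s$ without an intervening $\cstar$-write), or (b)~the final $\cstar$-storage of a reentrant $\cstar$-execution that already halted regularly and therefore appeared as a $\cstar$-annotated $\textit{HALT}$ state on top of some earlier $\config_j$. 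In case (a) the suspended frame's abstraction $\predmstate{\pc}{\siz}{\dots}{\dots}{\stor'}{\cl}$ with $\siz>6$ lies in $\configabs{}_{\cstar}(\config_n)\leq\absconfig$, so rule (C2) yields the required fact; in case (b) the fact $\predhalt{\stor'}{1}\in\configabs{}_{\cstar}(\config_j)\leq\absconfig$ (exactly what the trace invariant buys), together with the level-$0$ $\cstar$-frame still suspended at its originating $\CALL$, lets rule (C3) yield it. I expect the main obstacle to be proving that (a) and (b) are jointly exhaustive -- equivalently, that the two-level callstack abstraction soundly collapses arbitrarily deep concrete reentrancy nestings to level $1$ -- which is precisely where the $\DELEGATECALL$/$\CALLCODE$ restriction is used essentially and what forces the trace-based rather than endpoint-based induction hypothesis.
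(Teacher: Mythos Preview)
Your overall strategy is sound and your identification of the delicate case---justifying rules (C2)/(C3) when a fresh $\cstar$-frame is pushed---matches the paper's. But two structural choices differ, and one carries a small gap.

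\textbf{The frame-lemma reduction.} Reducing to $\callstack=\epsilon$ via ``the above-$\callstack$ part evolves exactly as in $\ssteps{\transenv}{[\annotate{\exstate}{\cstar}]}{\callstack'}$'' is not quite right: the EVM call-depth bound is checked against the \emph{full} callstack, so a $\CALL$ that fails (writes $0$, stays in the caller) with $\callstack$ present may succeed (enters the callee) with $\callstack=\epsilon$, after which the two runs diverge and the concrete $\callstack'$ need not be reachable from $[\annotate{\exstate}{\cstar}]$ at all. The paper avoids this by never reducing: it carries the general $\callstack$ through the induction, using its callstack-preservation lemma only for the ``factors through $\concatstack{\callstack''}{\callstack}$ with $\size{\callstack''}>0$'' part (which \emph{is} correct). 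Your trace-invariant argument works equally well with $\callstack$ kept general, so the fix is simply to drop the reduction and run the induction on the original execution.

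\textbf{Trace invariant versus complete induction.} The genuinely different design choice is how you make the $\predhalt{\stor'}{1}$ fact from a long-popped reentrant run available for (C3). You strengthen the hypothesis to a trace invariant (one $\absconfig$ covering every intermediate $\config_i$); the paper instead uses \emph{complete} induction together with a standalone storage-evolution lemma which, for any reached non-$\cstar$ state, traces $\cstar$'s storage back to either a $\cstar$ call-state still on the stack or a prior regular $\cstar$-$\textit{HALT}$---precisely your cases (a) and (b). When (b) applies the paper re-invokes the hypothesis on the shorter prefix ending in that $\textit{HALT}$, which is why it needs complete rather than step induction. Your packaging is cleaner (fact persistence is encoded once, in the invariant); the paper's is more modular (the storage-evolution lemma is a reusable statement about the concrete semantics, proved independently of the abstraction, and it is also where the paper handles the $\CREATE$ address-collision corner case that you gloss over with ``fresh, hence non-$\cstar$''---the paper makes an explicit collision-freeness assumption there). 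Both routes are correct; the paper likewise proves monotonicity of $\delta(\cstar)$ separately and invokes it per step rather than front-loading it as you do.
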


The theorem states that every execution of contract $\cstar$ (modeled by a multi-step execution starting in state $\annotate{\exstate}{\cstar}$ on an arbitrary call stack $\callstack$ and ending in call stack $\concatstack{\callstack'}{\callstack}$, indicating that the original execution of $\cstar$ yielded the state as modeled by the call stack $\callstack'$), can be mimicked by an abstract execution. This means that from every abstract configuration $\absconfig_I$ that abstracts $\annotate{\exstate}{\cstar}$ (so that it is more abstract than $\configabs{\cstar}([\annotate{\exstate}{\cstar}])$) one can logically derive using the \hornc clauses in $\delta(\cstar)$ some abstract configuration $\absconfig$ abstracting $\callstack'$. As a consequence of this theorem we can soundly reason about arbitrary executions of a contract $\cstar$: if we can show that from an abstract configuration $\absconfig_I$, that abstracts a set of initial execution states of $\cstar$, it is impossible to derive using $\delta(\cstar)$ some other abstract configuration $\absconfig$, that abstracts a set of problematic execution states of $\cstar$, then this ensures that all these problematic states are not reachable with a small-step execution from any of the initial states. 

For the proof of ~\autoref{theorem:soundness} we refer the reader to~\autoref{app:theory}.

\subsection{Reachability Properties for Contract Safety}
\label{subsec:reachability}

As characterized by the soundness result, our abstraction allows for the sound analysis of reachability properties. We will illustrate in the following how such a reachability analysis is sufficient to express relevant smart contract security properties.

\subsubsection{Single-entrancy}
Some generic security properties of Ethereum smart contracts can be over-approximated by reachability properties and hence checked automatically by our static analysis.
Consider,  the single-entrancy property from \autoref{subsec:props} which has been proven to be approximated by the following reachability property in~\cite{GMS::CAV18}.

\begin{definition}[Call unreachability~\cite{GMS::CAV18}]
\label{def:callunreachability}
A contract $c$ is call unreachable if for all regular execution states $\regstatefull{\mstate}{\exenv}{\gstate}{\transeffects}$ that are strongly consistent with $c$ and satisfy $\mstate = \smstate{g}{0}{\emptymemory}{0}{\emptystack}$ for some $g \in \NN$, it holds that for all transaction environments $\transenv$ and all callstacks $\callstack$
\par
\nobreak
{\small
\noindent
\begin{align*}
\neg \exists s, \callstack. &\,
\ssteps{\transenv}{\cons{\annotate{\regstatefull{\mstate}{\exenv}{\gstate}{\transeffects}}{c}}{\callstack}}{\cons{\annotate{s}{c}}{\concatstack{\callstack'}{\callstack}}} \\
&~\land~ \size{\callstack'} > 0 ~ \land ~
\arraypos{\getcontractcode{c}}{\access{\access{s}{\mstate}}{\pc}} \in \callinstructions
\end{align*}
}%
Where the set $\callinstructions$ of call instructions is defined as
\par
\nobreak
{\small
\noindent
\[\callinstructions =  \{\CALL, \CALLCODE, \allowbreak \DELEGATECALL, \CREATE \}\]
}%
\end{definition}

Intuitively, call reachability is a valid over-approximation of single-entrancy as an internal transaction can only be initiated by the execution of a call instruction. Consequently, for excluding that an internal transaction was initiated after reentering, it is sufficient to ensure that no call instruction is reachable at this point. In addition, as all contracts start their executions in a fresh machine state (program counter  and active words set to $0$, empty stack, memory initialized to $0$) when being initially called, it is sufficient to check all executions of contract $c$ that started in such a state.

\subsubsection{Static assertion checking} The Solidity language supports the insertion of assertions into source code. Assertions shall function as pure sanity checks for developers and are enforced at runtime by the compiler creating the corresponding checks on the bytecode level and throwing an exception (using the $\INVALID$ opcode) in case of an assertion violation~\cite{solidity-doc-assert}.
However, adding these additional checks creates a two-fold cost overhead: At create time a longer bytecode needs to be deployed (the longer the bytecode the higher the gas cost for creation) and at call time the additional checks need to be executed which results in additional gas consumption.
With our static analysis technique, assertions can be statically checked by querying for the reachability of the $\INVALID$ instructions. If no such instruction is reachable, by the soundness of the analysis, the code is proven to give the same guarantees as with the assertion checks (up to gas) and those checks can safely be removed from the code resulting in shorter and cheaper contracts.\footnote{The Solidity Docs~\cite{solidity-doc-assert} discuss exactly this future use of static analysis tools for assertion checking.}
Formally, we can characterize this property as the following reachability property:
\begin{definition}[Static assertion checking]
Let $c$ be a contract and $\regstatefull{\mstate}{\exenv}{\gstate}{\transeffects}$ regular execution states such that $\regstatefull{\mstate}{\exenv}{\gstate}{\transeffects}$ is strongly consistent with $c$ and $\mstate = \smstate{g}{0}{\emptymemory}{0}{\emptystack}$ for some $g \in \NN$. Let $\transenv$ be an arbitrary transaction environment and $\callstack$ be an arbitrary callstack. Then a the static assertion check for $c$ is defined as follows:
\par
\nobreak
{ \small
\noindent
\begin{align*}
\neg \exists s, \callstack. & \,
\ssteps{\transenv}{\cons{\annotate{\regstatefull{\mstate}{\exenv}{\gstate}{\transeffects}}{c}}{\callstack}}{\cons{\annotate{s}{c}}{\concatstack{\callstack'}{\callstack}}}
~ \land ~
\arraypos{\getcontractcode{c}}{\access{\access{s}{\mstate}}{\pc}} = \INVALID
\end{align*}
}%
\end{definition}
Intuitively this property says that during an execution of contract $c$ it should never be possible to execute an $\INVALID$ instruction.

\subsubsection{Semi-automated verification of contract-specific properties}
As demonstrated by Hildebrandt et al.~\cite{hildenbrandt2017kevm}, reachability analysis can be effectively used for Hoare-Logic-style reasoning. This holds in particular for the analysis tool presented in this work: Let us consider a Hoare Logic triple $\{ P \} \textit{C} \{ Q\}$ where $P$ is the precondition (operating on the execution state), $\textit{C}$ is the contract code and $Q$ is the postcondition that should be satisfied after executing code $\textit{C}$ in an execution state satisfying $P$. Then we can intuitively check this claim by checking that a state satisfying $\neg Q$ can never be reached when starting execution in a state satisfying $P$.
More formally, we can define Hoare triples as reachability properties as follows:
\begin{definition}[Hoare triples]
Let $\cstar$ be a contract and let $\textit{C}$ be a code fragment of $\cstar$.
Let $P \in \exstates \to \BB$ be a predicate on execution states (strongly consistent with $\cstar$) that models execution right at the start of  $\textit{C}$ and similarly let $Q\in \exstates \to \BB$ be  a predicate on execution states (strongly consistent with $\cstar$) that models execution right at the point after executing  $\textit{C}$.
Then Hoare triples $\{ P \} \textit{C} \{ Q\}$ can be characterized as follows:
\par
\nobreak
{ \small
\noindent
\begin{align*}
\{ P \} \textit{C} \{ Q\}
\define
\forall \exstate. \, P(\exstate)
\implies
\neg \exists \exstate'. & \,
\ssteps{\transenv}{\cons{\annotate{\exstate}{\cstar}}{\callstack}}{\cons{\annotate{s'}{\cstar}}{\callstack}} ~\land~  \neg Q(\exstate')
\end{align*}
}%
\end{definition}

Hoare-Logic style reasoning can be used for the semi-automated verification of smart contracts given that their behavior is specified in terms of pre- and postconditions. For now it still requires a non-negligible amount of expertise to insert the corresponding abstract conditions on the bytecode-level, but by a proper integration into the Solidity compiler the generation of the initialization and reachability queries could be fully automated (cf. \autoref{subsec:queries}). We want to stress that in contrast to existing approaches, our{} analysis technique has the potential to provide fully automated pre- and postcondition checking even in the presence of loops as it leverages the fixed point engines of state-of-the-art SMT solvers~\cite{hoder2011muz}.

\section{\horst{}: A Static Analysis Language}

\label{sec:horst}
To facilitate the principled and robust development of static analyzers based on \hornc clause resolution, we designed \horst{} -- a framework consisting of a high-level specification language for defining \hornc-clause based abstractions and a compiler generating optimized \texttt{smt-lib} encodings for SMT-solvers.
The objective of \horst{} is to assist analysis designers in developing fast and robust static analyzers from clean and readable logical specifications.

Many existing practical analyzers are built on top of modern SMT-solvers such as \zz{}.
These solvers are highly optimized for performance, which causes big performance deviations on different problem instances and makes their internal workings (due to the heavy use of heuristics) opaque to the user. Handcrafting logical specifications for such solvers in their low-level input format \texttt{smt-lib} is hence not only cumbersome, error-prone, and requires technical expertise, but is also very inflexible, since the performance effects of different encodings may vary with the concrete problem instance. For tackling this issue, \horst{} decouples the high-level analysis design from the compilation to the input format: A high-level specification format allows for clear, human-readable analysis definitions while the translation process is handled by a stable and streamlined backend. On top, it allows for easily applying and experimenting with different \hornc-clause level optimizations that we can show to enhance the performance of \zz{} substantially in our problem domain. We will shortly illustrate the utilization of \horst{} in the design process of our static analyzer and discuss the most interesting optimizations performed by the \horst{} compiler.
For an introduction to the \horst{} language, we refer the reader to \autoref{sec:appendix-horst}.

\paragraph*{Designing static analyses using \horst{}}
The \horst{} language allows for writing math-like specifications of \hornc{} clauses such as those given in~\autoref{fig:absrules}. For parametrizing those clauses (e.g., by the program counters of a specific contract) an interface with a \java{} back-end can be specified that handles the domain specific infrastructure, such as contract parsing.
We overview the different steps of the analysis design process in~\autoref{fig:horst-design}.

\begin{figure}[t]
\includegraphics[width=\columnwidth]{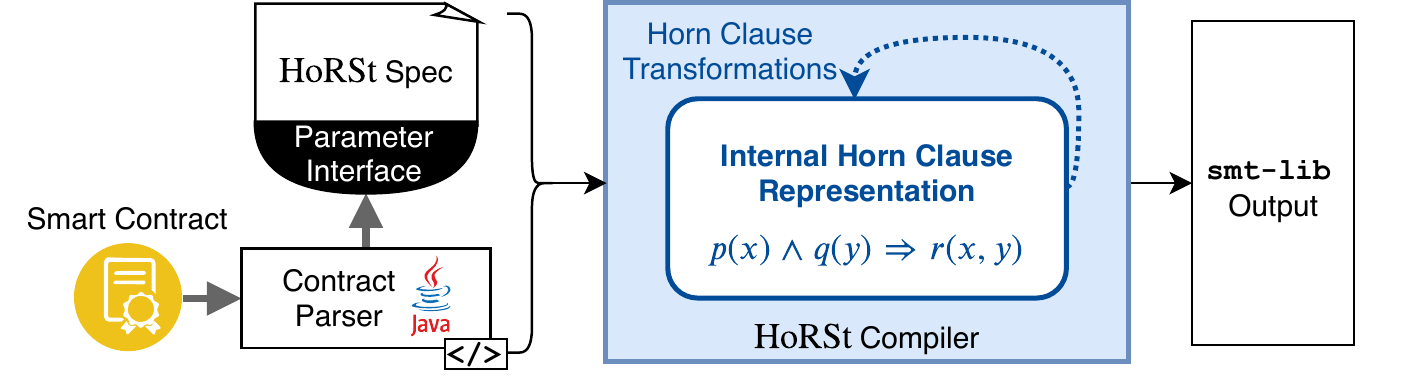}
\caption{Utilization of \horst{} for static analysis}
\label{fig:horst-design}
\end{figure}

The core of the analysis is the \horst{} specification. Using high-level programming constructs such as algebraic data types, pattern matching, and bounded iteration, a \horst{} specification describes Constrained \hornc{} clauses over user-defined predicates. \hornc{} clauses can be parametrized by (families) of sets that are specified in the parameter interface (e.g., the sets of all program counters containing a certain bytecode instruction in a specific contract). 
Given such a specification, the analysis designer needs to provide infrastructure code written in \java{}. In particular this code needs to exhibit an implementation of those sets (or functions) specified in the parameter interface. In the case of our analysis, the environment code contains the infrastructure for contract parsing and the parameter interface allows for accessing the assembled contract information (code length, positions of opcodes, etc.) in the analysis specification. 
The \horst{} compiler itself is utilized to generate (optimized) \smtlib{} output given a \horst{} specification and the parameter interface implementation: It unfolds the high-level specification into separate \hornc{} clauses over basic data types, applying the interface implementation: To this end it also resolves all high-level constructs, ensuring that the resulting \hornc{} clauses fall into the fragment that can be handled by \zz{}. On top, the \horst{} compiler (optionally) performs different optimizations and transformations on the resulting \hornc{} clauses, before translating them into the standardized SMT output format \smtlib{}. The most important of these transformations are discussed in the following.

\paragraph*{Low-level optimizations}
\label{subsubsec:folding}
One of the most effective optimizations performed by \horst{} is the predicate elimination by \emph{unfolding} \hornc clauses.
This satisfiability preserving transformation has been long-studied in the literature~\cite{burstall1977transformation,tamaki1984unfold}  and showed  beneficial for solving \hornc clauses in certain settings~\cite{hermenegildo2012overview,bjorner2015horn}.
In practice, however, the exhaustive application of this transformation can lead to an exponential blow-up in the number of \hornc clauses and hence does not necessarily yield the best results. For this reason \horst{} implements different strategies for the (partial) application of this transformation, which we call \emph{linear folding} and \emph{exhaustive folding}.

The idea behind the unfolding transformation is that a predicate $p$ can be eliminated from a set of \hornc clauses $\abss$ by unfolding the occurrences of $p$ in the premises according to the clauses that have $p$ as conclusion. 
An example is given in \autoref{fig:unfolding}. Here predicate $P_2$ is eliminated by merging the two single execution steps (modeled by the two clauses on the left) into a combined clause (on the right) summarizing the steps.

\begin{figure}
{\small
\begin{align*}
  \begin{rcases}
  \begin{aligned}
    P_1(x)\land y = x + 1&\Rightarrow P_2(y) \\
    P_2(y)\land z = y * 3&\Rightarrow P_3(z)
  \end{aligned}
  \end{rcases} P_1(x) \land y = x + 1 \land z = y * 3 \Rightarrow P_3(z)
\end{align*}}
\caption{Unfolding of $P_2$.}
\label{fig:unfolding}
\end{figure}

This intuition serves as a starting point for the unfolding strategy of \emph{linear folding}. In linear folding, all clauses representing a basic block of sequential execution steps are merged into a single clause. More precisely, the unfolding transformation is only applied to those predicates that are used linearly in $\abss$, meaning that $p$ occurs in the premises of exactly one clause in $\abss$ and in the conclusion of exactly one different clause in $\abss$. 
Linear folding has the advantage that it runs linearly in the number of clauses in $\abss$ and yields as result a reduced set of clauses $\abss'$ such that $\size{\abss'} \leq \size{\abss}$. 

In contrast, applying the unfolding transformation exhaustively on all predicates (with exception of those that are recursively used) might yield an exponential blow-up in clauses (and hence also result in exponential runtime). In practice however, the set of clauses $\abss'$ resulting from such a \emph{exhaustive folding} is often of a reasonable size. For mitigating the runtime overhead, however, it is crucial to avoid unnecessary blow-ups in the intermediate clause sets produced during the transformation: To this end, for exhaustive folding \horst{} applies linear folding first and only afterwards performs the unfoldings that multiply existing clauses.

Finally, \horst{} supports constant folding for minimizing the \texttt{smt-lib} output and value encoding to map custom data types into primitive type encodings that are efficiently solvable by \zz. We refer to \autoref{app:horst} for further details on \horst{} internals and functionalities.

\section{Implementation \& Evaluation}
\label{sec:eval}
We use \horst{} to generate the analyzer \ethor{} which implements the static analysis defined in~\autoref{sec:stat}. In the following, we shortly overview the design of \ethor{} and illustrate how \ethor{} can enhance smart contract security in practice. To this end we conduct a case study on a widely used library contract, showing \ethor{}'s capability of verifying  functional correctness properties and static assertion checks. Furthermore, we validate \ethor{}'s soundness and precision on the official EVM testsuite and run a large-scale evaluation for the single-entrancy property on a set of real-world contracts from the Ethereum blockchain, comparing \ethor{} with the state-of-the-art analyzer ZEUS~\cite{kalra2018zeus}.

\subsection{Static Analysis Tool}
The mechanics of \ethor{} are outlined in \autoref{fig:analysis-outline}: \ethor{} takes as input the smart contract  to be analyzed in bytecode-format and a \horst{}-specification parametrized by $\cstar$
.
For enhancing the performance and precision of the tool, \ethor{} performs a multi-staged analysis:
First, it approximates the contract jump destinations (based on a less precise abstract semantics) which helps the tool performance as it decouples the control flow reconstruction (which can be performed more efficiently with a less precise abstract semantics as typically no computations on jump destinations are performed, but just their flow during the stack needs to be modeled) from the more evolved abstract semantics required for precisely analyzing the properties discussed in \autoref{subsec:reachability}. As both used semantics are sound, the soundness of the overall analysis is not affected.
In a second pre-processing step, \ethor{} performs a simple partial execution of atomic program blocks in order to statically determine fixed stack values. This can be beneficial in order to, e.g., precompute hash values and results of exponentiation which would otherwise need to be over-approximated in the analysis due to the lacking support for such operations by \zz{}.
The results from the pre-analysis steps are incorporated into the analysis by a predefined interface in the \horst{}-specification.
The $\horst$ compiler then -- given the interface implementation and the specification -- creates an internal \hornc clause representation which, after optionally performing different optimizations, is translated to an \texttt{smt-lib} file on which the SMT-Solver $\zz$ is invoked.
The reconstructed control flow is obtained by a \souffle{}~\cite{jordan2016souffle} program, which was created by manually translating a \horst{} specification.
\souffle{} is a high performance datalog engine, which we plan to support as a compilation target for (a subset
of) \horst{} in the near future.
Since the problem of control flow reconstruction falls into the fragment supported by modern datalog solvers, we found \souffle{} more performant than using the general-purpose solver \zz{} in this context
\footnote{\zz{} also implements a standard datalog engine which is restricted to work with predicates over finite domains. This constraint is used to ensure that the \smtlib{}-expressible \hornc clauses do not leave the classical datalog-solvable fragment. However, \souffle{} overcomes this restriction in favor of a more liberal characterization of the solvable fragment which could also be incorporated into the \horst{} language - allowing for compilation to \souffle{} from this fragment.}.
 However, for reasoning about more involved properties, the expressiveness of \zz{} is required as we will illustrate in \autoref{subsec:case-study}.

\begin{figure}
\includegraphics[width=.91359\linewidth]{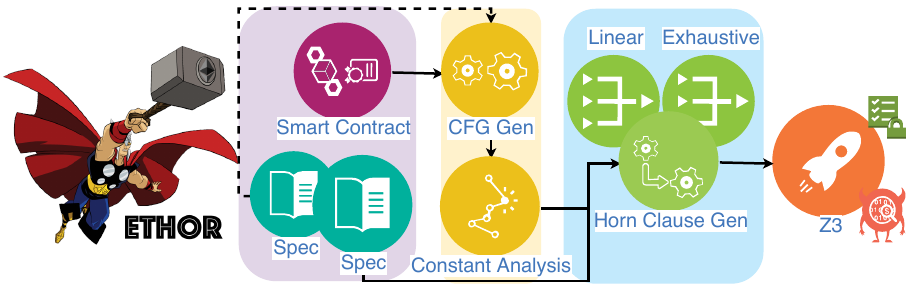}
\caption{Analysis outline.}
\label{fig:analysis-outline}
\end{figure}

\subsection{Case Study: SafeMath Library}
\label{subsec:case-study}

As a case study for functional correctness and assertion checking we chose Solidity's \lstinline|SafeMath| library~\cite{safemath}, a library implementing proper exception behavior for standard arithmetic operations.
This particularly encompasses exceptions in case of overflows in addition and multiplications, underflows in subtractions, and division or modulo by $0$.
The \lstinline|SafeMath| library is special in that it is not a classical library which is deployed as an own contract on the blockchain, but its functions  get inlined during the compilation of a contract that uses them\footnote{In Solidity, one always needs to provide definitions of the (library) contracts one is interacting with. In case that a library is only containing pure internal functions, the Solidity compiler inlines this functions instead of compiling them to \DELEGATECALL{} call instructions to an address at which the user specified the library to reside.}.
This specific behavior makes it particularly interesting to analyze the individual library functions as their concrete implementations may vary with changes in the compiler.

\myparagraph{Functional Correctness}
For our case study we compiled the functions of the \lstinline|SafeMath| library with a recent stable Solidity compiler version (0.5.0) and 
verified that they expose the desired behavior. In particular we showed that all functions 1) cannot return successfully in the problematic corner cases. 2) can return successfully with the correct result in the absence of corner cases. 3) if halting successfully in the absence of corner cases, they can return nothing but the correct result.

As these properties require to precisely relate different input values over the execution (e.g., requiring that the sum of two input values $x$ and $y$ exceeds $2^{256}$), we needed to slightly adapt our analysis by adding a corresponding representation of the initial input (as word array) to the $\pnmstate{}$ and the $\pnhalt$ predicates. This array is accessed by the \CALLDATALOAD{} operation which fetches the input data. Additionally, we need to model return values by an own predicate. For more details, we refer the reader to \autoref{sec:functional-correctness}.
Our analysis manages to prove the corresponding functional properties for each of the five functions within milliseconds, showcasing the effiency of our analysis tool.
Note that  verifying meaningful functional correctness properties, like in this case study, requires to universally quantify over potential inputs, hence making an analysis with a datalog engine (such as \souffle{}), which requires to explicitly list finite initial relations, infeasible.

\myparagraph{Static Assertion Checking}
The following code snippet shows the division function of the \lstinline|SafeMath| library:
\begin{lstlisting}
function div(uint256 a, uint256 b, string memory errorMessage) internal pure returns (uint256) {
    require(b > 0, errorMessage);
    uint256 c = a / b;
    // assert(a == b * c + a % b); // There is no case in which this doesn't hold
    return c; }
\end{lstlisting}
It testifies that the function used to contain an assertion which was deemed to be unnecessary and hence removed (probably to save gas). We reinserted this assertion and indeed could prove that the dynamic assertion check is obsolete as it can never be violated.

\subsection{Large-scale Evaluation}
We performed a series of experiments to assess the  overall performance of our tool. In particular,  we systematically evaluated \ethor{}'s correctness and precision on the official EVM testsuite and additionally conducted a large scale analysis for the single-entrancy property, comparing \ethor{} with the ZEUS~\cite{kalra2018zeus} static analyzer, using the real-world dataset introduced with the latter\footnote{We chose to compare with~\cite{kalra2018zeus} as we found it the only (claimed) sound tool to support a property comparable to single-entrancy.
\cite{securify} only supports a no-write-after-call pattern which the authors claim themselves to be different from reentrancy.
\cite{lu2019neucheck} makes use of a similar pattern.}.

\subsubsection{Automated Testing}
\label{subsec:testing}
For making a principled assessment of its correctness, we evaluated \ethor{} against the virtual machine test cases provided by the Ethereum Foundation\footnote{\url{https://github.com/ethereum/tests/}}. Being formulated as pre- and postconditions, these test cases fall in the class of properties characterized in \autoref{subsec:props} and we could automatically translate them into  \hornc clauses and queries in \horst{}.
The test suite defines 609 test cases, 604 of which specify properties that are relevant for a single contract setting (see~\autoref{subsec:automated-testing-appendix} for details).

Using a $1$ second timeout, we were able to solve $85\%$ (513) of the test cases precisely with a termination rate of $99\%$ (597).

\subsubsection{Reentrancy}
\label{subsec:reentrancy-eval}
For the call unreachability property described in \autoref{def:callunreachability}, we evaluated \ethor{} against the the set of real-world contracts presented in~\cite{kalra2018zeus}. 
The authors extracted $22493$ real-world contracts from the Ethereum blockchain over a period of three months and (after deduplication) made available a list of $1524$ contract addresses.
Due to various problems of this dataset (as described in \autoref{sec:zeus}), sanitization leaves us with  $720$ distinct bytecodes
 out of which we can label $100$ contracts to be trivially non-reentrant (because they did not contain any possibly reentering instruction) and $2$ were out of the scope of our analysis (because they contain at least one \DELEGATECALL{} or \CALLCODE{} instruction) and hence immediately classified to be potentially vulnerable. We make this sanitized benchmark available to the community, including bytecode and sources (where available)~\cite{extended}.
For $13$ contracts we failed to reconstruct the control flow graph, leaving us with $605$ distinct contracts to run our experiments on.

We ran three different experiments for evaluating \ethor{}'s performance for the single-entrancy property: one without performing any \hornc clause folding and two performing each one of the two \hornc clause folding variants described in \autoref{sec:horst}. 
The aim of this experimental set up is not only to conduct a comparison with ZEUS, but also to showcase how \ethor{}'s modular structure facilitates its performance in that \ethor{} can flexibly benefit from different optimization techniques of the \horst{} compiler.
In the comparison with ZEUS, we take into account the combined result of the three different experiments (the contracts solvable using any of the applied transformations).
For the exhaustive folding, we only considered instances where we could generate the \texttt{smt-lib} output in less than 15 minutes.\footnote{This timeout was chosen since it yielded a termination rate of $>95\%$.}
All of the experiments were conducted on a Google Cloud Server with $24$ Cores at $2.8$ GHz and $150$ GiB of RAM.
At most $30$ queries were executed at once, each with a 10 minutes timeout.
Combining the different experiments we were able to obtain results for all but $20$ contracts.

In order to assess the precision of our tool, we compared the results with~\cite{kalra2018zeus}.
Because of the existing unsoundness claims of~\cite{kalra2018zeus} in the literature~\cite{torres2018osiris,GMS::CAV18} we manually reassessed the ground truth provided
by~\cite{kalra2018zeus} for all contracts that were labeled insecure by at least one of the tools.
Since this is a challenging and time consuming task, especially in the case that no Solidity source code is available, we excluded all contracts with more than 6000 bytecodes for which we were not able to obtain the source code, which leaves us with $712$ contracts for which we assessed the ground truth.

Surprisingly, we found numerous contracts labeled non-reentrant by~\cite{kalra2018zeus} which, if analysed in a single contract setting, definitely were reentrant according to the definition of reentracy given in \autoref{def:single-entrancy} and also according to the informal definition provided in~\cite{kalra2018zeus} itself\footnote{\cite{kalra2018zeus} gives the following informal definition: `A function is reentrant if it can be interrupted while in the midst of its execution, and safely re-invoked even before its previous invocations complete execution.'}.
We assume this to be an artefact of~\cite{kalra2018zeus}'s syntactical treatment of the \texttt{call} directive on the Solidity level which is, however, insufficient to catch all possible reentrancies.
As the authors claim to exclude reentrancies introduced by the \texttt{send} directive (even though this is officially considered potentially insecure~\cite{solidity-doc-security}), for the sake of better comparability, we slightly updated our abstract semantics to account for calls that can be deemed secure following the same argument (namely that a small gas budget prevents reentrancy).
In the following we compare \ethor{} against~\cite{kalra2018zeus} on our manually established ground truth. The results are summarized in~\autoref{tab:results}.
\begin{table}
\begin{center}
{\small
\begin{tabular}{llrr}
  Measure & Definition & \ethor{} &\cite{kalra2018zeus} \\
  \hline
  termination & $\textit{terminated}/\textit{total}$ & 95.4 & 98.3 \\
  sensitivity & $tp/(tp + fn)$ & 100 & 11.4 \\
  specificity & $tn/(tn + fp)$ & 80 & 99.8 \\
  F-measure & $2*(\textit{spec}*\textit{sens}/(\textit{spec} + \textit{sens}))$ & 88.9 & 20.4 \\
\end{tabular}}
\end{center}
\caption{Performance comparison of \ethor{} and ZEUS~\cite{kalra2018zeus}.
\label{tab:results}
\normalfont
\emph{total}/\emph{terminated} denotes the total number of contracts in the data set/the number of contracts the respective tool terminated on. \emph{tp}/\emph{fp} denotes the number of true/false positives and \emph{tn}/\emph{fn} the number of true/false negatives.}
\end{table}

For achieving a termination rate comparable to~\cite{kalra2018zeus} ($95.4\%$ vs. $98.3\%$), we needed to run our tool with a substantially higher timeout (10 min. query timeout vs 1 min. contract time out for ZEUS). This difference can be explained by the fact that our analysis works on little structured bytecode in contrast to the simplified high-level representation used by~\cite{kalra2018zeus}. Additional overhead needs to be attributed to the usage of sound abstractions on bytecode level as well as to our different experimental setup that did not allow for the same amount of parallelization.
The soundness claim of~\cite{kalra2018zeus} is challenged by the experimentally assessed sensitivity of only $11.4\%$.
One possible explanation for this low value, which deviates from the numbers reported in~\cite{kalra2018zeus} on the same data set, is that the intuition guiding the manual investigation performed by~\cite{kalra2018zeus} departed from the notion of single-entrancy and the intuitive definition given by the authors. This highlights the importance of  formalizing not only the analysis technique but also the security properties to be verified.
When interpreting the high specificity of ~\cite{kalra2018zeus} (almost $100\%$) one should consider that ZEUS labels only $22$ contracts vulnerable in total out of which one is a false positive. Given that the dataset is biased towards safe contracts ($517$ safe as opposed to $195$ unsafe ones) a high specificity can be the result of a tool's  tendency to label contracts erroneously secure. Due to the proven soundness, for \ethor{} such a behavior is excluded by design.
This overall advantage in accuracy of \ethor{} over ZEUS is reflected by \ethor{}'s F-measure of $88.9\%$ as opposed to $20.4\%$ for ZEUS.

\subsubsection{Horn Clause Folding}
\begin{figure}
\centering
\includegraphics[width=\linewidth]{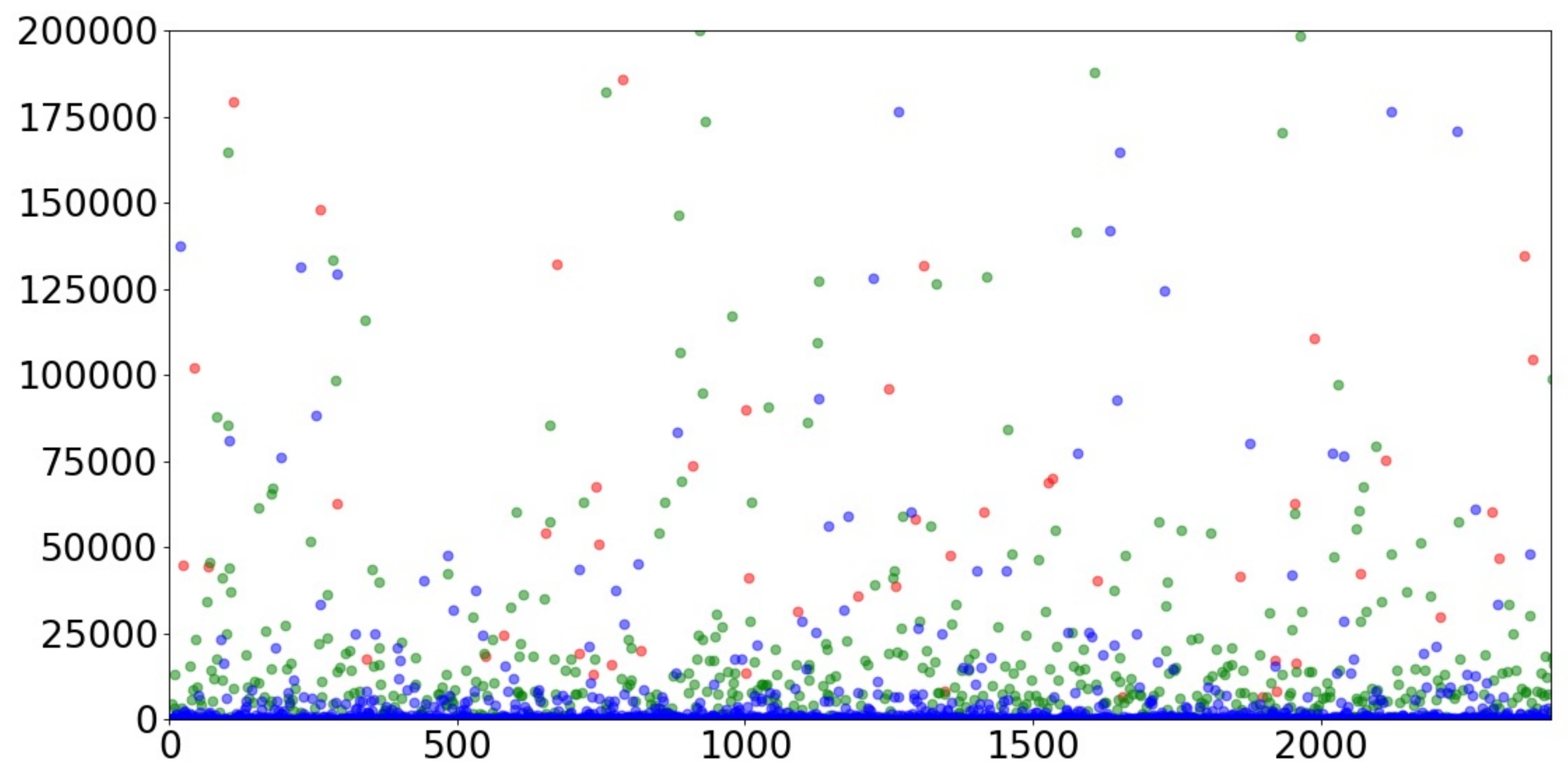}
\caption{Query runtimes in ms for the combined approach itemized by queries.
\normalfont
A red/green/blue dot denotes a query solved fastest with no/linear/exhaustive folding.}
\label{fig:query-times}
\end{figure}
Our experimental evaluation shows that, while both forms of \hornc clause folding improve the termination rate, the results of the different foldings are not directly comparable.
This is illustrated by \autoref{fig:query-times} which plots the (lowest) termination times for those queries that terminated within 200 seconds during the large-scale experiment. The different colors  indicate the kind of optimization (no/linear/exhaustive folding) that was fastest to solve the corresponding query.
The distribution of the dots shows that in the range of low query times (indicating more simply structured contracts) exhaustive folding (depicted in blue) dominates. However, for longer query times, often the linear folding (depicted in green) shows a better performance. One possible explanation for this is that for more complex contracts, the blow-up in rules created by the exhaustive folding exceeds the benefits of eliminating more predicates. Interestingly, for few instances, even applying no folding at all (indicated in red) led to the fastest termination. We can only explain this behavior by special heuristics used inside \zz{} that helped these particular cases. This shows the lacking predictability of \zz{} and thereby motivates the necessity of high-level tools like \horst{} that allow users to easily combine different optimizations in order to obtain results reliably.

\section{Related Work \& Discussion}
\label{sec:related-work}

In the last years there has been plenty of work on the automatic analysis of Ethereum smart contracts. 
These works can be classified in dynamic and static analysis approaches: 
An example of a dynamic approach is the work by Grossmann et al.~\cite{grossman2017online} which studies effectively callback freedom, a property characterizing the absence of reentrancy bugs, and provides a dynamic detection algorithm for it. Besides that the authors prove that statically showing a smart contract to be effectively callback free is indeed undecidable. 
This work serves as a motivation why the correct and precise static analysis of smart contracts with respect to relevant security properties is  challenging and requires the usage of suitable (sound) abstractions in order to be feasible, or even possible.

As a consequence, most practical static analysis tools so far focused on the heuristic detection of certain (classes of) bugs. These works do not strive for any theoretical guarantees nor do they aim for giving formal (and semantic) characterizations of security properties that the analysis targets. Important representatives of such bug-finding tools are the static analyzer Oyente~\cite{luu2016making} which was the first static analyzer for Ethereum smart contracts and the tool Osiris~\cite{torres2018osiris} which focuses on the detection of integer bugs.
Since the aim and scope of these tools differ substantially from the ones of \ethor{}, we omit a more detailed discussion and refer the interested reader to recent surveys~\cite{GMS::CAV18, di2019survey,liu2019survey} for more details.

In contrast to pure bug-finding, some recent works target a sound and automated static analysis of smart contract security properties. In particular the tools Securify~\cite{securify}, ZEUS~\cite{kalra2018zeus}, EtherTrust~\cite{GMS::CAV18}, and NeuCheck~\cite{lu2019neucheck} make such soundness claims. 

Securify implements a dependency analysis based on the reconstructed control-flow graph of contract bytecode and expresses some generic security properties in terms of these dependencies. The paper claims that the dependency patterns which they provide are either sufficient for the satisfaction (compliance patterns) or the violation (violation patterns) of a property. However, no proofs for the correctness of the control-flow-graph transformation, the soundness of the dependency analysis itself, or the relationship between the security patterns and the properties are provided. As a consequence we could empirically show that Securify's algorithm for control-flow reconstruction is unsound and give counterexamples for $13$ out of $17$ patterns (which partly also indicate a flaw in the dependency analysis itself). In~\autoref{app:soundness-issues} we give a detailed account of these issues.

NeuCheck is a tool for analysing Solidity smart contracts by searching for patterns in the contract syntax graph. The soundness claim of the work is neither substantiated by a soundness statement, nor proof. Also no semantics of Solidity is given and no (formal) security properties are formulated.
This lack of formalism makes it hard to validate any soundness claim. Despite the missing formal connections, the given patterns are clearly of syntactic nature and can be argued not to match the intuitive properties given throughout the paper which makes NeuCheck rather a bug-finding and style-checking tool. For more details we refer to~\autoref{app:soundness-issues}.

Similar to NeuCheck, ZEUS analyzes smart contracts written in Solidity. To this end, it transforms Solidity smart contracts into an abstract intermediate language and later into LLVM bitcode which allows for leveraging existing symbolic model checkers. 
The code transformations are claimed to be semantics preserving which however has already been refuted by~\cite{GMS::CAV18}. Additionally, the analyzed security properties are neither formally defined nor are they translated for model checking in a streamlined fashion: while some of them are compiled to assertions, other properties require additional code transformations which we show to be flawed in~\autoref{app:soundness-issues}. Empirical evidence for the unsoundness of ZEUS has been reported by~\cite{torres2018osiris} and is emphasized by the empirical evaluation in~\autoref{sec:eval}.

The work presented in~\cite{GMS::CAV18} surveys different approaches to static analysis and aims at illustrating design choices and challenges in sound static analysis. The work also discusses EtherTrust, a first proof of concept for a reachability analysis based on \hornc{} clauses, which however is still preliminary and exhibits soundness issues in its abstraction as discussed~\autoref{app:soundness-issues}.

For avoiding the pitfalls leading to unsoundness in the presented works, \ethor{} follows a principled design approach:
Starting from the formal EVM semantics defined in~\cite{GMS::POST18}, it formulates an abstract semantics in the specification language \horst{} which is proven sound with respect to the concrete semantics, hence covering all particularities of the EVM bytecode language. 
Based on this abstract semantic specification, a streamlined compilation process creates an SMT-encoding which is again systematically tested for soundness against the official test suite to minimize the effect of implementation bugs. The challenge of sound control flow reconstruction is solved by basing a corresponding preanalysis on a proper relaxation of the provably sound abstract semantics in the \souffle{} format, ensuring that the original soundness guarantees are inherited. For a more robust development, it is planned to also streamline this process in the future by making the \horst{} compiler support \souffle{} as additional output format for a restricted \hornc{} clause fragment. 
For providing end-to-end guarantees of the resulting static analyzer, we do not only ensure the soundness of the core analysis by proofs and testing, but also give provably sound approximations for relevant formalized semantic security properties suitable for encoding in the analysis framework.

\section{Conclusion}
\label{sec:conclusion}
We presented \ethor{}, the first automated tool implementing a sound static analysis technique for EVM bytecode, showing how to abstract the semantics of EVM bytecode into a set of \hornc clauses and to express security as well as functional properties in terms of reachability  queries, which are solved using \zz{}. In order to ensure the long-term maintenance of the static analyzer and facilitate future refinements, we designed \horst{}, a development framework for \hornc-clause based static analysis tools, which given a high-level specification of \hornc clauses automatically generates an optimized implementation in the \smtlib{} format. We successfully evaluated \ethor{} against the official Ethereum test suite to gain further confidence in our implementation and conducted a large-scale evaluation, demonstrating the practicality of our approach. Within a large-scale experiment we compared \ethor{} to the state-of-the-art analysis tool ZEUS, demonstrating that \ethor{} surpasses ZEUS in terms of overall performance (as quantified by the F-measure).

This work opens up several interesting research directions. For instance, we plan to extend our analysis as well as~\horst{} to relational properties, since some interesting security properties for smart contracts can be defined in terms of 2-safety properties~\cite{GMS::POST18}. Furthermore, we intend to further refine the analysis in order to enhance its precision, e.g., by extending the approach to a multi-contract setting, introducing abstractions for calls that approximate the account's persistent storage and local memory after calling more accurately.
Furthermore, we plan to significantly extend the scope of \horst{}. First, we intend to  make the specification of the static analysis accessible to proof assistants in order to mechanize soundness proofs. Furthermore, we intend to explore the automated generation of static analysis patterns from the specification of the concrete semantics, in order to further reduce the domain knowledge required in the design of static analyzers.

\begin{acks}
This work has been partially supported by the the European Research Council (ERC) under the European Union's Horizon 2020 research (grant agreement 771527-BROWSEC); by the Austrian Science Fund (FWF) through the projects PROFET (grant agreement P31621) and the project W1255-N23; by the Austrian Research Promotion Agency (FFG) through the Bridge-1 project PR4DLT (grant agreement 13808694) and the COMET K1 SBA; and by the Internet Foundation Austria (IPA) through the netidee project EtherTrust (Call 12, project 2158).
\end{acks}

\bibliographystyle{ACM-Reference-Format}
\bibliography{biblio.bib}

\section*{Appendix}
\label{appendix}
\renewcommand{\thesection}{\Alph{section}}
\setcounter{section}{0}
\renewcommand\theHsection{appendix.\arabic{section}}

The appendix is structured as follows: In~\autoref{app:horst} we overview the analysis specification language \horst{}. In~\autoref{app:theory} we make the theoretical foundations of our work explicit, in particular we give the soundness proof of our analysis. \autoref{app:security-properties} gives details on how the security properties discussed in the paper are implemented in \ethor{} using \horst{}. Finally, in~\autoref{app:soundness-issues} we discuss the soundness issues in related works and give concrete counterexamples for highlighting soundness flaws in existing static analyzers.

\section{\horst{}}
This section gives an introduction to the newly developed language \horst{} that allows for the high-level specification of Horn-clause based static analyses. We will first give a short primer that illustrates the main functionality of \horst{}, followed by a more detailed discussion of \horst{}'s language features.
\label{app:horst}
\subsection{\horst{} by Example}
\label{sec:appendix-horst}

For illustrating the features of \horst{} we show how to express a general rule for binary stack operations, subsuming the rule for addition presented in~\autoref{sec:stat}.
\begin{figure*}[!t]
\lstset{escapechar=$,language=HoRSt, numbers=left,xleftmargin=2em,frame=lines,framexleftmargin=1.5em}
\begin{lstlisting}
datatype AbsDom := @T | @V<int>; // Abstract Domain
datatype Opcode := @STOP | @ADD | ... | @INVALID | @SELFDESTRUCT // opcodes (shortened)

pred MState{int*int}: int * array<AbsDom> * array<AbsDom> * array<AbsDom> * bool;

op absadd(a: AbsDom, b: AbsDom): AbsDom := match (a, b) with | (@V(x), @V(y)) => @V((x + y) mod MAX) | _ => @T;
op binOp(c: Opcode, x: AbsDom, y: AbsDom): AbsDom := match c with | @ADD => absadd(x, y) | ... | _ => @T;

sel ids: unit -> [int]; // contracts to be analyzed
sel binOps: unit -> [int]; // binary stack operations
sel pcsForIdAndOpcode: int * int -> [int]; // program counters at which a specific opcode occurs in a specific contract
sel argumentsTwoForIdAndPc: int * int -> [int * int]; // results from the preanalysis for a given contract and pc

op tryConcrete{!c:int}(val:AbsDom): AbsDom := (!c = ~1) ? (val) : (@V(!c));

rule opBin :=
  for ($!op$: int) in binOps(),
      (!id: int) in ids(),
      (!pc: int) in pcsForIdAndOpcode(!id, $!op$),
      (!a:int, !b: int) in argumentsTwoForIdAndPc(!id, !pc)
   clause [?x: AbsDom, ?y:AbsDom, ?size: int, ?sa: array<AbsDom>, ?mem: array<AbsDom>, ?stor: array<AbsDom>, ?cl: bool]
     MState{!id, !pc}(?size, ?sa, ?mem, ?stor, ?cl), ?size > 1,
     ?x = tryConcrete{!a} (select ?sa (?size -1)),
     ?y = tryConcrete{!b} (select ?sa (?size -2))
     => MState{!id, !pc +1}(?size -1, store ?sa (?size -2) (binOp(intToOpCode($!op$), ?x,?y)), ?mem, ?stor, ?cl);
\end{lstlisting}
\caption{\horst{} rule describing the abstract semantics of local binary stack operations.}
\label{fig:opbinop}
\end{figure*}
\autoref{fig:opbinop} shows an excerpt of the \horst{}-specification of the presented static analysis. The abstract domain of the analysis is realized by the definition of the abstract datatype \lstinline|AbsDom|. Predicate signatures can be specified by corresponding predicate declarations as done for the case of the \lstinline|MState| predicate. \horst{} allows for parametrizing predicates and thereby specifying whole predicate families: The \lstinline|MState| predicate is parametrized by two integer values (as specified in the curly braces) that will intuitively correspond to the contract's identifier and the program counter whose state it is approximating. The arguments of the \lstinline|MState| predicate family reflect exactly those specified in~\autoref{sec:stat}.

To facilitate modular specifications, \horst{} supports non-recursive operations over arbitrary types, such as \lstinline|absadd| which implements abstract addition. In the given example, we show the flexibility of \horst{} by presenting a single rule template for generating rules for all binary stack operations. To this end, we define a function \lstinline|binOp| that given an opcode \lstinline|c| and two integer arguments applies to them the binary operation corresponding to the opcode. This function is then leveraged in the rule template \lstinline|opBin|.
Rule templates serve for generating the abstract semantics  given in the form of Horn clauses. As in our case the abstract semantics is specified as a function on a concrete contract, the generation of horn clauses in \horst{} needs to be linked to a concrete contract bytecode. In order to account for that in a generic fashion given that \horst{} cannot support facilities for reading files or parsing bytecodes, \horst{} provides an interface for interacting with custom relations generated by \java{} code. This interface is specified upfront by so called \emph{selector functions} (introduced with the key word \lstinline|sel|) which are declared, but not defined in the \horst{} specification. In the given example, we declare selector functions for accessing the identifiers of the contracts to be analyzed (\lstinline|ids|), the set of binary operations (\lstinline|binOps|) and for the program counters in a contract that hold opcodes of a specific type (\lstinline|pcForIdAndOpcode|). In addition to that, selector functions also allow for more advanced functionalities such as incorporating the results of a pre-analysis in an elegant fashion: To this end, we declare the selector function \lstinline|argumentsTwoForIdAndPc| that returns arguments to the operation that could be statically pre-computed (returning $-1$ in case of failure).
For generating Horn clauses, we can parametrize the rule over the cross product of the result of (nested) selector function applications as done in for the \lstinline|opBin| rule. This then exactly generates Horn clauses abstracting the behavior of a binary stack operation as discussed in \autoref{sec:stat}: A stack size check is performed, the two arguments are selected from the stack and finally the \lstinline|MState| predicate at the next program counter is implied with an updated stack having the operation's result as top element. The only derivation occurs due to the consideration of the pre-analysis: the operation \lstinline|tryConcrete| tries to access the statically precomputed argument, and only in case its absence performs the (more expensive) stack access. This step  however is not a necessity, but just illustrates how the interplay between different stages of a static analysis can be implemented for boosting the performance. 

\subsection{\horst{} in Detail}

In the following, we present a short overview of the features of \horst{}.

\subsubsection{Types and Operations}
For specifying the super domain $\absdoms$ of the abstraction, \horst{} provides
in addition to the primitive types \textsc{Boolean} and \textsc{Integer},
non-recursive sum types and arrays over all types. The type of abstract values $\absdom$ used in \autoref{sec:stat}
that consists of the unknown value $\top$ and concrete integer values,
can be defined as follows:
\lstset{escapechar=,language=HoRSt, numbers=left,xleftmargin=2em,frame=lines,framexleftmargin=1.5em}
\begin{lstlisting}
datatype AbsDom = @T | @V<int>;
\end{lstlisting}
In addition, \horst{} allows us to define non-recursive operations over arbitrary types. These operations
are implemented as hygienic macros on the expression level. To work with sum
types, \horst{} provides match expressions.
This mechanism can, e.g., be used to define the abstract addition operation described in \autoref{subsubsec:addition} as follows:
\begin{lstlisting}
op absadd(a: AbsDom, b: AbsDom): AbsDom :=
  match (a, b) with
  | (@V(x), @V(y)) => @V(x + y) // for two concrete values, return sum
  | _              => @T;       // else return top
\end{lstlisting}

\subsubsection{Predicates}
The abstraction's predicate signature $\predsig$ is given in terms of predicate declarations. A predicate declaration introduces a predicate symbol that ranges over arguments of arbitrary types.
\horst{} supports a mechanism for declaring a whole family of predicates with the same argument types by allowing for the specification of compile-time constants that we will from now on call \emph{parameters}.
We illustrate the syntax of predicate declarations with the predicate  $\pnmstate{\pc}$ defined in \autoref{fig:pred-signature} that models an abstract execution state:
\begin{lstlisting}
pred MState{int}: int * array<AbsDom> * array<AbsDom> * array<AbsDom>  * bool;
\end{lstlisting}
The declared predicate has one parameter of type \texttt{int} and five arguments. The parameter represents the program counter $\pc$ and should be considered part of the predicate name.
The distinction between parameters and arguments is supported by \horst{} for performance reasons: different
parameter instantiations are compiled to different predicate names in the underlying
SMT representation which leads to speed-ups in practice and additionally facilitates the folding optimization
discussed in \autoref{subsubsec:folding}.
\subsubsection{Selector Functions}
\horst{} itself provides no facilities to read files, parse bytecode, etc.
Instead, these tasks are handled by \java{} code.
\horst{} interacts with this \java{} code by an upfront-specified interface which is implemented by so-called \emph{selector functions}.
The tasks performed by selector functions can be as easy as providing an
interval of integers or as complicated as precomputing the results of certain
bytecode operations, from a \horst{} perspective we only see the interface provided by \emph{selector function declarations} that associate selector function names with their type signature. Selector functions are restricted to take a fixed number of arguments of primitive types and to return a sequence of tuples of
primitive types.

Examples of selector function declarations are given below:
\begin{lstlisting}
sel interval: int -> [int]; // integers from 0 to (n-1)
sel pcsForOpcode: int -> [int]; // program counters for given opcode
sel pcsAndValuesForOpcode: int -> [int*int]; // program counters and precomputed values
\end{lstlisting}
In general, selector functions can be seen as the bridge between the analysis specification and the parts of the software stack responsible for preprocessing (parsing, etc.) real world smart contracts. For instance, as previously discussed, the predicate signature $\predsig_{\cstar}$, the abstraction function $\configabs{}_{\cstar}$ as well as the abstract semantics $\delta(\cstar)$ are dependent on the concrete contract $\cstar$ under analysis. Selector functions allow us to implement such a parametrization (e.g., iterating over opcode sequences in order to generate rules in $\delta(\cstar)$ according to the opcode at each program counter). 

The separation of concerns introduced by selector functions helps to keep the
\horst{} specifications declarative while the technical details of providing the
actual values can be tested by unit tests.

\subsubsection{Rules}
The fundamental abstraction of \horst{} is the concept of \emph{rule}, which essentially describes a collection of \hornc clauses. It therefore can be seen as the mechanism for specifying the abstract semantics $\delta(\cstar)$.
 A rule is either a
singleton rule that is just instantiated once or may act as a template for
arbitrarily many instantiations -- hence describing a family of rules.
The second case is enabled by the use of  selector
functions which provide the sequence that the rule family ranges over.
More technically, for each tuple returned by a selector function the parameters of the rule template will be
instantiated according to the tuple values.

The rule shown in \autoref{fig:opmstore} for example will be instantiated for all program counters $\texttt{!pc}$ at which $\cstar$ holds an $\MSTORE$ instruction.
The sequence of these program counters is provided by the selector function \texttt{pcsForOpcode} that maps opcodes to their corresponding set of occurrences (identified by program counter) in $\cstar$.

\begin{figure*}[!t]
\begin{lstlisting}
op valToMemWord (v: AbsDom, mem: array<AbsDom>, o: int): array<AbsDom> :=
  for (!a: int) in interval(32): x: array<AbsDom> -> store x (o + !a) (absExtractByteL{!a}(v)), mem;
op isConcrete(a: AbsDom): bool := match a with | @T => false | _ => true;
op extractConcrete(a: AbsDom): int := match a with | @V(x) => x | _ => 0;

rule opMstore :=
  for (!id: int) in ids(),
      (!pc:int) in pcsForIdAndOpcode(!id, MSTORE),
      (!p: int, !v: int) in argumentsTwoForIdAndPc(!id, !pc)
    clause [?size: int, ?sa: array<AbsDom>, ?mem: array<AbsDom>, ?stor: array<AbsDom>, ?cl: bool, ?offset: AbsDom, ?p: int, ?v: AbsDom]
      MState{!id,!pc}(?size, ?sa, ?mem, ?stor, ?cl), ?size > 1,
      !p != ~1,
      ?v = tryConcrete{!v}(select ?sa (?size -2))
      => MState{!id, !pc +1}(?size - 2, ?sa, writeWord{!p}(?v, ?mem), ?stor, ?cl),
    clause [?size: int, ?sa: array<AbsDom>, ?mem: array<AbsDom>, ?stor: array<AbsDom>, ?cl: bool, ?pos: AbsDom,
            ?v: AbsDom, ?memn: array<AbsDom>]
      MState{!id,!pc}(?size, ?sa, ?mem, ?stor, ?cl), ?size > 1,
      !p = ~1,
      ?pos = select ?sa (?size -1),
      ?v = tryConcrete{!v}(select ?sa (?size -2)),
      ?memn = (isConcrete(?pos)) ? (writeWordEven(extractConcrete(?pos), ?v, ?mem)) : ([@T])
      => MState{!id, !pc +1}(?size - 2, ?sa, ?memn, ?stor, ?cl);
\end{lstlisting}
\caption{\horst{} rule describing the abstract semantics of the local memory write operation $\instabs{MSTORE}_\pc$ }
\label{fig:opmstore}
\end{figure*}

Within the body of rules we can define (optionally hygienic) macros that we can
use in the subsequent \emph{clauses} of the rule.
The clauses themselves (declared with keyword \texttt{clause}), describe a \hornc clause consisting of a list of premises and a conclusion ranging over free variables which need to be explicitly declared upfront.
Premises are lists of predicate applications and boolean \horst{} expressions while the conclusion may only consist of a single predicate application.
The example in \autoref{fig:opmstore} defines three clauses that exactly correspond to the \hornc clauses defined for $\instabs{\MSTORE}_\pc$ in \autoref{fig:absrules}.

\subsubsection{Sum Expressions}
Selector functions can not only be used to generate rules, but can also be used at the expression level.
So-called sum expressions exist in two different shapes: in the simple case (shown later in \autoref{fig:automated-testing}), predefined associative operations (addition, multiplication, disjunction and conjunction) are used to join expressions that may make use of the values returned by the selector function;
the generalized case can be seen in line $2$ of \autoref{fig:opmstore}.
The operation \lstinline|valToMemWord| updates 32 consecutive memory cells of \lstinline|mem| with fractions of the value \lstinline|v| starting from position \lstinline|o| --- \lstinline|mem| is the start value, \lstinline|store x (o + !a) (absExtractByteL{!a}(v))| is the iterated expression (\lstinline|x| acts as a placeholder for the last iteration step's result).

\subsubsection{Queries}
In order to check for reachability of abstract configurations, \horst{} allows for the specification of (reachability) queries that can also be generated from selector functions. The query shown in \autoref{fig:reentrancy-query} for instance, checks for reentrancy by checking if
any \CALL{} instruction is reachable with call level $\cl = 1$ (here encoded as bool). It therefore is an implementation of the reachability property introduced in \autoref{subsec:reachability}.

Note that if there is a notion of an expected outcome, we can define queries with the keyword \lstinline|test| as seen in \autoref{fig:automated-testing}.

\begin{figure}[htb]
\lstset{escapechar=,language=HoRSt, numbers=left,xleftmargin=2em,frame=lines,framexleftmargin=1.5em}
\begin{lstlisting}
query reentrancyCall
  for (!id: int) in ids(),
      (!pc:int) in pcsForIdAndOpcode(!id, CALL)
    [?sa: array<AbsDom>, ?mem: array<AbsDom>,
     ?stor: array<AbsDom>, ?size:int]
      MState{!id, !pc}(?size, ?sa, ?mem, ?stor, true);
\end{lstlisting}
\caption{\horst{}-query for reeentrancy.}
\label{fig:reentrancy-query}
\end{figure}

\section{Theoretical Foundations of \ethor{}}
In this section, we provide details on the theoretical foundations of \ethor{}. We start by formally characterizing the notion of Horn-clause based abstractions as they can be implemented in \horst{} and relate this concept to the general framework of abstract interpretation. Next, we provide missing details on the definition of the static analysis underlying \ethor{} and conclude with a detailed proof sketch of the soundness statement for this analysis.
\label{app:theory}

\subsection{Horn-clause based abstractions}
\label{sec:appendix-key}
In this section, we more formally characterize the aim and scope of this work, as well as the kind of static analyses that are realizable by \horst{}.
Generally, we focus on the reachability analysis of programs with a small-step semantics, which we over-approximate by an abstract program semantics based on \hornc clauses.
More formally, we will assume a program's small-step semantics to be a binary relation $\smalls$ over program configurations $\config \in \configs$. A \hornc-clause based abstraction for such a small-step semantics $\smalls$ is then fully specified by a tuple ${(\absdoms, \predsig, \configabs{\predsig}, \absts)}$ where $\predsig$ defines the signature of predicates with arguments ranging over (partially) ordered subsets of $\absdoms$. For a given a predicate signature $\predsig$, an abstraction function $\configabs{\predsig}: \configs \to \absconfigs$ maps concrete program configurations $\config \in \configs$ to abstract program configurations $\absconfig \in \absconfigs$ consisting of instances of predicates in $\predsig$.

Formally, a predicate signature $\predsig \in \prednames \nrightarrow \prod (\setof{\absdoms} \times (\setof{\absdoms} \times \setof{\absdoms}))$ is a partial function from predicate names $\prednames$ to their argument types (formally written as a product over the subsets of some abstract superdomain $\absdoms$, equipped with a corresponding order). We require for all $n \in \prednames$ that $(D, \leq) \in \predsig(n)$ that $(D, \leq)$ forms a partially ordered set.
Correspondingly, the set of abstract configurations $\absconfigs_{\predsig}$ over $\predsig$ can be defined as $\setof{\{ n(\vec{v}) ~|~ n \in \prednames ~\land~ \forall i \in \{1, \dots, \size{\predsig(n)}\}.\,  \proj{i}{\predsig(n)} = (D, \leq) \implies \proj{i}{\vec{v}} \in D \}}$ where $\proj{i}{\cdot}$ denotes the usual projection operator.
The abstraction of a small-step semantics is then a set of constrained \hornc clauses $\abss \subseteq \hclauses{\predsig}$ that approximates the small-step execution rules.

A constrained horn clause is a first order formula of the form
\begin{align*}
\hc{X}{\constraints}{\premises}{\conclusion}
\end{align*}
Where $X \subseteq \vars \times \setof{\absdoms}$ is a (functional) set of typed variables, and  $\constraints$ is a set of quantifier free constraints over the variables in $X$.
Conclusions $\conclusion$ are predicate applications $n(\vec{z}) \in P_X  := \{n(\vec{x}) ~|~ \size{\vec{x}} = \size{\predsig(n)} ~ \land~ \forall i \in \{1, \dots, \size{\vec{x}} \}.\, \proj{i}{\vec{x}} = x \land \proj{i}{\predsig(n)} = (D, \leq) \implies (x, D) \in X \}$ over variables in $X$ that respect the variable type.
Correspondingly, the premises $\premises \subseteq P_X$, are a set of predicate applications over variables in $X$.

We lift the suborders of $\predsig$ to an order on abstract configurations $\absconfig_1, \absconfig_2 \in \absconfigs_\predsig$ as follows:
\par
\nobreak
{\small
\noindent
\begin{align*}
n_1(\vec{t_1}) \ord{p} n_2(\vec{t_2})
&:=  n_1 = n_2  \\
& ~\land~ \forall i \in \{1, \dots, \tsize{\vec{t_1}} \}. ~ \proj{i}{\vec{t_1}} \ord{n_1, i} \proj{i}{\vec{t_2}}
\\
\text{given $\proj{i}{\predsig(n)} = (D_{n,i}, \ord{n, i}) $}
\\
\absconfig_1 \leq \absconfig_2
&:= \forall p_1 \in \absconfig_1.~ \exists p_2 \in \absconfig_2.~ p_1 \ord{p} p_2
\end{align*}
}%
Finally, we introduce the notion of soundness for a \hornc-clause based abstraction.

\begin{definition}
\label{def:soundness}
A \hornc-clause based abstraction ${(\absdoms, \predsig, \configabs{\predsig}, \absts)}$ \emph{soundly approximates} a small-step semantics $\smalls$ if
\par
\nobreak
{\small
\noindent
\begin{align}
\forall (\config, \config') \in \smalls^*.~
&\forall \absconfig.~ \configabs{\predsig} (\config) \leq \absconfig \nonumber \\
&\Rightarrow
\exists \absconfig'.~ \absconfig, \absts \derives \absconfig' \ ~\land~ \configabs{\predsig}(\config') \leq \absconfig' \label{eqn:soundness}
\end{align}
}%
\end{definition}

This statement requires that, whenever a concrete configuration $\config'$ is reachable from configuration $\config$ (meaning that $(\config, \config')$ is contained in the reflexive and transitive closure of $\smalls$, denoted as $\smalls^*$), it shall hold that from all abstractions $\absconfig$ of $\config$, the \hornc clause abstraction allows us to logically derive ($\derives$) a valid abstraction $\absconfig'$ of $\config'$. Note that $\configabs{\predsig}$ intuitively yields the most concrete abstraction of a configuration, hence to make the property hold for all possible abstractions of a configuration, we strengthen the property to hold for all abstractions that are more abstract than $\configabs{\predsig}(c)$.
The soundness theorem implies  that whenever we can show that from some abstraction $\absconfig$ of a configuration $\config$ there is no abstract configuration $\absconfig'$ derivable such that $\absconfig'$ abstracts $\config'$, then $\config'$ is not reachable from $\config$. Consequently, if it is possible to enumerate all abstractions of $c'$, checking non-derivability (as it is supported by the fixedpoint engines of modern SMT solvers) gives us a procedure for proving unreachability of program configurations.

\subsubsection{Relation to abstract interpretation}
\label{subsubsec:abstract-interpretation}
It is possible to phrase the previous characterization in terms of classical abstract interpretation notions. More precisely, we can define a Galois connection $(\alpha, \gamma)$ between sets of concrete configurations $\setof{\configs}$ (ordered by $\subseteq$) and abstract configurations $\absconfigs$ (ordered by $\ord{}$). To this end, we lift the abstraction function $\alpha$ to subsets in a canonical fashion:

\begin{align}
\alpha(C) \define \bigcup_{c \in C}{\alpha(c)} \label{eqn:alpha-lifted}
\end{align}

Next, we define the the concretization function with the help of $\alpha$:
\begin{align*}
\gamma(\absconfig) \define \{ c \in \configs ~|~ \alpha(c) \ord{} \absconfig \}
\end{align*}

\begin{lemma}
The pair of functions $(\alpha, \gamma)$ forms a Galois connection between $(\setof{\configs}, \subseteq)$ and $(\absconfigs, \ord{})$.
\end{lemma}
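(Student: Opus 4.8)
The plan is to verify directly the defining adjunction property of a Galois connection and then note that monotonicity of both maps falls out for free. Concretely, I would prove that for every $C \in \setof{\configs}$ and every abstract configuration $\absconfig$,
\[
\alpha(C) \leq \absconfig \quad\Longleftrightarrow\quad C \subseteq \gamma(\absconfig).
\]
Two preliminary remarks set the stage. First, each component order $\ord{n,i}$ is a partial order by assumption, so the instance order $\ord{p}$ is as well, but the lifted order $\leq$ on $\absconfigs$ is the Hoare (lower) ordering on sets of instances, which is only a \emph{preorder}; accordingly ``Galois connection'' is to be read between preordered sets (equivalently, between the posets obtained by quotienting out the induced equivalence). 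This is harmless, because the argument below invokes only reflexivity and transitivity of $\subseteq$ and of $\leq$. Second, $\alpha$ is well defined as a map $\setof{\configs}\to\absconfigs$: since every $\alpha(c)$ is a subset of the fixed set of well-typed predicate instances and abstract configurations are by definition arbitrary such subsets, the (possibly infinite) union $\alpha(C)=\bigcup_{c\in C}\alpha(c)$ is again an abstract configuration.

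The core step is the auxiliary equivalence
\[
\alpha(C) \leq \absconfig \quad\Longleftrightarrow\quad \forall c\in C.\ \alpha(c)\leq \absconfig .
\]
For ``$\Rightarrow$'', fix $c\in C$; from $\alpha(c)\subseteq\alpha(C)$ we get $\alpha(c)\leq\alpha(C)$ (every instance dominates itself), and transitivity with $\alpha(C)\leq\absconfig$ yields $\alpha(c)\leq\absconfig$. For ``$\Leftarrow$'', let $p$ be any predicate instance in $\alpha(C)$; by definition of the union there is some $c\in C$ with $p\in\alpha(c)$, and $\alpha(c)\leq\absconfig$ supplies an instance $p'\in\absconfig$ with $p\ord{p}p'$; since $p$ was arbitrary, $\alpha(C)\leq\absconfig$. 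Composing this with the definition $c\in\gamma(\absconfig)\iff\alpha(c)\leq\absconfig$ gives
\begin{align*}
\alpha(C) \leq \absconfig
&\iff \bigl(\forall c\in C.\ \alpha(c)\leq \absconfig\bigr) \\
&\iff \bigl(\forall c\in C.\ c\in\gamma(\absconfig)\bigr) \\
&\iff C\subseteq\gamma(\absconfig),
\end{align*}
which is exactly the adjunction. Monotonicity is then immediate: $C_1\subseteq C_2$ gives $\alpha(C_1)=\bigcup_{c\in C_1}\alpha(c)\subseteq\bigcup_{c\in C_2}\alpha(c)=\alpha(C_2)$ and set inclusion implies $\leq$, so $\alpha$ is monotone; and $\absconfig_1\leq\absconfig_2$ together with $c\in\gamma(\absconfig_1)$ gives $\alpha(c)\leq\absconfig_1\leq\absconfig_2$, i.e.\ $c\in\gamma(\absconfig_2)$, so $\gamma$ is monotone.

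Since the lemma is essentially bookkeeping, there is no deep obstacle; the two places that require care are the preorder subtlety noted above (one should not expect a unique ``best'' abstraction, only the adjunction), and getting the direction of the Hoare ordering right in the auxiliary equivalence — a careless argument might replace ``$\forall c$'' by ``$\exists c$'', or mishandle the cases of infinite or empty $C$ (for which $\alpha(\emptyset)=\emptyset$ and all equivalences degenerate correctly). A clean write-up amounts to spelling out these few lines.
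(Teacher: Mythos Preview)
Your proof is correct and follows essentially the same approach as the paper: both verify the adjunction $\alpha(C)\leq\absconfig \iff C\subseteq\gamma(\absconfig)$ directly by unfolding the definition of $\gamma$ and using that $\alpha(C)=\bigcup_{c\in C}\alpha(c)$. Your version simply factors the argument through the explicit intermediate equivalence $\alpha(C)\leq\absconfig \iff \forall c\in C.\ \alpha(c)\leq\absconfig$, whereas the paper inlines this step; the additional remarks on the preorder subtlety and on monotonicity are sound but go beyond what the paper records.
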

\begin{proof}
We need to show for all $C$ and $\absconfig$ that
\begin{align*}
\alpha(C) \ord{} \absconfig \Leftrightarrow C \subseteq \gamma(\absconfig)
\end{align*}
\begin{itemize}
\item[$\Rightarrow$:]
Let $\alpha(C) \ord{} \absconfig$. Further let $c \in C$. We show that $c \in \gamma(\absconfig)$. By the definition of $\gamma$ it is sufficient to show that $\alpha(c) \ord{} \absconfig$. Let $p_1 \in \alpha(c)$. We show that there is some $p_2 \in \absconfig$ such that $p_1 \ord{} p_2$. Since $p_1 \in \alpha(c)$ and $c \in C$, we know that $p_1 \in \alpha(C)$ and since $\alpha(C) \ord{} \absconfig$ also that there needs to be some $p_2 \in \absconfig$ such that $p_1 \ord{} p_2$ what concludes the proof.
\item[$\Leftarrow$:]
Let $C \subseteq \gamma(\absconfig)$. Further let $p_1 \in \alpha(C)$. We show that there is some $p_2 \in \absconfig$ such that $p_1 \ord{} p_2$. Since $p1 \in \alpha(C)$ there must be some $c \in C$ such that $p_1 \in \alpha(c)$. And from $C \subseteq \gamma(\absconfig)$ we can conclude that $c \in \gamma(\absconfig)$ which implies that $\alpha(c) \ord{} \absconfig$. Consequently there needs to be a $p_2 \in \absconfig$ such that $p_1 \ord{} p_2$ what concludes the proof.
\end{itemize}
\end{proof}

Now, we can define reachability on concrete configurations and derivability of abstract configurations as the least fixed points of step functions which describe a collecting semantics (with respect to some initial configuration).

\begin{align*}
\sstepfun{I}(C) \define  \{c' ~|~ \exists c \in C. \, (c, c') \in \smalls \}  \cup I
\end{align*}
\begin{align*}
\derivefun{\absconfig_I}(\absconfig) \define  \{p ~|~ \absconfig, \absts \derives p \} \cup \absconfig_I
\end{align*}

We obtain the following intuitive correspondences between the different characterizations:
\begin{align}
(\config, \config') \in \smalls^* &\Leftrightarrow c' \in \lfp{\sstepfun{\{c\}}} \label{eqn:lfp-concrete}\\
\absconfig, \absts \derives \absconfig' &\Leftrightarrow \absconfig' \subseteq \lfp{\derivefun{\absconfig}} \label{eqn:lfp-abstract}
\end{align}
where $\lfp{f}$ denotes the least fixed point of a function $f$.

To ensure that the corresponding least fixed points exists, we need to ensure that the domains $\setof{\configs}$ and $\absconfigs$ of the Galois connection form a complete lattice and that both $\sstepfun{I}$ and $\derivefun{\absconfig_I}$ are monotone.
While $\langle \setof{\configs}, \subseteq, \emptyset, \setof{\configs}, \cup, \cap \rangle$ is the canonical power set lattice,
we can easily show $\langle \absconfigs, \ord{}, \emptyset, \absconfig, \cup, \cap \rangle$ to also form a complete lattice as $\subseteq$ is a subrelation of $\ord{}$.
While it is trivial to show that $\sstepfun{I}$ is monotone, for $\derivefun{\absconfig_I}$ it becomes a proof obligation on $\absts$:
\begin{align}
\forall \absconfig ,\absconfig'. \, \absconfig \ord{} \absconfig' \land \absconfig, \absts \derives p \implies \exists p'.\,  p \ord{} p' \land \absconfig', \absts \derives p' \label{eqn:mon-local}
\end{align}

\sloppy
Using the step functions, we can characterize sound over-approximations as defined in \autoref{def:soundness} in an alternative fashion.
More precisely, we require our approximation to be a \emph{sound upper approximation}~\cite{cousot2004basic}.
\begin{lemma}
A \hornc-clause based abstraction ${(\absdoms, \{ \ord{n, i} \}_{(n, i)}, \predsig, \configabs{\predsig}, \absts)}$ \emph{soundly approximates} a small-step semantics $\smalls$ iff $\absts$ satisfies \autoref{eqn:mon-local} and for all $c \in \configs$ and all $\absconfig \geq \alpha(c)$
\begin{align*}
\alpha(\lfp{\sstepfun{\{c\}}}) \ord{} \lfp{\derivefun{\absconfig}}
\end{align*}
\end{lemma}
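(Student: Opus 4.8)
The plan is to deduce the lemma from the two fixed-point characterizations \autoref{eqn:lfp-concrete} and \autoref{eqn:lfp-abstract} already established. First I would observe that \autoref{eqn:mon-local} is precisely the statement that $\derivefun{\absconfig_I}$ is monotone; together with the (trivial) monotonicity of $\sstepfun{I}$ and the fact that both $\langle \setof{\configs}, \subseteq \rangle$ and $\langle \absconfigs, \ord{} \rangle$ are complete lattices, Knaster--Tarski makes every least fixed point occurring in the statement well-defined. Hence the ``$\absts$ satisfies \autoref{eqn:mon-local}'' conjunct plays the role of a well-definedness prerequisite, and the substance of the lemma is the equivalence, under it, between \autoref{eqn:soundness} (i.e.\ \autoref{def:soundness}) and the assertion that $\alpha(\lfp{\sstepfun{\{\config\}}}) \ord{} \lfp{\derivefun{\absconfig}}$ for every $\config \in \configs$ and every $\absconfig \geq \alpha(\config)$. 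I would prove the two implications separately.

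For the direction from \autoref{eqn:soundness} to the fixed-point inequality, I would fix $\config$ and $\absconfig \geq \alpha(\config)$ and unfold the definition of $\ord{}$ on abstract configurations: it suffices, given an arbitrary predicate instance $p_1 \in \alpha(\lfp{\sstepfun{\{\config\}}})$, to produce $p_2 \in \lfp{\derivefun{\absconfig}}$ with $p_1 \ord{p} p_2$. By the lifting \autoref{eqn:alpha-lifted} there is a $\config'$ with $p_1 \in \alpha(\config')$ and $\config' \in \lfp{\sstepfun{\{\config\}}}$, so $(\config, \config') \in \smalls^*$ by \autoref{eqn:lfp-concrete}. Instantiating \autoref{eqn:soundness} at the pair $(\config, \config')$ and at the chosen $\absconfig$ yields some $\absconfig'$ with $\absconfig, \absts \derives \absconfig'$ and $\alpha(\config') \leq \absconfig'$; by \autoref{eqn:lfp-abstract} this gives $\absconfig' \subseteq \lfp{\derivefun{\absconfig}}$, and from $p_1 \in \alpha(\config') \leq \absconfig'$ I obtain the desired $p_2 \in \absconfig' \subseteq \lfp{\derivefun{\absconfig}}$.

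For the converse, given $(\config, \config') \in \smalls^*$ and any $\absconfig \geq \alpha(\config)$, I would take the canonical witness $\absconfig' := \lfp{\derivefun{\absconfig}}$; then $\absconfig, \absts \derives \absconfig'$ holds by \autoref{eqn:lfp-abstract}. By \autoref{eqn:lfp-concrete} we have $\config' \in \lfp{\sstepfun{\{\config\}}}$, hence $\alpha(\config') \subseteq \alpha(\lfp{\sstepfun{\{\config\}}})$ by \autoref{eqn:alpha-lifted}; since $\subseteq$ is a subrelation of $\ord{}$ this yields $\alpha(\config') \ord{} \alpha(\lfp{\sstepfun{\{\config\}}})$, and combining with the fixed-point inequality and transitivity of $\ord{}$ gives $\alpha(\config') \ord{} \lfp{\derivefun{\absconfig}} = \absconfig'$, which re-establishes \autoref{eqn:soundness}.

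There is no deep obstacle once \autoref{eqn:lfp-concrete} and \autoref{eqn:lfp-abstract} are available; the care needed is twofold. The mildly technical point is the existence of the least fixed points: $\ord{}$ on $\absconfigs$ is only a preorder, so ``least fixed point'' must be read up to the induced equivalence, but since $\subseteq$ refines $\ord{}$ and set union also provides joins with respect to $\ord{}$, the lattice-theoretic argument for \autoref{eqn:lfp-abstract} and for the existence of $\lfp{\derivefun{\absconfig}}$ goes through unchanged. The conceptual point is to keep the quantification over abstractions aligned: \autoref{def:soundness} ranges over \emph{all} $\absconfig$ above $\alpha(\config)$, so the fixed-point inequality must likewise be stated for all such $\absconfig$; this is exactly what lets the converse instantiate the abstract derivation at the \emph{given} $\absconfig$ rather than only at $\alpha(\config)$.
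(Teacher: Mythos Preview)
Your proposal is correct and follows essentially the same argument as the paper: both directions unfold the order on abstract configurations and chain \autoref{eqn:lfp-concrete}, \autoref{eqn:lfp-abstract}, and \autoref{eqn:alpha-lifted} in the same way, with the converse taking $\absconfig' := \lfp{\derivefun{\absconfig}}$ as the witness. Your added remarks on the role of \autoref{eqn:mon-local} as a well-definedness prerequisite and on the preorder subtlety are helpful clarifications that the paper leaves implicit.
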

\begin{proof}
"$\Rightarrow$": Assume \autoref{eqn:soundness} and $f_1 \in \alpha(\lfp{\sstepfun{\{c\}}})$ for some fact $f_1$. We show that there exists some fact $f_2$ such that $f_2 \in \lfp{\derivefun{\absconfig}}$ and $f_1 \ord{} f_2$.
By \autoref{eqn:alpha-lifted}, we know that from $f_1 \in \alpha(\lfp{\sstepfun{\{c\}}})$ we can conclude that there exists some $c' \in \lfp{\sstepfun{\{c\}}}$ such that $f_1 \in \alpha(c')$. By \autoref{eqn:lfp-concrete}, we have that $(c, c') \in \smalls^*$ and hence by  \autoref{eqn:soundness} we can conclude that there exists some $\absconfig'$ such that $\absconfig, \absts \derives \absconfig'$ and $\alpha(c')\ord{} \absconfig'$. With  $f_1 \in \alpha(c')$ we get from this that there exists some $f_2 \in \absconfig'$ such that $f_1 \ord{} f_2$.
Since  $\absconfig, \absts \derives \absconfig'$, we get from \autoref{eqn:lfp-abstract} that $\absconfig' \subseteq \lfp{\derivefun{\absconfig}}$ and hence also $f_2 \in  \lfp{\derivefun{\absconfig}}$ which concludes the proof.

"$\Leftarrow$":
Assume $\alpha(\lfp{\sstepfun{\{c\}}}) \ord{} \lfp{\derivefun{\absconfig}}$ and let $(\config, \config') \in \smalls^*$ and $\alpha(\config) \ord{} \absconfig$. We show that there is some $\absconfig'$ such that $\absconfig,\absts \derives \absconfig'$ and $\alpha(c') \ord{} \absconfig'$.
By \autoref{eqn:lfp-concrete}, we get that $c' \in \lfp{\sstepfun{\{c\}}}$ and hence also $\alpha(c') \subseteq \alpha(\lfp{\sstepfun{\{c\}}})$ (by \autoref{eqn:alpha-lifted}). As $\alpha(\lfp{\sstepfun{\{c\}}}) \ord{} \lfp{\derivefun{\absconfig}}$ it follows that also $\alpha(c')  \ord{} \lfp{\derivefun{\absconfig}}$. Additionally, it follows from \autoref{eqn:lfp-abstract} immediately that $\absconfig,\absts \derives \lfp{\derivefun{\absconfig}}$. This closes our proof.
\end{proof}

Given that $\derivefun{\absconfig}$ is monotonic, $\alpha(\lfp{\sstepfun{\{c\}}}) \ord{} \lfp{\derivefun{\absconfig}}$ can be shown to be a consequence of the following one-step characterization:
\begin{align}
\alpha \compose \sstepfun{} \ord{} \derivefun{} \compose \alpha \label{eqn:approx}
\end{align}
(where $\sstepfun{} = \sstepfun{\emptyset}$ and $\derivefun{} = \derivefun{\emptyset}$).

This is as $\alpha \compose \sstepfun{} \ord{} \derivefun{} \compose \alpha$ implies for all $c \in \configs$ and all $\absconfig \geq \alpha(c)$ that $\alpha \compose \sstepfun{\{c\}} \ord{} \derivefun{\absconfig} \compose \alpha$ and by the fixed point transfer theorem~\cite{cousot2004basic} for Galois connections, this result can be lifted to least fixed points.
As a consequence for proving \autoref{theorem:soundness}, it is sufficient to show that \autoref{eqn:mon-local} and \autoref{eqn:approx} hold.
\subsection{Analysis Definition (continued)}
\label{sec:appendix-ana-def}
We overview additional details of the analysis definition introduced in \autoref{sec:stat}.

First, we formally define the orders on the abstract argument domains for the predicates defined in \autoref{fig:pred-signature}.

\begin{align*}
\ord{\absdom} &\define \{(\abs{a}, \abs{b}) ~|~ \abs{b} = \top \lor \abs{a} = \abs{b} \} \\
\ord{\NN} &\define \{(m, n) ~|~ m = n \} \\
\ord{\BB} &\define \{(a, b) ~|~ a = b \} \\
\ord{\NN \to \absdom} &\define \{(f, g) ~|~ \forall n \in \NN.~ f(n) \ord{\absdom} g(n) \} \\
\ord{\NN \times (\NN \to \absdom)} &\define
\{ ((m, f), (n, g)) ~|~  m = n ~\land~ \forall i < m.~ f(i) \ord{\absdom} g(i) \}
\end{align*}
We assume that the same orders apply to the same argument domains of different predicates.

Some of the partially ordered described by the argument domains and their corresponding order, have a supremum, as formally stated in the following lemma:
\begin{lemma}[Suprema of argument domains]
\label{lem:suprema}
The following statements hold:
\begin{itemize}
\item
$\forall \abs{a} \in \absdom.~ \abs{a} \ord{\absdom} \top$
\item
$\forall f \in \NN \to \absdom.~ f \ord{\NN \to \absdom} \fun{x}{\top}$
\end{itemize}
\end{lemma}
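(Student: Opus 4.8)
The plan is to unfold the two order definitions $\ord{\absdom}$ and $\ord{\NN \to \absdom}$ given at the beginning of \autoref{sec:appendix-ana-def} and to verify the required membership conditions directly; no induction or fixed-point machinery is needed, since both claims are immediate consequences of how the orders are defined.

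For the first bullet I would fix an arbitrary $\abs{a} \in \absdom$ and appeal to the definition $\ord{\absdom} \define \{(\abs{a}, \abs{b}) ~|~ \abs{b} = \top \lor \abs{a} = \abs{b}\}$: the pair $(\abs{a}, \top)$ belongs to $\ord{\absdom}$ because its right component is $\top$, so the first disjunct holds regardless of $\abs{a}$. Hence $\abs{a} \ord{\absdom} \top$. This also records that $\top$ is the top element of $\absdom$, consistent with the intended reading of $\top$ as the over-approximation of every natural number.

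For the second bullet I would fix an arbitrary $f \in \NN \to \absdom$ and set $g \define \fun{x}{\top}$, so that $g(n) = \top$ for every $n \in \NN$. By the definition $\ord{\NN \to \absdom} \define \{(f, g) ~|~ \forall n \in \NN.~ f(n) \ord{\absdom} g(n)\}$ it suffices to establish $f(n) \ord{\absdom} \top$ for all $n$, which is precisely the first bullet instantiated with $\abs{a} \define f(n)$. Therefore $f \ord{\NN \to \absdom} \fun{x}{\top}$.

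There is no genuine obstacle here: the only point requiring a little care is aligning the quantifier structure of $\ord{\NN \to \absdom}$ (a pointwise, universally quantified condition) with the per-point instance of the $\absdom$ case. The lemma is purely auxiliary — it supplies the suprema invoked later when arguing that components over-approximated to $\fun{x}{\top}$ by the unknown callee (as in the call rules of \autoref{fig:absrules}) indeed sit above every concrete value in the relevant order, which is what soundness of those abstractions ultimately rests on.
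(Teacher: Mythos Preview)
Your proposal is correct and is exactly the obvious argument from the definitions; the paper in fact states this lemma without proof, so your direct unfolding of $\ord{\absdom}$ and the pointwise lifting $\ord{\NN \to \absdom}$ is precisely what is intended.
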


\subsubsection{Abstract operations}
We formally define abstract operations on values from the abstract argument domains, starting with binary operations on natural numbers:
Let $\binop \in \NN \times  \NN \to \NN$ be a binary operation.
We define abstract binary operations as follows:
\begin{align*}
\absop{\cdot} &\in (\NN \times \NN \to \NN) \to \absdom \times \absdom \to \absdom\\
\absop{\binop}(\abs{x}, \abs{y}) &\define
\begin{cases}
\binop(\abs{x}, \abs{y}) &\abs{x}, \abs{y} \in \NN \\
\top & \text{otherwise}
\end{cases}
\end{align*}

Similarly, we can define abstract comparison operators. Let $\compop \in \NN \times \NN \to \BB$ be a comparison operation on natural numbers.
We define abstract comparison operations as follows:
\begin{align*}
\absop{\cdot} &\in (\NN \times \NN \to \BB) \to \absdom \times \absdom \to \BB\\
\absop{\compop}(\abs{x}, \abs{y}) &\define
\begin{cases}
\compop(\abs{x}, \abs{y}) &\abs{x}, \abs{y} \in \NN \\
1 & \text{otherwise}
\end{cases}
\end{align*}

Next, we define the abstract operations for memory access used in \autoref{fig:accessword} and \autoref{fig:absfun}.

First we define the function for extracting a specified fraction of an integer (interpreted as $32$ byte word)
\begin{align*}
\extract{\cdot}{\cdot}{\cdot} & \in \absdom \times \NN \times \NN \to \absdom \\
\extract{\abs{v}}{l}{r} &\define
\begin{cases}
\left \lfloor \frac{\abs{v}}{256^{31 - r}} \right \rfloor \bmod 256^{r-l+1} & l \leq r ~\land~ \abs{v} \in \NN  \\
\top & \text{otherwise}
\end{cases}
\end{align*}

Next we define the append function:
\begin{align*}
\aconcat{\cdot} &\in \absdom \times \absdom \times \NN \to \absdom \\
\abs{v} \aconcat{n} \abs{w} &\define
\begin{cases}
\abs{w}* 256^{n} + \abs{v} & \abs{v}, \abs{w} \in \NN \\
\top & \text{otherwise}
\end{cases}
\end{align*}


We focused here only those operations that were used in \autoref{sec:stat}. For a full account of all abstract operations, we refer the reader to our \horst{} specification in \autoref{app:horst}.

\subsection{Proof of soundness}
\label{sec:appendix-soundness}
\setlist[itemize,1]{leftmargin=\dimexpr 15pt}
\setlist[itemize,2]{leftmargin=\dimexpr 15pt}
\setlist[itemize,3]{leftmargin=\dimexpr 15pt}
\setlist[itemize,4]{leftmargin=\dimexpr 15pt}
\setlist[enumerate,1]{leftmargin=\dimexpr 15pt}
\setlist[enumerate,2]{leftmargin=\dimexpr 15pt}
\setlist[enumerate,3]{leftmargin=\dimexpr 15pt}
\setlist[enumerate,4]{leftmargin=\dimexpr 15pt}

For proving \autoref{theorem:soundness}, we will not make an immediate use of the proof strategy presented in \autoref{subsubsec:abstract-interpretation}.  Even though we proof monotonicity (\autoref{eqn:mon-local}) separately, since this facilitates the reasoning in the individual cases, we will in the end go for a direct proof of the statement in \autoref{theorem:soundness} proceeding by complete induction of the number of small-steps. The reason for that is that for proving our abstraction sound one-step reasoning is not sufficient as we need to argue about execution steps that lie further ahead (hence the use of complete induction).

For the sake of concise presentation, we will only present a proof sketch (featuring the most interesting and challenging cases) and state, but not prove, auxiliary lemmas.

\subsubsection{Auxiliary lemmas}
For reasoning about the soundness, we first need to state some general properties of the small-step execution and the shapes of callstacks that might appear during this execution. We omit the proofs for most of these properties as those are mostly straight-forward case distinctions and simple inductions.

The following lemmas summarize some general properties of callstack evolution during the execution.
The small-step semantics is designed such that the callstack records the execution state as at the point of calling. The corresponding states only get modified when returning from an internal transaction. In this case, modification is guaranteed, since the gas for the execution is subtracted. As a consequence, an unmodified (sub) callstack indicates that the execution of the same internal transaction is still executed.
More formally this is captured by the following lemma:

\begin{lemma}[Callstack preservation during execution]
\label{lem:callstack-preservation}
Let $(\transenv, \callstack)$ be a configuration such that $\ssteps{\transenv}{\concatstack{\callstackb}{\callstack}}{\concatstack{\callstackb'}{\callstack}}$. Then the following properties hold:
\begin{itemize}
\item if $\callstackb' = \nil$ then $\callstackb = \nil$
\item if $\callstackb = \nil$ and $\callstackb' \neq \nil$ then there are $\exstate \in \exstates$, $c \in \contracts$ such that $\sstep{\transenv}{\callstack}{\cons{\annotate{\exstate}{c}}{\callstack}}$ and $\ssteps{\transenv}{\cons{\annotate{\exstate}{c}}{\callstack}}{\concatstack{\callstackb'}{\callstack}}$.
\item if $\callstackb' \neq \nil$ and $\ssteps{\transenv}{\concatstack{\callstackb}{\callstack}}{\callstack'}$ and $\ssteps{\transenv}{\callstack'}{\concatstack{\callstackb'}{\callstack}}$ then there exists $\callstackb''$ such that $\size{\callstackb''} > 0$ and $\callstack' = \concatstack{\callstackb''}{\callstack}$
\end{itemize}
\end{lemma}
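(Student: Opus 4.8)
I would prove all three items by complete induction on the number of small-steps in the given derivation $\ssteps{\transenv}{\concatstack{\callstackb}{\callstack}}{\concatstack{\callstackb'}{\callstack}}$, driven by a case analysis on the small-step rules grouped by their effect on the callstack. Every rule of $\rightarrow$ falls into one of four classes: (i) a \emph{local} step rewrites only the top frame, turning a regular frame $s$ into some $s' \neq s$, since the program counter is advanced and the gas is strictly decreased (cf.\ the $\ADD$ rule); (ii) a \emph{transaction-initiating} step ($\CALL$, $\CALLCODE$, $\DELEGATECALL$, $\CREATE$) leaves the current callstack untouched and pushes exactly one freshly initialised regular frame $\annotate{\exstate}{c}$ on top, where $\exstate$ starts at program counter $0$ and $c$ is the callee contract; (iii) a \emph{halting/exceptional} step replaces the top regular frame by a terminal frame $\excstate$ or $\haltstatefull{\gstate}{\lgas}{d}{\transeffects}$; and (iv) when the top frame is terminal, a \emph{return} step pops it and \emph{modifies} the frame immediately beneath it ($s' \neq s$), since the gas consumed by the returning internal transaction is subtracted. (A terminal frame sitting over the empty stack is final and admits no step.)

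From this classification I would isolate one structural invariant --- call it \emph{frame freshness} --- that does all the work: because local and return steps strictly decrease the gas (resp.\ advance the program counter) of the frame they rewrite, and frames strictly below the currently active one are never inspected or rewritten until they are returned into, a frame-state that has once been rewritten never recurs. Consequently, for the fixed callstack $\callstack$, the property ``the bottom $\size{\callstack}$ frames of the current callstack are exactly $\callstack$'' is irreversible: once it fails it fails at every subsequent configuration, because the only step that can falsify it is a class-(iv) return that overwrites the top frame of $\callstack$ with a strictly-lower-gas copy, and that exact frame never comes back. This is precisely the paper's informal remark that an unmodified (sub-)callstack witnesses that the same internal transaction is still executing; the formal statement is the routine step induction using (i)--(iv), which I would package as an auxiliary claim and not spell out in full.

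With frame freshness in hand the three items become immediate. For the \emph{first} item: if $\callstackb' = \nil$ the run ends in $\callstack$, where the bottom-$\size{\callstack}$ property holds, and it also holds at the start $\concatstack{\callstackb}{\callstack}$, so by irreversibility it holds throughout; but if $\callstackb \neq \nil$ then the frame count must fall from more than $\size{\callstack}$ to $\size{\callstack}$ somewhere along the run, and such a drop is a class-(iv) return that falsifies the property --- a contradiction, hence $\callstackb = \nil$. For the \emph{third} item (read with $\callstackb \neq \nil$, the case needed in applications, where $\callstack'$ genuinely extends $\callstack$): the bottom-$\size{\callstack}$ property holds at $\concatstack{\callstackb}{\callstack}$ and at $\concatstack{\callstackb'}{\callstack}$ --- both have $\callstack$ as their bottom $\size{\callstack}$ frames --- hence at $\callstack'$; and the frame count at $\callstack'$ cannot be $\size{\callstack}$, for that would force $\callstack' = \callstack$, reachable from a start with more than $\size{\callstack}$ frames only through a property-falsifying class-(iv) return; so $\callstack' = \concatstack{\callstackb''}{\callstack}$ with $\size{\callstackb''} > 0$. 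For the \emph{second} item: $\callstackb = \nil$ means the run starts at $\callstack$; since it ends with $\size{\callstackb'} \geq 1$ extra frames, some step grows the stack, and by irreversibility no step may rewrite $\callstack$'s top frame before that (else the bottom-$\size{\callstack}$ property is lost forever, contradicting that it holds at the end); hence the first stack-growing step acts on $\callstack$ itself and, by (ii), is transaction-initiating, which gives $\sstep{\transenv}{\callstack}{\cons{\annotate{\exstate}{c}}{\callstack}}$ for a fresh callee frame $\exstate$ running some contract $c$, and the remaining steps give $\ssteps{\transenv}{\cons{\annotate{\exstate}{c}}{\callstack}}{\concatstack{\callstackb'}{\callstack}}$. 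The one place that needs real care --- and the main obstacle I expect --- is class (iv): one has to unfold the gas accounting of $\CALL$, $\CALLCODE$, $\DELEGATECALL$, and $\CREATE$ to confirm that returning from an internal transaction always leaves the caller frame strictly lower on gas (so that frame freshness genuinely applies), and to check that all four call variants together with the out-of-gas and exceptional-halt rules are subsumed uniformly by the classification (i)--(iv).
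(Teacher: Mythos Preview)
The paper omits this proof entirely, stating only that such lemmas are ``mostly straight-forward case distinctions and simple inductions'' and noting informally that an unmodified sub-callstack witnesses the same internal transaction because ``modification is guaranteed, since the gas for the execution is subtracted'' on return. Your proposal is a correct and clean formalization of exactly that hint: the four-way step classification is the advertised case distinction, and your frame-freshness invariant is precisely the paper's gas-subtraction remark packaged as the irreversibility of the bottom-$\size{\callstack}$-equals-$\callstack$ property, from which all three items drop out as you describe.

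Your flagging of the $\callstackb=\nil$ edge case in the third item is apt---taking zero steps gives the intermediate $\callstack'=\callstack$, which has $\callstackb''=\nil$, so the item as literally stated is too strong there---and you are right that the paper's only invocation of that item (in the main soundness induction, where $\callstackb=[\annotate{\exstate}{\cstar}]$) has $\callstackb\neq\nil$, so the restriction is harmless. The care you recommend around the class-(iv) gas accounting is the right place to spend effort; note that even before appealing to gas, the program-counter advance on return already suffices to guarantee the caller frame changes, which slightly lightens the unfolding you anticipate.
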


We introduce the notion of a \emph{call state} for characterizing those states that invoke internal transactions.
\begin{definition}[Call states]
\label{def:callstate}
A regular execution state $\exstate$ is a call state if
$\sstep{\transenv}{\cons{\exstate}{\callstack}}{\cons{\exstate'}{\cons{\exstate}{\callstack}}}$ for some $\transenv$, $\callstack$ and $\exstate'$.
\end{definition}
Intuitively, an execution state is a call state if it satisfies all preconditions for a transaction initiating instruction.

In a regular execution all elements of a callstack but its top element are call states.
\begin{lemma}
\label{lem:callstates-but-top}
Let $\ssteps{\transenv}{\cons{\exstate}{\callstack}}{\concatstack{\cons{\exstate'}{\callstack'}}{\callstack}}$, then every execution state $\exstate'' \in \callstack'$ is a call state.
\end{lemma}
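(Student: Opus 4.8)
The plan is to prove the statement by induction on the number $n$ of transitions in a derivation $\nsteps{\transenv}{\cons{\exstate}{\callstack}}{\concatstack{\cons{\exstate'}{\callstack'}}{\callstack}}{n}$. The base case $n = 0$ forces $\cons{\exstate'}{\callstack'} = \cons{\exstate}{\nil}$, hence $\callstack' = \nil$, and the claim holds vacuously. For the inductive step I would split a derivation of length $n+1$ into a length-$n$ prefix reaching some callstack $\callstack_n$, followed by a final step $\sstep{\transenv}{\callstack_n}{\concatstack{\cons{\exstate'}{\callstack'}}{\callstack}}$. The first thing to establish is that $\callstack_n$ still has $\callstack$ as a suffix, i.e. $\callstack_n = \concatstack{\callstackb''}{\callstack}$ for some nonempty $\callstackb''$: this is exactly the third item of \autoref{lem:callstack-preservation}, applied to $\ssteps{\transenv}{\cons{\exstate}{\callstack}}{\callstack_n}$ and $\ssteps{\transenv}{\callstack_n}{\concatstack{\cons{\exstate'}{\callstack'}}{\callstack}}$ (using that $\cons{\exstate'}{\callstack'}$ is nonempty). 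Writing $\callstackb'' = \cons{\exstate_n}{\callstackb_{\mathrm{mid}}}$, the induction hypothesis applied to the prefix then tells us that every state occurring in $\callstackb_{\mathrm{mid}}$ is a call state.

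It remains to analyze the last step $\sstep{\transenv}{\concatstack{\cons{\exstate_n}{\callstackb_{\mathrm{mid}}}}{\callstack}}{\concatstack{\cons{\exstate'}{\callstack'}}{\callstack}}$ by cases on the applicable small-step rule, using that every rule touches only the top callstack frame, or the top frame together with the one immediately below it. If the rule preserves the callstack length (all purely local rules, as well as the rules that turn the top regular frame into $\excstate$ or into a $\textit{HALT}$ frame), it leaves the segment below the top untouched, so $\callstack' = \callstackb_{\mathrm{mid}}$ and we are done by the induction hypothesis. If the rule increases the length by one — so $\exstate_n$ executes one of the transaction-initiating instructions and a fresh callee frame $\exstate'$ is pushed — then $\callstack' = \cons{\exstate_n}{\callstackb_{\mathrm{mid}}}$; the states of $\callstackb_{\mathrm{mid}}$ are call states by the induction hypothesis, and $\exstate_n$ is a regular state for which this very transition witnesses \autoref{def:callstate} (instantiating the quantified callstack there with $\concatstack{\callstackb_{\mathrm{mid}}}{\callstack}$), hence $\exstate_n$ is a call state. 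If the rule decreases the length by one — a return from an internal transaction, in which $\exstate_n$ is a terminal frame, $\callstackb_{\mathrm{mid}}$ is necessarily nonempty (a return needs a caller frame below the terminal one, and the result must still end in $\callstack$, which rules out $\callstackb_{\mathrm{mid}} = \nil$), and the step pops $\exstate_n$ and rewrites only the caller frame — then $\callstack'$ is the tail of $\callstackb_{\mathrm{mid}}$, so all its states are call states by the induction hypothesis. These cases exhaust the possible shapes of the final step and close the induction.

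The only genuinely non-routine point, and the one I would be most careful about, is the shape-of-callstack argument: the induction hypothesis can be invoked only after we know that the intermediate callstack $\callstack_n$ has the form $\concatstack{\cons{\exstate_n}{\callstackb_{\mathrm{mid}}}}{\callstack}$, and this relies on \autoref{lem:callstack-preservation} together with the simple length observation that a return step whose lower frame already lies inside $\callstack$ would shorten $\callstack$ and hence cannot occur along this derivation. Everything else is a mechanical inspection of the small-step rules, checking that in each case the callstack segment strictly between the new top frame and the preserved suffix $\callstack$ is either left untouched, or extended precisely by the (call-)state that initiated the new internal transaction, or truncated — so that it never acquires a frame that is not a call state.
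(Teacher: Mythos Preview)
Your proof is correct and matches the paper's treatment: the paper explicitly omits the proof of this lemma, stating only that such auxiliary results are ``mostly straight-forward case distinctions and simple inductions,'' which is precisely what you carry out. Your induction on the derivation length, use of \autoref{lem:callstack-preservation} to obtain the required callstack shape, and three-way case split on the last step (local, call-initiating, return) are exactly the routine argument the paper alludes to.
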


Whenever some configuration is reachable, the execution before stepped through the call states on the callstacks.
This property is formally captured by the following lemma:
\begin{lemma}
\label{lemma:step-through-call}
Let $\nsteps{\transenv}{\callstack}{\concatstack{(\concatstack{\callstack_1}{\callstack_2})}{\callstack}}{n}$.
Then there exists some $m \in \NN$ such that
$\nsteps{\transenv}{\callstack}{\concatstack{\callstack_2}{\callstack}}{m}$
and $\nsteps{\transenv}{\concatstack{\callstack_2}{\callstack}}{\concatstack{(\concatstack{\callstack_1}{\callstack_2})}{\callstack}}{n-m}$.
\end{lemma}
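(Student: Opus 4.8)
\noindent The goal is, given $\nsteps{\transenv}{\callstack}{\concatstack{(\concatstack{\callstack_1}{\callstack_2})}{\callstack}}{n}$, to locate an intermediate index along the derivation at which the callstack is \emph{exactly} $\concatstack{\callstack_2}{\callstack}$. The plan is to first dispatch the two degenerate cases and then prove the core claim by complete induction on the step count $n$, the same inductive discipline used for \autoref{theorem:soundness}. If $\callstack_1 = \nil$ the endpoint already equals $\concatstack{\callstack_2}{\callstack}$, so $m := n$ works and the second conclusion holds in $0$ steps; if $\callstack_2 = \nil$ the sought intermediate is $\callstack$ itself, reached at step $0$, so $m := 0$ works. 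Hence I may assume $\callstack_1 \neq \nil$, which forces $n \geq 1$, since otherwise $\callstack = \concatstack{(\concatstack{\callstack_1}{\callstack_2})}{\callstack}$ would contradict $\callstack_1 \neq \nil$.

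For the inductive step I would analyse the \emph{last} small-step $\sstep{\transenv}{c_{n-1}}{c_n}$ of the derivation $\callstack = c_0 \rightarrow \cdots \rightarrow c_n = \concatstack{(\concatstack{\callstack_1}{\callstack_2})}{\callstack}$. By inspection of the EVM small-step rules~\cite{GMS::POST18}, every step acts solely on the top of the callstack in one of three ways: it rewrites the topmost regular frame into a regular or terminal frame; it pushes a fresh callee frame on top of a call state; or it pops a terminal top frame ($\excstate$ or a regular-halting frame) and resumes the caller frame immediately below it. Crucially, because $\callstack_1 \neq \nil$ there is at least one frame strictly above $\concatstack{\callstack_2}{\callstack}$, so in each of the three cases the frames constituting $\concatstack{\callstack_2}{\callstack}$ are left untouched and remain a suffix of $c_{n-1}$. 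Thus $c_{n-1} = \concatstack{(\concatstack{\callstack_1'}{\callstack_2})}{\callstack}$ for a suitable callstack $\callstack_1'$: in the rewrite and pop cases $\callstack_1'$ is again nonempty, whereas in the push case $\callstack_1'$ is $\callstack_1$ with its top frame removed and may be empty.

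Two sub-cases then close the induction. If $\callstack_1' = \nil$ -- which happens exactly when the last step pushed the very first frame sitting above $\concatstack{\callstack_2}{\callstack}$ -- then $c_{n-1} = \concatstack{\callstack_2}{\callstack}$ and I take $m := n-1$. Otherwise $\callstack_1' \neq \nil$, and I apply the induction hypothesis to the shorter derivation $\nsteps{\transenv}{\callstack}{\concatstack{(\concatstack{\callstack_1'}{\callstack_2})}{\callstack}}{n-1}$, obtaining some $m \leq n-1$ with $\nsteps{\transenv}{\callstack}{\concatstack{\callstack_2}{\callstack}}{m}$. In either sub-case the remaining conclusion $\nsteps{\transenv}{\concatstack{\callstack_2}{\callstack}}{\concatstack{(\concatstack{\callstack_1}{\callstack_2})}{\callstack}}{n-m}$ is immediate, since it is simply the tail $c_m \rightarrow \cdots \rightarrow c_n$ of the original derivation.

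The main obstacle is making the structural claim of the second paragraph fully precise, namely that no small-step ever modifies a frame lying strictly below the active top region. The only delicate case is the caller-resume triggered by returning from an internal transaction, as it is the unique rule that rewrites a non-top frame; yet that frame still lies strictly above $\concatstack{\callstack_2}{\callstack}$ whenever $\callstack_1 \neq \nil$, so the suffix is preserved. This is precisely the step-level content already abstracted by \autoref{lem:callstack-preservation} -- in particular its third clause, which asserts that a nonempty suffix is preserved throughout any execution beginning and ending above it -- so I would either cite that lemma for the invariance or re-derive the needed instance directly from the rule formats. The remaining work, i.e.\ the enumeration over rule shapes and the treatment of the degenerate stacks, is routine.
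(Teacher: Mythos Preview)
Your proposal is correct and aligns with what the paper indicates: the paper does not spell out a proof of this lemma, grouping it with the other auxiliary callstack lemmas whose proofs are ``mostly straight-forward case distinctions and simple inductions.'' Your complete induction on $n$ with a case split on the shape of the last small-step (rewrite, push, pop) is precisely such an argument, and your use of the suffix-preservation invariant is the right structural observation; the only minor imprecision is the phrase ``tail of the original derivation'' in the final sentence, which should instead appeal to composing the second derivation furnished by the induction hypothesis with the last step, but this does not affect correctness.
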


As previously discussed, we assume execution states to be annotated with the contracts that they are currently executing.
These annotations need to be consistent with the current execution state in the sense that they correspond to the active account of the execution state and that they present a valid contract in the global state.
\begin{definition}[Annotation consistency]
An execution state $\exstate$ is consistent with contract annotation $c$ if the following two conditions hold
\begin{enumerate}
\item
$\isregular{\exstate} \implies \access{\access{\exstate}{\exenv}}{\activeaccount} = \access{c}{\addr}$
\item
$\isregular{\exstate} \lor \ishalt{\exstate} \implies \access{\access{\exstate}{\gstate}(\access{c}{\addr})}{\code} = \access{c}{\code}$
\end{enumerate}
where $\isregular{\cdot}$ and $\ishalt{\cdot}$ are predicates on execution states indicating whether they are regular execution states or halting states, respectively.
\end{definition}
The consistency of annotations is preserved over execution.
\begin{lemma}[Preservation of annotation consistency]
Let $\exstate$ be consistent with $c$ and $\ssteps{\transenv}{\cons{\annotate{\exstate}{c}}{\callstack}}{\concatstack{\callstack'}{\callstack}}$ for some $\transenv$, $\callstack$, and $\callstack'$.
Then for all $\annotate{\exstate'}{c'} \in \callstack'$ it holds that $\exstate'$ is consistent with $c'$.
\end{lemma}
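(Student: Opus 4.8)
The plan is to prove the lemma by induction on the number $n$ of transitions in the derivation $\ssteps{\transenv}{\cons{\annotate{\exstate}{c}}{\callstack}}{\concatstack{\callstack'}{\callstack}}$. For $n = 0$ the configuration is unchanged, so (callstacks being lists, the common suffix $\callstack$ cancels) $\callstack' = \cons{\annotate{\exstate}{c}}{\nil}$ and its single element is consistent by hypothesis. If the final $\callstack'$ equals $\nil$ the conclusion is vacuous, so I may assume $\callstack' \neq \nil$; then \autoref{lem:callstack-preservation} (third item) guarantees that every intermediate configuration along the derivation has the form $\concatstack{\callstack''}{\callstack}$ with $\callstack'' \neq \nil$. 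This lets me split a derivation of length $n+1$ as a length-$n$ derivation $\ssteps{\transenv}{\cons{\annotate{\exstate}{c}}{\callstack}}{\concatstack{\callstack''}{\callstack}}$ followed by one step $\sstep{\transenv}{\concatstack{\callstack''}{\callstack}}{\concatstack{\callstack'}{\callstack}}$, with the induction hypothesis applicable to the prefix: every annotated state in $\callstack''$ is consistent with its annotation.

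For the step from $\callstack''$ to $\callstack'$ I would do a case analysis on the small-step rule applied, organised by the shape of the callstack it rewrites (cf.\ the grammar in \autoref{fig:grammar'}). \emph{(i) Local instructions} rewrite only the machine state of the regular top frame, keeping its execution environment and annotation fixed and touching no $\code$ component of $\gstate$ (even $\SSTORE$ only changes a storage component), so consistency of that frame is inherited and deeper frames are unchanged. \emph{(ii) Calling and creating instructions} push a fresh frame $\annotate{s_2}{c_2}$ on top of the preserved caller frame; the annotated semantics chooses $c_2$ as precisely the contract executed by $s_2$, so $\access{\access{s_2}{\exenv}}{\activeaccount} = \access{c_2}{\addr}$ and $\access{\access{s_2}{\gstate}(\access{c_2}{\addr})}{\code} = \access{c_2}{\code}$ hold by construction of the rule, while all existing frames are copied verbatim. \emph{(iii) Halting and exceptional rules} replace the regular top frame by a terminal element: for $\excstate$ nothing is to be shown, and for $\haltstate{\cdot}{\cdot}{\cdot}{\cdot}$ the active-account clause is vacuous while the code clause is preserved since halting — including $\SUICIDE$, whose account deletion is only recorded in $\transeffects$ — does not modify the $\code$ component of $\gstate$; deeper frames are untouched. \emph{(iv) Returning rules} pop the terminal top element and resume the caller frame below, replacing its global state by the merged global state $\gstate'$ produced by the callee; the active-account clause survives because the resumed frame reuses its stored execution environment, and it remains to show $\access{\gstate'(\access{c_1}{\addr})}{\code} = \access{c_1}{\code}$ for the caller's annotation $c_1$.

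I expect case (iv) to be the main obstacle: it requires tracking exactly which components of $\gstate$ an internal transaction may alter and concluding that the code stored at the address annotating a pending caller frame is invariant across the call. The auxiliary facts I would establish (or import from the small-step semantics) are that an internal transaction introduces new code only at a freshly allocated address ($\CREATE$), which can never coincide with the address of an already-present annotated frame, and that account removals ($\SUICIDE$) are deferred into $\transeffects$ rather than applied to $\gstate$. The remaining worry — a nested execution overwriting the code at $\access{c_1}{\addr}$ by running foreign code in that address's context — is precisely what the hypothesis of \autoref{theorem:soundness} that $\cstar$ (hence every frame relevant to the soundness application) contains no $\CALLCODE$ or $\DELEGATECALL$ rules out. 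Once the invariant ``the $\code$ component of $\gstate$ at an address changes only when $\CREATE$ first allocates it'' is in hand, the four cases close by routine inspection of the rules and the induction goes through.
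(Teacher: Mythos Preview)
The paper does not prove this lemma; it lists it among the auxiliary facts whose proofs are ``mostly straight-forward case distinctions and simple inductions,'' and your induction on the number of steps together with a rule-by-rule case analysis is exactly that shape.

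Your treatment of case (iv) contains a conceptual slip, though. You worry that a nested execution might ``overwrite the code at $\access{c_1}{\addr}$ by running foreign code in that address's context'' and reach for the no-$\CALLCODE$/$\DELEGATECALL$ hypothesis of \autoref{theorem:soundness}. But $\CALLCODE$ and $\DELEGATECALL$ never touch the $\code$ field in $\gstate$: they execute the callee's bytecode with the caller as $\activeaccount$, so an $\SSTORE$ issued under them alters the caller's \emph{storage}, not its deployed code. That hypothesis is used in the paper to obtain \emph{strong} consistency (the code in $\exenv$ equals $\access{c}{\code}$), which is the content of the separate \autoref{lem:agreement}; it is neither available in the present lemma's hypotheses nor needed for plain consistency.

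The only instruction that can change a $\code$ entry in $\gstate$ is $\CREATE$ (upon return from the init code), so the genuine threat to condition~(2) on return is a $\CREATE$ whose fresh address $\rho$ collides with $\access{c_1}{\addr}$. Your assertion that $\rho$ ``can never coincide with the address of an already-present annotated frame'' is not delivered by the small-step rules: the paper explicitly treats the ``created account exists (hash collision)'' branch in the proof of \autoref{lem:storage-evolution} and discharges it there via the separate collision-freeness assumption of \autoref{def:coll-free}, which the present lemma does not carry. The correct repair is therefore either to argue from the concrete $\CREATE$ rules that a collision aborts (and hence rolls back) before any code overwrite reaches a returning caller, or to observe that the lemma is intended under the paper's standing collision-freeness convention---not to invoke the $\DELEGATECALL$ restriction.
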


In order to prove soundness, we will need to require a stronger form of consistency for the execution states of $\cstar$ that allows to relate the contract code to the currently executed code.
\begin{definition}[Strong annotation consistency]
An execution state $\exstate$ is strongly consistent with contract annotation $c$ if it is consistent with $c$ and additionally
\begin{align*}
\isregular{\exstate} \implies \access{\access{\exstate}{\exenv}}{\code} = \access{c}{\code}
\end{align*}
\end{definition}

Contract annotations reflect the active contract that is executed. The active account of an execution state cannot be changed during execution.
Formally, this is stated by the following lemma:
\begin{lemma}[Annotation persistence]
\label{lem:annotation-persistence}
Let $\ssteps{\transenv}{\concatstack{\callstack_1}{\cons{\annotate{\exstate}{c_1}}{\callstack}}}{\concatstack{\callstack_2}{\cons{\annotate{\exstate'}{c_2}}{\callstack}}}$.
Then it holds that $c_1 = c_2$.
\end{lemma}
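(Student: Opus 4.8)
The plan is to prove the claim by induction on the number of small-steps, after first reducing it to a stack-shape invariant. Write the run of the hypothesis as $\callstack_1 \mdoubleplus (\cons{\annotate{\exstate}{c_1}}{\callstack}) \rightarrow^* \callstack_2 \mdoubleplus (\cons{\annotate{\exstate'}{c_2}}{\callstack})$ and set $\callstackb := \callstack_1 \mdoubleplus \cons{\annotate{\exstate}{c_1}}{\nil}$ and $\callstackb' := \callstack_2 \mdoubleplus \cons{\annotate{\exstate'}{c_2}}{\nil}$, so that the run goes from $\callstackb \mdoubleplus \callstack$ to $\callstackb' \mdoubleplus \callstack$ with $\callstackb' \neq \nil$. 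The third item of \autoref{lem:callstack-preservation} then immediately gives that \emph{every} configuration $\callstack''$ occurring along the run has the form $\callstack'' = \callstackb'' \mdoubleplus \callstack$ for some $\callstackb''$ with $\size{\callstackb''} > 0$; in particular $\callstack$ stays a proper suffix of the callstack, so no step along the run ever returns ``into'' $\callstack$.

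Next I would prove, by induction on the length of the prefix of the run, the invariant: each configuration $\callstackb'' \mdoubleplus \callstack$ along the run satisfies that $\last{\callstackb''}$ — the element of $\callstackb''$ sitting immediately on top of $\callstack$ — carries the annotation $c_1$. The base case holds since the initial configuration is $(\callstack_1 \mdoubleplus \cons{\annotate{\exstate}{c_1}}{\nil}) \mdoubleplus \callstack$, whose relevant element is $\annotate{\exstate}{c_1}$. For the inductive step I case over the small-step rule applied to a configuration $\callstackb'' \mdoubleplus \callstack$ (with $\last{\callstackb''}$ annotated $c_1$ by the induction hypothesis): a purely local step (arithmetic, stack, memory, storage opcodes, or a transition into $\excstate$ or a halting state) only rewrites the contents of the top callstack frame and preserves its annotation; a transaction-initiating step ($\CALL$, $\CALLCODE$, $\DELEGATECALL$, $\CREATE$) pushes a fresh, separately-annotated frame on top and leaves $\callstackb''$ — hence $\last{\callstackb''}$ and its annotation — unchanged; a return step pops the top frame and rewrites the frame beneath it without changing that frame's annotation (the returned-to internal transaction continues executing the same contract). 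The only case that could destroy the invariant is a return when $\size{\callstackb''} = 1$, since then the popped frame would be $\last{\callstackb''}$ itself and the successor configuration would be a modification of $\callstack$, not of the form $(\,\cdot\,) \mdoubleplus \callstack$ with a non-empty prefix — contradicting what the third item of \autoref{lem:callstack-preservation} guarantees for \emph{all} configurations on the run. Hence $\size{\callstackb''} \geq 2$ in that case, the rewritten frame still lies inside $\callstackb''$, and the invariant is maintained.

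Applying the invariant to the last configuration $\callstack_2 \mdoubleplus (\cons{\annotate{\exstate'}{c_2}}{\callstack})$, whose prefix is $\callstack_2 \mdoubleplus \cons{\annotate{\exstate'}{c_2}}{\nil}$ with bottom-most element $\annotate{\exstate'}{c_2}$, yields that this element carries annotation $c_1$, i.e. $c_1 = c_2$. The main obstacle is precisely the return case above: the raw small-step semantics does not by itself forbid popping past $\cons{\annotate{\exstate}{c_1}}{\callstack}$, and it is the callstack-preservation lemma (third item) that rules this out; the remaining ingredients — that local steps and returns never alter the contract annotation of the frame they act on, whereas calls and creates only add a new, independently-annotated frame — are routine consequences of the shape of the annotated small-step rules (and are already implicit in \autoref{lem:callstack-preservation} and the preservation of annotation consistency).
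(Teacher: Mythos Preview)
The paper does not actually give a proof of this lemma: it is one of the auxiliary lemmas in \autoref{sec:appendix-soundness} for which the authors explicitly say ``We omit the proofs for most of these properties as those are mostly straight-forward case distinctions and simple inductions.'' So there is nothing to compare your argument against; the question is only whether your argument is correct, and it is.

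Your reduction via the third item of \autoref{lem:callstack-preservation} is exactly the right move: it pins down that no intermediate configuration ever drops below the frame sitting on top of $\callstack$, which is what makes the $\size{\callstackb''}=1$ return case impossible. The induction invariant you carry (the bottom-most element of the non-empty prefix over $\callstack$ keeps the annotation $c_1$) is the natural one, and your case split is complete. The one place worth making slightly more explicit is the return step when $\size{\callstackb''}=2$: there the frame being rewritten \emph{is} $\last{\callstackb''}$, so the invariant survives only because the annotated small-step rules preserve the annotation of the returned-to frame---you state this, but it is the crux, and in a fully written-out proof you would point to the concrete return rules (regular halt and $\excstate$) to justify it. With that detail spelled out, your argument is a clean and complete instantiation of the ``straight-forward induction'' the paper alludes to.
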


In order to formally state soundness, as well as some preliminary lemmas, we need to put a minor restriction on the executions that we are considering. This is as in the case of contract creations, it is theoretically possible (with negligible probability) that (due to a hash collision in the $\pkec{\cdot}{\cdot}$ function) a contract with the same address as the contract $\cstar$ under analysis is created. In this case the contract's storage as well as code will be overwritten.
If such an over-write would occur in the execution of $\cstar$ (after giving up the control flow due the call or by performing a $\CREATE$ instruction itself), this would mean that any following execution of $\cstar$ would resume in the altered storage, and (even more severely) following a new contract code. Consequently there is no way of faithfully abstracting the execution of $\cstar$ once the control flow was handed over.
In practice, however, the occurrence of such a hash collision can be neglected due to its low probability.
Formally, we give a soundness guarantee only for those executions that do not encompass a problematic hash collision.
To this end, we formally characterize such \emph{collision-free} executions:

\begin{definition}[Collision-free execution]
\label{def:coll-free}
A ($n$-step) execution $\nsteps{\transenv}{\callstack}{\callstack}{n}$ is collision-free for contract $\cstar$ (written $\ncollfreestep{n}{\cstar}$ ) if
for all $m \leq n$ and all callstacks $\callstack''$  such that $\nsteps{\transenv}{\callstack}{\callstack''}{m}$ it holds that
\begin{align*}
\forall \annotate{\exstate}{c} \in \callstack''.~ \access{c}{\addr} \neq \access{\cstar}{\addr} \lor c = \cstar
\end{align*}
\end{definition}
This definition ensures that during the execution, the address of contract $\cstar$ can never be attached to a different code.
Given that execution states of contract creation are annotated with $(\rho, \bot)$ (where $\rho$ is the address of the contract in creation), this definition in particular rules out that the creation code of a contract with address $\access{\cstar}{\addr}$ is executed.

While the occurrences of such a colliding contract creation needs to be excluded on executions (since it could be performed by arbitrary contracts), exclusion of $\DELEGATECALL$ and $\CALLCODE$ extractions only applies to the executions of $\cstar$ and can therefore be syntactically enforced on $\cstar$'s contract code.
We establish the invariant that we obtain from excluding $\DELEGATECALL$ and $\CALLCODE$ from $\cstar$'s contract code:
\begin{lemma}[Annotation agreement for $\cstar$]
\label{lem:agreement}
Let $\cstar$ be a contract such that $\{ \DELEGATECALL, \CALLCODE \} \cap \access{\cstar}{\code} = \emptyset$.
Let further $\ssteps{\transenv}{\cons{\annotate{\exstate}{\cstar}}{\callstack}}{\concatstack{\callstack'}{\callstack}}$ be a collision free execution for $\cstar$ and $\exstate$ be strongly consistent with $\cstar$. Then for all (regular) execution states $\annotate{\exstate'}{c'} \in \callstack'$ it holds that
\begin{enumerate}
\item If $\access{\cstar}{\addr} = \access{\access{\exstate'}{\exenv}}{\activeaccount}$ then $\access{\cstar}{\code} = \access{\access{\exstate'}{\exenv}}{\code}$
\item If $c' \neq \cstar$ then $\access{\access{\exstate'}{\exenv}}{\activeaccount} \neq \access{\cstar}{\addr}$
\end{enumerate}
\end{lemma}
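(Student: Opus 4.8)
The plan is to prove a slightly strengthened statement by induction on the length of the execution, and then read off the two claims from the invariant together with the collision-freedom hypothesis (\autoref{def:coll-free}) and the already-established preservation of annotation consistency. Call a callstack \emph{$\cstar$-sound} if every \emph{regular} execution state occurring in it that is annotated with $\cstar$ is \emph{strongly} consistent with $\cstar$ (not merely consistent). First I would note that, by the preservation-of-annotation-consistency lemma applied to every prefix of the given execution, each regular state $\annotate{\exstate'}{c'}$ occurring in $\concatstack{\callstack'}{\callstack}$ is plainly consistent with $c'$; in particular $\access{\access{\exstate'}{\exenv}}{\activeaccount} = \access{c'}{\addr}$, and when $c'=\cstar$ the only property still missing for claim (1) is the extra strong-consistency clause $\access{\access{\exstate'}{\exenv}}{\code} = \access{\cstar}{\code}$. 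Hence it suffices to show that every callstack reachable from $\cons{\annotate{\exstate}{\cstar}}{\callstack}$ along the given collision-free execution is $\cstar$-sound.

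For this I would use ordinary induction on the number of steps, since each small-step rule affects the callstack only locally; by \autoref{lem:callstack-preservation} every reachable callstack strictly extends $\callstack$, so it has a well-defined top frame above $\callstack$. The base case is immediate: the initial callstack is $\cons{\annotate{\exstate}{\cstar}}{\callstack}$ and $\exstate$ is strongly consistent with $\cstar$ by hypothesis. In the inductive step I would case-split on the applied rule. (i) For a purely local step the executing environment $\exenv$ — in particular its $\code$ component and its active account — is untouched, and any global-state change affects only storage, not code, so $\cstar$-soundness of the new top frame follows from that of the old one. (ii) For a frame-pushing step, a $\CREATE$ (and likewise the other contract-creating opcodes) pushes a frame annotated $(\rho,\bot)$ for a fresh address $\rho$, which can never equal the proper contract $\cstar$; a $\CALL$ (and the other call opcodes) to a target $a$ pushes a frame annotated with the contract recorded at $a$ in the current global state and whose environment code is precisely that contract's code, so if this annotation happens to be $\cstar$ then the environment code is $\access{\cstar}{\code}$ by construction. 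The crucial point is: if the \emph{caller} frame is annotated with $\cstar$, then by the inductive hypothesis it is strongly consistent, so the opcode it executes lies in $\access{\cstar}{\code}$, which by assumption contains neither $\DELEGATECALL$ nor $\CALLCODE$; and if the caller is annotated with some $c \neq \cstar$ and performs a $\DELEGATECALL$ or $\CALLCODE$, the pushed frame inherits the caller's active account $\access{c}{\addr}$, which by \autoref{def:coll-free} differs from $\access{\cstar}{\addr}$, so the pushed frame is again not annotated with $\cstar$. (iii) For a frame-popping (return) step, the resumed caller frame keeps the environment $\exenv$ under which it had been stored on the callstack, so its $\code$ component still equals $\access{\cstar}{\code}$ when its annotation is $\cstar$, by the inductive hypothesis; the remaining consistency obligations on the resumed frame, which involve the callee-updated global state, are supplied by the preservation-of-annotation-consistency lemma. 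Frames below the modified one are unchanged and covered directly by the inductive hypothesis.

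Finally I would derive the two claims from $\cstar$-soundness of $\concatstack{\callstack'}{\callstack}$. For claim (2): if $c' \neq \cstar$ then \autoref{def:coll-free} forces $\access{c'}{\addr} \neq \access{\cstar}{\addr}$, and since $\access{\access{\exstate'}{\exenv}}{\activeaccount} = \access{c'}{\addr}$ by consistency, $\access{\access{\exstate'}{\exenv}}{\activeaccount} \neq \access{\cstar}{\addr}$. For claim (1): if $\access{\access{\exstate'}{\exenv}}{\activeaccount} = \access{\cstar}{\addr}$ then consistency gives $\access{c'}{\addr} = \access{\cstar}{\addr}$, so the first disjunct of \autoref{def:coll-free} fails and $c' = \cstar$, whence $\cstar$-soundness yields $\access{\access{\exstate'}{\exenv}}{\code} = \access{\cstar}{\code}$. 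The step I expect to be the main obstacle is the bookkeeping in the frame-pushing and frame-popping cases against the concrete small-step rules of the transaction-initiating opcodes: one must verify that (a) the annotation and the environment code of a freshly called frame really coincide with the callee contract recorded in the global state, (b) creation frames carry a $\bot$-code and so can never be confused with $\cstar$, and (c) returning from a sub-execution cannot have silently replaced $\cstar$'s code — which is exactly what collision-freedom rules out through the $(\rho,\bot)$-annotation of creation frames. It is also worth stressing that it is \emph{strong} consistency, not plain consistency, that has to be carried through the induction, since only strong consistency lets us conclude that a $\cstar$-annotated frame is executing an opcode drawn from $\access{\cstar}{\code}$ and hence cannot be a $\DELEGATECALL$ or $\CALLCODE$.
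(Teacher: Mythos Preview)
The paper does not give an explicit proof of this lemma; it groups it with the other auxiliary lemmas whose proofs are characterized as straightforward case distinctions and simple inductions. Your proposal is a correct and careful instantiation of exactly such an argument: induction on the execution length, case analysis on the small-step rule, and use of collision-freedom together with preservation of (plain) annotation consistency to derive the two claims from the strengthened invariant that every $\cstar$-annotated frame is \emph{strongly} consistent. This is precisely the approach the paper intends.
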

So, in a nutshell, for contracts not containing $\CALLCODE$ and $\DELEGATECALL$, strong consistency is preserved and additionally, the contract code is persistent (cannot change over the execution).

For arguing about the call abstraction, we show the following substantial lemma that allows us to trace back the storage of a (contract) account to the (result of a) prior execution of this contract.

\begin{lemma}[Storage evolution]
\label{lem:storage-evolution}
Let $\transenv$, $\callstack$, $\callstack'$, $\exstate$, $\exstate'$, and $c' \neq \cstar$ be such that $\exstate$ is strongly consistent with $\cstar$ and
\begin{align*}
\nsteps{\transenv}{\cons{\annotate{\exstate}{c^*}}{\callstack}}{\cons{\annotate{\exstate'}{c'}}{\concatstack{\callstack'}{\callstack}}}{n}
\end{align*}
is a collision-free execution for $\cstar$.
Then one of the following holds:
\begin{enumerate}
\item \label{case:exc}
$\exstate' = \excstate$
\item \label{case:callstate}
$\begin{aligned}[t]
&\exstate' \neq \excstate ~\land~ \\
&\exists \annotate{\exstate^*}{\cstar} \in \callstack' ~ \access{\access{\exstate'}{\gstate}(\access{\cstar}{\addr})}{\stor} = \access{\access{\exstate^*}{\gstate}(\access{\cstar}{\addr})}{\stor}
\end{aligned}$
\item \label{case:return}
$\begin{aligned}[t]
&\exstate' \neq \excstate ~\land\\
&\exists~ \callstack^* ~\gstate~\lgas~d~\transeffects~m.~ \size{\callstack^*} > 0 \\
&~\land~ \nsteps{\transenv}{\cons{\annotate{\exstate}{c^*}}{\callstack}}{\cons{\haltstatefull{\gstate}{\lgas}{d}{\transeffects}}{\concatstack{\callstack^*}{\callstack}}}{m} \\
&~\land~ \nsteps{\transenv}{\cons{\haltstatefull{\gstate}{\lgas}{d}{\transeffects}}{\concatstack{\callstack^*}{\callstack}}}{\cons{\annotate{\exstate'}{c'}}{\concatstack{\callstack'}{\callstack}}}{n-m}\\
&~\land~ \access{\access{\exstate'}{\gstate}(\access{\cstar}{\addr})}{\stor} = \access{\gstate(\access{\cstar}{\addr})}{\stor}
\end{aligned}$
\end{enumerate}
\end{lemma}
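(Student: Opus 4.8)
I would argue by complete induction on the number of steps $n$, inspecting the last step of the execution. The boundary cases are degenerate: for $n=0$ the hypothesis forces $\exstate'=\exstate$ and $c'=\cstar$, contradicting $c'\neq\cstar$; and $\callstack'=\nil$ is impossible, since then the execution would run from $\cons{\annotate{\exstate}{\cstar}}{\callstack}$ to $\cons{\annotate{\exstate'}{c'}}{\callstack}$ and \autoref{lem:annotation-persistence} (with both auxiliary stacks empty and $\callstack$ the common suffix) would again give $c'=\cstar$. So henceforth $\callstack'\neq\nil$, and in every decomposition the top of $\concatstack{\callstack'}{\callstack}$ is the top of $\callstack'$.

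The second ingredient is a \emph{storage-stability} observation for a single step: along a collision-free execution issued from a state strongly consistent with $\cstar$, a step does not change $\access{\access{\cdot}{\gstate}(\access{\cstar}{\addr})}{\stor}$ unless the top execution state of the predecessor configuration is a regular execution of $\cstar$. The point is that the only small-step rule writing to an account's persistent storage is the $\SSTORE$ rule, which writes to the active account of the top state; by \autoref{lem:agreement}(2) (applicable since $\cstar$ contains neither $\DELEGATECALL$ nor $\CALLCODE$, as in \autoref{theorem:soundness}), a top state annotated with a contract $\neq\cstar$---or a terminal state---has active account $\neq\access{\cstar}{\addr}$, hence cannot touch $\cstar$'s storage; and the account-creating rules ($\CREATE$, and the account-creating branch of $\CALL$) cannot hit $\access{\cstar}{\addr}$ by collision-freeness, while the balance transfer on a call leaves all storages untouched. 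Consequently call-, create- and return-initiating steps, and local steps run outside $\cstar$, all preserve $\cstar$'s storage.

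\textbf{Inductive step.} Split the $n$-step execution as an $(n{-}1)$-step execution to a configuration $C$ followed by one step into $\cons{\annotate{\exstate'}{c'}}{\concatstack{\callstack'}{\callstack}}$, and case on the shape of that step. If it is \emph{local}, then $C=\cons{\annotate{\exstate''}{c'}}{\concatstack{\callstack'}{\callstack}}$ with $\exstate''$ regular; the induction hypothesis applied to the $(n{-}1)$-step prefix gives one of the three alternatives for $\exstate''$, and storage stability transports it to $\exstate'$---with the single new possibility that $\exstate'$ itself is $\excstate$, i.e.\ the first alternative. If it is a \emph{call/create push}, then $C=\concatstack{\callstack'}{\callstack}$ with $\callstack'=\cons{\annotate{s_0}{c_0}}{E}$, $\annotate{s_0}{c_0}$ a call state, and $\exstate'$ a fresh regular state with the same $\cstar$-storage as $s_0$; if $c_0=\cstar$ this is immediately the second alternative (witness $\exstate^*:=s_0\in\callstack'$), and if $c_0\neq\cstar$ the induction hypothesis applied to the $(n{-}1)$-step prefix, with $(s_0,c_0,E)$ playing the roles of $(\exstate',c',\callstack')$, yields the claim after transporting along the last step. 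If it is a \emph{return pop}, then $C=\cons{H}{\cons{\annotate{s_0}{c'}}{\concatstack{\callstack'}{\callstack}}}$ with $H$ terminal: when $H=\excstate$ the restored state $\exstate'$ has the $\cstar$-storage recorded in $s_0$ at call time, and I would use \autoref{lemma:step-through-call} together with \autoref{lem:callstack-preservation} to locate the strictly earlier configuration $\cons{\annotate{s_0}{c'}}{\concatstack{\callstack'}{\callstack}}$ in which $\annotate{s_0}{c'}$ is on top and apply the induction hypothesis there (with $(s_0,c',\callstack')$ in the usual roles); when $H$ is a halting state $\haltstatefull{\gstate_h}{\lgas_h}{d}{\transeffects_h}$, $\exstate'$ inherits $\gstate_h$, and I would either---if the completed internal transaction was an execution of $\cstar$---read off the third alternative directly with $C$ itself as the witnessing halting configuration and $m=n{-}1$, or---otherwise---apply the induction hypothesis to $C$ with $H$ in the role of $\exstate'$. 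In all branches, $\exstate'=\excstate$ gives the first alternative, an element $\annotate{\exstate^*}{\cstar}$ still on $\callstack'$ gives the second, and an earlier completed internal transaction gives the third.

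\textbf{Main obstacle.} The mathematical content is slim---only $\cstar$ can alter $\cstar$'s storage, so its current value coincides with its value at the end of $\cstar$'s most recent turn, which is either still pending on the callstack or was a completed internal transaction. The work is the callstack bookkeeping: decomposing $\concatstack{\callstack'}{\callstack}$ and the predecessor configurations so that each appeal to the induction hypothesis is to a strictly shorter execution whose final configuration has exactly the required syntactic shape and a contract annotation $\neq\cstar$, and then reassembling the output into the three disjuncts with a coherent choice of the split index $m$. The return step is the delicate one, since the exceptional-return sub-case cannot be treated by applying the induction hypothesis to the terminal state $H$ (that would only re-derive $H=\excstate$) and instead needs \autoref{lemma:step-through-call} to reach back to the caller's last on-top configuration, whereas the successful-return sub-case instead needs a case split on whether the returning internal transaction executed $\cstar$.
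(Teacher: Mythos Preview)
Your proposal is correct and follows essentially the same approach as the paper: complete induction on $n$, peeling off the last step, and a case split on whether that step is local, a call/create push, or a return pop, with the return case further split by whether the terminated callee was $\cstar$ and (in the exceptional sub-case) an appeal to \autoref{lemma:step-through-call} to reach the caller's earlier on-top configuration. The paper organizes the case analysis by concrete opcodes rather than by step shape, and establishes your ``storage-stability observation'' inline (in the $\SSTORE$ and $\CREATE$ cases) rather than stating it upfront, but the argument is the same.
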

This lemma allows to relate the storage of contract $\cstar$ to prior executions of $\cstar$ itself. More precisely, the storage of $\cstar$ (given that $\cstar$ does not contain $\CALLCODE$ or $\DELEGATECALL$ instructions) either needs to be as it was at the point of the last call originating from $\cstar$ or the result of some finished prior execution of $\cstar$.

We sketch the proof of this crucial lemma, arguing about the most interesting cases:
\begin{proof} (sketch)
We proceed by complete induction on the number $n \in \NN$ of small-steps.
\begin{itemize}
\item
Case $n=0$. In this case it holds that $\annotate{\exstate'}{c'} = \annotate{\exstate}{\cstar}$ and $\callstack' = \emptystack$.
Hence the assumption that $c' \neq \cstar$ is trivially violated.
\item
Case $n > 0$. In this case
$\nsteps{\transenv}{\cons{\annotate{\exstate}{c^*}}{\callstack}}{\callstack''}{n-1}$
and $\sstep{\transenv}{\callstack''}{\cons{\annotate{\exstate'}{c'}}{\concatstack{\callstack'}{\callstack}}}$ for some $\callstack''$.
We proceed by case analysis on the small-step rule being applied in the last step.
\begin{itemize}
\item[$\ADD$] (non exception case).
Then $\callstack'' = \cons{\annotate{\regstatefull{\mstate}{\exenv}{\gstate}{\transeffects}}{c'}}{\concatstack{\callstack'}{\callstack}}$ for some $\mstate$, $\exenv$, $\gstate$, and $\transeffects$ and $\exstate' = \regstatefull{\mstate'}{\exenv}{\gstate}{\transeffects}$ for some $\mstate'$. By inductive hypothesis for $n-1$ it follows that one out of options \ref{case:exc} to \ref{case:return} holds for $\regstatefull{\mstate}{\exenv}{\gstate}{\transeffects}$. As the global state $\gstate$ stays unaffected, this consequently also holds for $\regstatefull{\mstate'}{\exenv}{\gstate}{\transeffects}$ hence closing the case.
This reasoning applies to all local rules that are not changing the contracts global storage (so except for $\SSTORE$).
\item[$\SSTORE$] (non exception case). Then $\callstack'' = \cons{\annotate{\regstatefull{\mstate}{\exenv}{\gstate}{\transeffects}}{c'}}{\concatstack{\callstack'}{\callstack}}$ for some $\mstate$, $\exenv$, $\gstate$, and $\transeffects$ and $\exstate' = \regstatefull{\mstate'}{\exenv}{\gstate'}{\transeffects}$ for some $\mstate'$, $\gstate'$. Since $\SSTORE$ only modifies the storage of the active account, we can conclude that for all addresses $a$ such that $a \neq \access{\exenv}{\activeaccount}$ it holds that $\gstate(a) = \gstate'(a)$.
Since by \autoref{lem:agreement} it holds that $\access{\exenv}{\activeaccount} \neq \access{\cstar}{\addr}$, it particularly follows that
$\gstate(\access{\cstar}{\addr}) = \gstate'(\access{\cstar}{\addr})$. Hence the claim follows immediately from the application of the inductive hypothesis for $n-1$.
\item[$\CALL$]
(All preconditions satisfied, called account exists).
Then $\callstack'' =\cons{\annotate{\exstate''}{c''}}{\concatstack{\callstack'''}{\callstack}}$ and $\callstack' = \cons{\annotate{\exstate''}{c''}}{\callstack'''}$ for some regular execution state $\exstate''$, contract $c''$ and callstack $\callstack''$. We do a case distinction on $c''$:
\begin{itemize}
\item [$c'' = \cstar$]
\sloppy
In this case, condition \ref{case:callstate} is satisfied since $\annotate{\exstate''}{c''} \in \callstack'$ and the call itself does not affect the contract's storage, so  $\access{\access{s'}{\gstate}(\access{\cstar}{\addr})}{\stor} = \access{\access{s''}{\gstate}(\access{\cstar}{\addr})}{\stor}$.
\item[$c'' \neq \cstar$]
In this case the inductive hypothesis is applicable for $n-1$. Given again that the call itself does not affect storage, so $\access{\access{s'}{\gstate}(\access{\cstar}{\addr})}{\stor} = \access{\access{s''}{\gstate}(\access{\cstar}{\addr})}{\stor}$, the claim straightforwardly propagates to the case of $n$ steps.
Note that similar reasoning also applies to the cases of $\STATICCALL$, $\CALLCODE$, and $\DELEGATECALL$.
\end{itemize}
\item[Halt] (return from regular halting after $\CALL$)
\sloppy
Then $\callstack'' = \cons{\annotate{\haltstatefull{\gstate}{\lgas}{d}{\transeffects}}{\dot{c}}}{\cons{\annotate{\exstate''}{c'}}{\concatstack{\callstack'}{\callstack}}}$ for some $\gstate$, $\lgas$, $d$, $\transeffects$, $\exstate''$, and $c''$. Additionally, it holds that $\access{\exstate'}{\gstate} = \gstate$. We do a case distinction on $\dot{c}$:
\begin{itemize}
\sloppy
\item[$\dot{c} = \cstar$] In this case, condition \ref{case:return} is satisfied since
$\nsteps{\transenv}{\cons{\annotate{\exstate}{c^*}}{\callstack}}{\cons{\annotate{\haltstatefull{\gstate}{\lgas}{d}{\transeffects}}{\dot{c}}}{\cons{\annotate{\exstate''}{c'}}{\concatstack{\callstack'}{\callstack}}}}{n-1}$,
$\sstep{\transenv}{\cons{\annotate{\haltstatefull{\gstate}{\lgas}{d}{\transeffects}}{\dot{c}}}{\cons{\annotate{\exstate''}{c'}}{\concatstack{\callstack'}{\callstack}}}}{\cons{\annotate{\exstate'}{c'}}{\concatstack{\callstack'}{\callstack}}}$ and
as $\access{\exstate'}{\gstate} = \gstate$, also $\access{\access{\exstate'}{\gstate}(\access{\cstar}{\addr})}{\stor} = \access{\gstate(\access{\cstar}{\addr})}{\stor}$.
\item[$\dot{c} \neq \cstar$]
In this case again the inductive hypothesis can be applied for $n-1$, and since $\access{\exstate'}{\gstate} = \gstate$, the claim trivially carries over to the case of $n$ steps.
\end{itemize}
\item[Exc] (return from exceptional halting)
\sloppy
Then $\callstack'' = \cons{\annotate{\excstate}{\dot{c}}}{\cons{\annotate{\exstate''}{c'}}{\concatstack{\callstack'}{\callstack}}}$ and $\access{\exstate'}{\gstate} = \access{\exstate''}{\gstate}$ (as the global state is rolled back).
By \autoref{lemma:step-through-call}, we know that there exists some $m < n -1$ such that $\nsteps{\transenv}{\cons{\annotate{\exstate}{c^*}}{\callstack}}{\cons{\annotate{\exstate''}{c'}}{\concatstack{\callstack'}{\callstack}}}{m}$ and
$\nsteps{\transenv}{\cons{\annotate{\exstate''}{c'}}{\concatstack{\callstack'}{\callstack}}}{ \cons{\annotate{\excstate}{\dot{c}}}{\cons{\annotate{\exstate''}{c'}}{\concatstack{\callstack'}{\callstack}}}}{n-1-m}$.
By applying the inductive hypothesis for $m$ ($< n$), the claim straightforwardly carries over to the case of $n$ steps.
\item [$\CREATE$]
(All preconditions satisfied, created account does not exist (no hash collision).
Then $\callstack'' =\cons{\annotate{\exstate''}{c''}}{\concatstack{\callstack'''}{\callstack}}$ and $\callstack' = \cons{\annotate{\exstate''}{c''}}{\callstack'''}$ for some regular execution state $\exstate''$, contract $c''$ and callstack $\callstack''$. The same reasoning as for the $\CALL$ case applies.
\item [$\CREATE$]
(All preconditions satisfied, created account exists (hash collision).
Then $\callstack'' =\cons{\annotate{\exstate''}{c''}}{\concatstack{\callstack'''}{\callstack}}$ and $\callstack' = \cons{\annotate{\exstate''}{c''}}{\callstack'''}$ for some regular execution state $\exstate''$, contract $c''$ and callstack $\callstack''$.
Additionally, we know that $c' = (\rho, \bot)$ where $\rho$ is the newly created address.
Here, we need to make use of the assumption that the newly created address $\rho$ is not colliding with the address of $\cstar$ ($\rho \neq \access{\cstar}{\addr}$). This is ensured as otherwise the execution would not be collision-free (for $\rho = \access{\cstar}{addr}$ the condition of \autoref{def:coll-free} would be violated for $\annotate{\exstate'}{c'}$).
We do a case distinction on $c''$:
\begin{itemize}
\item [$c'' = \cstar$]
\sloppy
In this case, condition \ref{case:callstate} is satisfied since $\annotate{\exstate''}{c''} \in \callstack'$ and the contract creation does not affect the storage of address $\access{\cstar}{\addr}$ (but only the one of $\rho$, which is different by assumption). So $\access{\access{s'}{\gstate}(\access{\cstar}{\addr})}{\stor} = \access{\access{s''}{\gstate}(\access{\cstar}{\addr})}{\stor}$.
\item[$c'' \neq \cstar$]
In this case the inductive hypothesis is applicable for $n-1$. Given again that the contract creation does not affect the storage of address $\access{\cstar}{\addr}$, so $\access{\access{s'}{\gstate}(\access{\cstar}{\addr})}{\stor} = \access{\access{s''}{\gstate}(\access{\cstar}{\addr})}{\stor}$, the claim straightforwardly propagates to the case of $n$ steps.
\end{itemize}
\item[Halt] (return from regular halting after $\CREATE$)
Similar to the halting case for $\CALL$ with the only difference that instead of  $\access{\exstate'}{\gstate}=\gstate$ it only holds that $\access{\exstate'}{\gstate} = \updategstate{\gstate}{\rho}{\textit{acc}}$ for some account state $\textit{acc}$. However, as long as it is ensured that $\rho \neq \access{\cstar}{\addr}$ (which is the case due to the collision-free execution) this does not affect the reasoning.
\end{itemize}
\end{itemize}
\end{proof}

\subsubsection{Monotonicity of abstract rules}
We prove separately, that all rules in $\delta(\cstar)$ are monotone (for any $\cstar$).
This facilitates the reasoning in the individual cases of the main proof, since it allows us to argue about most concrete abstractions only.

Since monotoniticy is independent of the small-step semantics, we will in the following consider an abstract semantics specified by $(\absdoms, \predsig, \absts)$.
First, we define  monotonicity for an abstract semantics $(\absdoms,  \predsig, \absts)$ as follows:
\begin{definition}[Monotonicity of abstract Semantics]
An abstract semantics $(\absdoms,  \predsig, \absts)$ is monotone if for all abstract configurations $\absconfig_I$, $\absconfig_I'$, $\absconfig_F\in \absconfigs_\predsig$ such that $\absconfig_I \leq \absconfig_I'$ it holds that
\begin{align*}
\absconfig_I \cup \absts \derives \absconfig_F
\implies \exists \absconfig_F'.~ \absconfig_I' \cup \absts \derives \absconfig_F' \land \absconfig_F \leq \absconfig_F'
\end{align*}
\end{definition}

We will prove the following theorem:
\begin{theorem}[Monotonicity of $\delta$]
\label{thm:delta-monotonicity}
For all contracts $c$ it holds that $(\absdoms_\textit{evm}, \predsig_\textit{evm}, \delta(c))$ is monotone.
(Where $\absdoms_\textit{evm}$ is the super domain and $\predsig_\textit{evm}$ is the signature induced by the definition in \autoref{fig:pred-signature}.)
\end{theorem}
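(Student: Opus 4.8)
The plan is to prove \autoref{thm:delta-monotonicity} by reducing the statement about multi-step derivations to a single application of one \hornc clause of $\delta(\cstar)$, and then checking that every clause schema occurring in the (only partially displayed) definition of $\instabs{\cdot}_\pc$ is \emph{argument-monotone}. Concretely, a derivation $\absconfig_I \cup \absts \derives \absconfig_F$ is built from finitely many single-step derivations, each using one instantiated clause of $\delta(\cstar)$, so a routine induction on the length of the derivation reduces the goal to the one-step obligation of \autoref{eqn:mon-local}: whenever $\absconfig_I \leq \absconfig_I'$ and a clause $h \in \delta(\cstar)$ fires under a valuation $V$ all of whose premise predicate applications lie in $\absconfig_I$ and all of whose constraints hold, there is a valuation $V'$ of $h$ whose premise applications are covered by $\absconfig_I'$, whose constraints hold, and whose conclusion $p'$ satisfies $p \ord{p} p'$, where $p$ is the conclusion under $V$. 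The composition step uses only transitivity of $\ord{p}$ (lifted to $\leq$ on configurations) and the fact that $\derives$ accumulates derived facts into the premise set; this is exactly the monotonicity obligation flagged in the abstract-interpretation discussion.

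For the one-step obligation, fix $h$, $V$, and $\absconfig_I \leq \absconfig_I'$. For each premise predicate application $p_1$ of $h$ under $V$ --- which lies in $\absconfig_I$ --- choose $p_2 \in \absconfig_I'$ with $p_1 \ord{p} p_2$; such $p_2$ exists by definition of $\leq$. Since $\ord{p}$ equates predicate names and orders arguments componentwise, $p_2$ carries the \emph{same} program counter, stack size and call level as $p_1$ (the orders $\ord{\NN}$ and $\ord{\BB}$ are equality), and differs from $p_1$ only in that some $\absdom$-valued slots (stack/memory/storage cells) may be raised to $\top$ and some $\NN \to \absdom$-valued slots may be made pointwise more abstract. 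Define $V'$ to agree with $V$ on every variable bound to a discrete (program counter, size, call level, boolean) component, and to re-read the $\absdom$-valued and array-valued components from $p_2$ rather than $p_1$; the equality constraints of the clauses, which have the form $\abs{x} = \select{\stackv}{\siz - k}$, then force $V'(\abs{x}) \ageq V(\abs{x})$ by \autoref{lem:suprema}. All guard constraints appearing in $\instabs{\cdot}_\pc$ ($\siz > 1$, $\siz > 6$, the divisibility checks inside $\accessword{\cdot}{\cdot}$, etc.) mention only discrete components and are hence still satisfied under $V'$, so $h$ still fires under $V'$ from $\absconfig_I'$.

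It remains to verify, schema by schema, that the conclusion of $h$ is $\ord{p}$-monotone in its inputs, which is immediate from the shape of the abstract operations: $\absop{\binop}$, $\extract{\cdot}{\cdot}{\cdot}$, $\aconcat{\cdot}$ --- and hence $\accessword{\cdot}{\cdot}$ --- all return $\top$ (resp.\ $\fun{x}{\top}$) as soon as one $\absdom$-argument is $\top$, which makes them monotone w.r.t.\ $\ord{\absdom}$. Thus for the successful $\ADD$ clause the new top-of-stack value satisfies $V'(\abs{x}) \aplus V'(\abs{y}) \ageq V(\abs{x}) \aplus V(\abs{y})$ and the rest of the stack array is pointwise larger; for the $\MLOAD$ clause of \autoref{fig:absrules} and the $\MSTORE$ clauses the case split on whether the offset is concrete is harmless, because the ``abstract'' branch yields $\top$ or $\fun{x}{\top}$, i.e.\ a supremum that dominates whatever the ``concrete'' branch would produce; the exception clauses have conclusion $\predexception{\cl}$ with $\cl$ discrete and thus unchanged; and the three $\CALL$ clauses either over-approximate all affected slots by $\top$/$\fun{x}{\top}$ already (so monotonicity is trivial) or forward/cross-propagate a storage map ($\stor$, resp.\ $\stor_h$ drawn from a matched $\predhalt{\stor_h}{1}$ premise) which becomes pointwise more abstract under $V'$. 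Collecting these, the conclusion of $h$ under $V'$ $\ord{p}$-dominates the one under $V$; iterating over the whole derivation yields the required $\absconfig_F'$ with $\absconfig_F \leq \absconfig_F'$.

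The main obstacle is that the argument touches the entire rule family $\instabs{\cdot}_\pc$, most of which is not shown, so the real content is isolating one uniform invariant --- ``the conclusion's arguments are built from the premises' arguments by $\ord{}$-monotone operations, and all guards are discrete'' --- and confirming it holds for every opcode. The delicate cases are the transaction-initiating clauses, where a premise of the shape $\predhalt{\stor_h}{1}$ (in the cross-propagation clause for $\CALL$) must be re-covered by a possibly coarser halting fact of $\absconfig_I'$; here the componentwise definition of $\ord{p}$ together with the suprema of \autoref{lem:suprema} is exactly what lets the re-instantiated clause stay sound.
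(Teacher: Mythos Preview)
Your proposal is correct and follows essentially the same approach as the paper: reduce monotonicity of $\delta(\cstar)$ to one-step monotonicity of each individual \hornc{} clause, observe that the discrete components (stack size, program counter, call level) are compared by equality and hence preserved, and then verify schema-by-schema that the $\absdom$-valued conclusion arguments are built from the premise arguments via $\ord{}$-monotone abstract operations. The paper packages the one-step reduction as an explicit syntactic criterion (\autoref{lem:rule-monotone-syntax}) and states separate monotonicity lemmas for abstract binary operations and memory access before running through the same illustrative cases ($\ADD$, $\MLOAD$), but the underlying argument is the same as yours; one small slip is your appeal to \autoref{lem:suprema} for $V'(\abs{x}) \ageq V(\abs{x})$, where the actual justification is simply the componentwise definition of $\ord{\NN \times (\NN \to \absdom)}$ on the stack pair.
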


We prove this property by proving (one-step) monotonicity of the individual rules in $\delta(c)$.

We define one-step derivations of a \hornc clause $\hornclause$ from some abstract configuration $\absconfig$.
To this end, we use the notion of a variable assignment $\valuation \in \vars \to \absdoms$ that maps two variables to values of the corresponding abstract domain. We write $\valuation(n(\vec{z}))$ for $n(\vec{V(z)})$ and $\valuation(\{f_1, \dots, f_n \})$ for $\{ \valuation(f_1), \dots, \valuation(f_n) \}$.
By $\valuation \satisfies \constraints$ we denote that replacing all variables in $\constraints$ according to $\valuation$ yields a tautology.

\begin{definition}[One-step derivability from horn clause]
Let $(\absdoms, \predsig, \absts)$ be an abstract semantics and $(\hc{X}{\constraints}{\premises}{\conclusion}) \in \absts$.
Further let $f \in \absconfigs_\predsig$
Then the one-step derivability relation $\derivesone$ on abstract configurations is defined as follows:
\begin{align*}
\absconfig, (\hc{X}{\constraints}{\premises}{\conclusion}) \derivesone f
\define
\exists \valuation.~ \valuation(P) \subseteq \absconfig ~\land~ \valuation \satisfies \constraints ~\land~ f = \valuation(\conclusion)
\end{align*}
\end{definition}
Note that this intuition implicitly enforces that the valuation $\valuation$ respects the argument types of the predicates.

We extend the notion of derivability to sets of horn clauses and abstract configurations:

\begin{definition}[One-step derivability from abstract semantics]
Let $(\absdoms, \predsig, \absts)$ be an abstract semantics. Then the one-step derivability relation $\derivesone$ on $\absts$ is defined as follows
\begin{align*}
\absconfig, \abss \derivesone \absconfig'
\define
\exists f.~ \absconfig = \absconfig \cup \{ f \}
~\land~  \exists \hornclause \in \abss.~ \absconfig, \hornclause \derivesone f
\end{align*}
\end{definition}
Finally, we define $\derives$ to be the reflexive, transitive closure of $\derivesone$.

We define the monotonicity of a \hornc clause as follows:
\begin{definition}[Monotonicity of \hornc clauses]
Let $(\absdoms, \predsig, \absts)$ be an abstract semantics.
A constrained \hornc clause $\hornclause \in \absts$ is monotone if for all $\absconfig' \geq \absconfig$
\begin{align*}
\absconfig, \hornclause \derivesone f \implies \exists f'.~ \absconfig', \hornclause \derivesone f' ~\land f' \geq f
\end{align*}
\end{definition}

Evidently, the (one-step) monotonicity of all \hornc clauses in an abstract semantics implies the (multi-step) monotonicity of the abstract semantics
\begin{lemma}
Let $(\absdoms, \predsig, \absts)$ be an abstract semantics.
If all constrained horn clauses $\hornclause \in \abss$ are monotone, then so is $\abss$.
\end{lemma}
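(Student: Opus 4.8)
Unfolding the definition of monotonicity of an abstract semantics, the claim to establish is: for all abstract configurations $\absconfig_I \leq \absconfig_I'$ and all $\absconfig_F \in \absconfigs_\predsig$, if $\absconfig_I \cup \absts \derives \absconfig_F$ then there is some $\absconfig_F' \geq \absconfig_F$ with $\absconfig_I' \cup \absts \derives \absconfig_F'$. Since $\derives$ is defined as the reflexive transitive closure of the one-step relation $\derivesone$, the natural strategy is induction on the number $n$ of one-step derivations witnessing $\absconfig_I \cup \absts \derives \absconfig_F$.

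In the base case $n = 0$ we have $\absconfig_F = \absconfig_I \leq \absconfig_I'$, so $\absconfig_F' := \absconfig_I'$ together with the trivial $0$-step derivation suffices. For the inductive step, I assume the claim for derivations of length $n$ and suppose $\absconfig_I \cup \absts$ derives $\absconfig_F$ in $n+1$ steps, i.e. there is an intermediate $\absconfig_M$ with $\absconfig_I \cup \absts \derives \absconfig_M$ in $n$ steps and $\absconfig_M \cup \absts \derivesone \absconfig_F$. The induction hypothesis yields $\absconfig_M' \geq \absconfig_M$ with $\absconfig_I' \cup \absts \derives \absconfig_M'$. By the definition of $\derivesone$, the final step adds exactly one fact: $\absconfig_F = \absconfig_M \cup \{ f \}$ where $\absconfig_M, \hornclause \derivesone f$ for some $\hornclause \in \abss$ (witnessed by a valuation $\valuation$ with $\valuation(\premises) \subseteq \absconfig_M$, $\valuation \satisfies \constraints$ and $f = \valuation(\conclusion)$). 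Since $\hornclause$ is assumed monotone and $\absconfig_M' \geq \absconfig_M$, there is $f' \geq f$ with $\absconfig_M', \hornclause \derivesone f'$, hence $\absconfig_M' \cup \absts \derivesone (\absconfig_M' \cup \{ f' \}) =: \absconfig_F'$; appending this step to the $n$-step derivation of $\absconfig_M'$ gives $\absconfig_I' \cup \absts \derives \absconfig_F'$.

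The remaining obligation --- and the only place where a little care is needed --- is to verify $\absconfig_F' \geq \absconfig_F$, because the order $\leq$ on abstract configurations is the lifting of the per-argument suborders from \autoref{sec:appendix-key} rather than plain set inclusion. Concretely, for each $p \in \absconfig_F = \absconfig_M \cup \{ f \}$ one must exhibit a dominating element of $\absconfig_F'$: if $p \in \absconfig_M$, then $\absconfig_M' \geq \absconfig_M$ provides some $q \in \absconfig_M' \subseteq \absconfig_F'$ with $p \ord{p} q$; if $p = f$, then $f' \in \absconfig_F'$ and $f \ord{p} f'$ by monotonicity of $\hornclause$. This closes the induction and shows that $\abss$ is monotone. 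I expect no genuine obstacle beyond this bookkeeping, since the substance of the argument has been delegated to the per-clause monotonicity hypothesis; the real work (the case analysis over the rules in $\delta(\cstar)$) lives in establishing that hypothesis, i.e. in the proof of \autoref{thm:delta-monotonicity}.
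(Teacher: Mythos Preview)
Your proof is correct; it is the natural induction on the length of the derivation, and each step is justified. The paper itself does not prove this lemma at all --- it merely prefaces the statement with ``Evidently'' and moves on --- so there is nothing to compare against beyond noting that you have spelled out exactly the straightforward induction the authors left implicit.
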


It is hence sufficient to prove the (one-step) monotonicity of all \hornc clauses in $(\absdoms_\textit{evm}, \predsig_\textit{evm}, \delta(c))$ (for arbitrary $c$).

For facilitating the proofs, we give a more syntactic characterization of \hornc clause monotonicity:
\begin{lemma}
\label{lem:rule-monotone-syntax}
Let $\hornclause = \hc{X}{\constraints}{\premises}{\conclusion}$ be a \hornc clause.
If for all variable assignments $\valuation$, $\valuation'$ with $(x, D) \in X \implies \valuation(x) \in D ~\land \valuation' (x) \in D$  it holds that
\begin{align*}
\valuation'(\premises) \geq \valuation(\premises) \land \valuation \satisfies \constraints \\
\implies \exists \valuation^* .~ \valuation^*(\premises) = \valuation'(\premises) \land \valuation^*(\conclusion) \geq \valuation(\conclusion) \land \valuation^* \satisfies \constraints
\end{align*}
then $\hornclause$ is monotone.
\end{lemma}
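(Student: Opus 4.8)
The plan is to unfold the definition of \hornc clause monotonicity and reduce it, by a single application of the assumed syntactic condition, to constructing a witnessing assignment for the larger configuration. Fix abstract configurations with $\absconfig \leq \absconfig'$ and assume $\absconfig, \hornclause \derivesone f$ for $\hornclause = \hc{X}{\constraints}{\premises}{\conclusion}$. By the definition of $\derivesone$ there is a type-respecting assignment $\valuation$ with $\valuation(\premises) \subseteq \absconfig$, $\valuation \satisfies \constraints$, and $f = \valuation(\conclusion)$; the goal is to exhibit $f' \geq f$ with $\absconfig', \hornclause \derivesone f'$.

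First I would build a type-respecting assignment $\valuation'$ with $\valuation'(\premises) \subseteq \absconfig'$ and $\valuation'(\premises) \geq \valuation(\premises)$. Each premise instance $\valuation(n(\vec z)) \in \valuation(\premises) \subseteq \absconfig$ has, by $\absconfig \leq \absconfig'$, some $n(\vec w) \in \absconfig'$ with $\valuation(\vec z) \ord{} \vec w$ componentwise w.r.t. the argument orders from $\predsig$. Since the premises of the rules produced by $\instabs{\cdot}_\pc$ are predicate applications over pairwise disjoint sets of variables, these choices are mutually compatible: define $\valuation'$ on the variables occurring in the premises by reading off the matching components of the chosen $\vec w$, and set $\valuation'(x) := \valuation(x)$ on the remaining variables. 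As the components of members of $\absconfig' \in \absconfigs_\predsig$ lie in the prescribed argument domains, $\valuation'$ respects types; and by construction $\valuation'(\premises) \subseteq \absconfig'$ and $\valuation'(\premises) \geq \valuation(\premises)$.

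Next I would instantiate the lemma's hypothesis with this pair $\valuation, \valuation'$: from $\valuation'(\premises) \geq \valuation(\premises)$ and $\valuation \satisfies \constraints$ it yields $\valuation^*$ with $\valuation^*(\premises) = \valuation'(\premises)$, $\valuation^*(\conclusion) \geq \valuation(\conclusion) = f$, and $\valuation^* \satisfies \constraints$. Then $\valuation^*(\premises) = \valuation'(\premises) \subseteq \absconfig'$ together with $\valuation^* \satisfies \constraints$ gives $\absconfig', \hornclause \derivesone \valuation^*(\conclusion)$, while $\valuation^*(\conclusion) \geq f$; hence $f' := \valuation^*(\conclusion)$ does the job, establishing monotonicity of $\hornclause$.

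The main obstacle is the middle step, the construction of $\valuation'$. The set-order $\absconfig \leq \absconfig'$ only supplies, for each premise literal \emph{separately}, a dominating literal in $\absconfig'$, so a single assignment simultaneously dominating all premise literals and landing inside $\absconfig'$ need not exist if a variable is shared between two premise literals; the argument must therefore exploit that the premises of the clauses in $\delta(\cstar)$ are ``linear'' in this sense (as illustrated by rule (C3), whose literals $\pnmstate{}$ and $\pnhalt$ share no variable). Relatedly, one must ensure $\valuation'$ only reshuffles the premise-related variables, leaving the variables that occur solely in $\constraints$ or the conclusion to the re-picked assignment $\valuation^*$ granted by the hypothesis --- which is precisely why the hypothesis is phrased as ``same premises, re-satisfy $\constraints$, relax the conclusion'' rather than being stated directly about $\valuation'$.
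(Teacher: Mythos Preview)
Your proposal is correct and follows essentially the same route as the paper's proof: unfold $\derivesone$ to obtain $\valuation$, use $\absconfig \leq \absconfig'$ together with the assumption that the premise literals have pairwise disjoint variables to build $\valuation'$ with $\valuation'(\premises) \subseteq \absconfig'$ and $\valuation'(\premises) \geq \valuation(\premises)$, then invoke the syntactic hypothesis to obtain $\valuation^*$ and conclude. Your explicit discussion of why disjointness of premise variables is needed mirrors (and is slightly more careful than) the paper's one-line appeal to ``the variables of all premises are distinct.''
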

\begin{proof}
Assume that (1)
\begin{align*}
\valuation'(\premises) \geq \valuation(\premises) \land \valuation \satisfies \constraints \\
\implies \exists \valuation^* .~ \valuation^*(\premises) = \valuation'(\premises) \land \valuation^*(\conclusion) \geq \valuation(\conclusion) \land \valuation^* \satisfies \constraints
\end{align*}
holds for valuations as defined above. We show the monotonicity of $\hornclause= \hc{X}{\constraints}{\premises}{\conclusion}$. To this end we assume some (2) $\absconfig\ \geq \absconfig$ and (3) $\absconfig, \hornclause \derivesone f$ and show that there is some valuation $\valuation'$ such that $\valuation'(P) \subseteq \absconfig'$, $\valuation' \satisfies \constraints$ and $\valuation'(\conclusion) \geq f$.
From (3) it is known that there is some valuation $\valuation$ such that $\valuation(P) \subseteq \absconfig$, $\valuation \satisfies \constraints$ and $f = \valuation(c)$. From (2), we get that for every $p \in \valuation(P)$ there exists a $p' \in \absconfig'$ such that $p \leq p'$. Given that the variables of all premises are distinct, we can easily construct a valuation $\valuation'$ such that $\valuation'(q) = p$ for some $q \in \premises$ and consequently $\valuation'(P) \subseteq \absconfig'$ and $\valuation(P) \leq \valuation'(P)$. Using (1), we get that there is some $\valuation^*$ such that $\valuation^*(P) = \valuation'(P)$ and $\valuation^*(\conclusion) \geq \valuation(\conclusion)$ and $\valuation^* \satisfies \constraints$.
Consequently, since $\valuation^*(P) =  \valuation'(P) \subseteq \absconfig'$ and $\valuation^*(\conclusion) \geq \valuation(\conclusion) =f$, $\valuation^*$ satisfies all required conditions.
\end{proof}
This lemma reduces proving monotonicity of the constrained \hornc clause to proving the monotonicity of the clause's constraints.

\subsubsection{Abstract operations}
We exemplary show the monotonicity of the rules shown in \autoref{fig:absrules}.
To this end we will first establish some general monotonicity results on abstract operations.

\begin{lemma}[Monotonicity of abstract binary operations]
\label{lem:abs-binop-monotone}
Let $\abs{x}$, $\abs{x'}$, $\abs{y}$, $\abs{y'}$ such that $\abs{x} \leq \abs{x'}$ and $\abs{y} \leq \abs{y'}$. Then
\begin{align*}
\absop{\binop}(\abs{x}, \abs{y}) \ord{\absdom} \absop{\binop}(\abs{x'}, \abs{y'})
\end{align*}
\end{lemma}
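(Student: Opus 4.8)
The plan is to prove the statement by a simple case distinction, exploiting that the order $\ord{\absdom}$ is almost discrete: by the definition recalled in \autoref{sec:appendix-ana-def}, $\ord{\absdom} = \{(\abs{a}, \abs{b}) \mid \abs{b} = \top \lor \abs{a} = \abs{b}\}$, so $\top$ is the greatest element (\autoref{lem:suprema}) and the only element below a concrete natural $n \in \NN$ is $n$ itself.

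First I would unfold the definition of $\absop{\binop}$: it returns $\binop(\abs{a}, \abs{b})$ when both arguments lie in $\NN$, and $\top$ otherwise. Then I split on whether both $\abs{x'} \in \NN$ and $\abs{y'} \in \NN$ hold. If at least one of $\abs{x'}, \abs{y'}$ equals $\top$, then $\absop{\binop}(\abs{x'}, \abs{y'}) = \top$ by definition, and the claim $\absop{\binop}(\abs{x}, \abs{y}) \ord{\absdom} \top$ holds for any left-hand side. Otherwise $\abs{x'}, \abs{y'} \in \NN$; from $\abs{x} \ord{\absdom} \abs{x'}$ together with $\abs{x'} \neq \top$ the definition of $\ord{\absdom}$ forces $\abs{x} = \abs{x'}$, and similarly $\abs{y} = \abs{y'}$. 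Hence $\abs{x}, \abs{y} \in \NN$ as well, both sides evaluate to $\binop(\abs{x}, \abs{y}) = \binop(\abs{x'}, \abs{y'})$, and the relation follows from reflexivity of $\ord{\absdom}$.

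There is essentially no obstacle here: the argument is purely definitional. The single point worth stating explicitly is the observation that $\abs{a} \ord{\absdom} n$ with $n \in \NN$ implies $\abs{a} = n$, which drives the non-trivial case; everything else is just unfolding definitions. I expect the companion monotonicity fact for abstract unary operators to be handled by exactly the same case split, so this lemma serves as a template for those as well.
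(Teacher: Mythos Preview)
Your proof is correct. The paper does not spell out a proof of this lemma at all; it merely states it as an auxiliary result and uses it later (e.g., in the $\ADD$ case of \autoref{thm:delta-monotonicity}). Your case split on whether $\abs{x'},\abs{y'}\in\NN$, together with the observation that $\abs{a}\ord{\absdom} n$ with $n\in\NN$ forces $\abs{a}=n$, is exactly the straightforward argument the paper implicitly has in mind.
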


\begin{lemma}[Monotonicity of abstract comparison operations]
Let $\abs{x}$, $\abs{x'}$, $\abs{y}$, $\abs{y'}$ such that $\abs{x} \ord{\absdom} \abs{x'}$ and $\abs{y} \ord{\absdom} \abs{y'}$. Then
\begin{align*}
\absop{\compop}(\abs{x}, \abs{y}) = 1 \implies \absop{\compop}(\abs{x'}, \abs{y'}) = 1
\end{align*}
\end{lemma}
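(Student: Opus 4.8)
The plan is to mirror the proof of \autoref{lem:abs-binop-monotone}, but to exploit the asymmetry built into the definition of $\absop{\compop}$, which makes the argument even shorter: abstract comparison returns $1$ whenever at least one of its arguments lies outside $\NN$, so the only situation that requires actual work is the one where both of the ``larger'' arguments $\abs{x'}, \abs{y'}$ are already concrete natural numbers.

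First I would unfold the order: recall that $\abs{a} \ord{\absdom} \abs{b}$ holds iff $\abs{b} = \top$ or $\abs{a} = \abs{b}$. I then split on whether $\abs{x'} \in \NN$ and $\abs{y'} \in \NN$. If $\abs{x'} = \top$ or $\abs{y'} = \top$, then by the defining case distinction of $\absop{\compop}$ we immediately obtain $\absop{\compop}(\abs{x'}, \abs{y'}) = 1$, and the implication holds regardless of its premise. In the remaining case both $\abs{x'}$ and $\abs{y'}$ are natural numbers, hence in particular distinct from $\top$; combined with $\abs{x} \ord{\absdom} \abs{x'}$ and $\abs{y} \ord{\absdom} \abs{y'}$ this forces $\abs{x} = \abs{x'}$ and $\abs{y} = \abs{y'}$. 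Consequently both sides of the claimed implication collapse to the same concrete comparison, $\absop{\compop}(\abs{x}, \abs{y}) = \compop(\abs{x}, \abs{y}) = \compop(\abs{x'}, \abs{y'}) = \absop{\compop}(\abs{x'}, \abs{y'})$, and since the left-hand side is assumed to equal $1$, so does the right-hand side.

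There is essentially no obstacle here: the statement is a two-case distinction on the shape of $\abs{x'}, \abs{y'}$ using only the definitions of $\ord{\absdom}$ and $\absop{\compop}$ together with the fact that $\top \notin \NN$. The one point worth flagging is that, unlike in the binary-operation case, no ordering structure on the Boolean results is needed: the conclusion is phrased directly as the one-sided implication ``$=1 \Rightarrow\ =1$'', which is exactly what is consumed downstream when establishing monotonicity of the \hornc clauses that branch on comparison outcomes (an over-approximation of the inputs may only ever enable \emph{more} behaviour). A symmetric statement for the value $0$ would in general be \emph{false} (take $\abs{x'} = \top$), which is precisely why the lemma is stated one-directionally.
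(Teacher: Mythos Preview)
Your argument is correct. Note, however, that the paper does not actually supply a proof for this lemma: it is stated alongside \autoref{lem:abs-binop-monotone} and the memory-access lemma without proof, presumably because all three follow by an immediate case split on the definitions of $\ord{\absdom}$ and the abstract operators. Your write-up is exactly such a case split and would serve perfectly well as the omitted routine verification; the additional remark about why the symmetric ``$=0 \Rightarrow\ =0$'' direction fails is a nice clarification that the paper does not make explicit.
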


\begin{lemma}[Monotonicity of memory access]
\label{lem:mem-access-monotone}
Let $\mem_1, \mem_2 \in \NN \to \absdom$ such that $\mem_1 \ord{\NN \to \absdom} \mem_2$ and let $p \in \NN$.
\begin{align*}
\accessword{\mem_1}{p} \ord{\absdom} \accessword{\mem_2}{p}
\end{align*}
\end{lemma}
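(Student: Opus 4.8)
The plan is to unfold the definition of $\accessword{\cdot}{\cdot}$ from \autoref{fig:accessword}, split on whether the byte offset $p$ is word-aligned, and in each branch reduce monotonicity of $\accessword{\cdot}{p}$ to (i) monotonicity of $\select{\cdot}{\cdot}$, which is immediate from the pointwise definition of $\ord{\NN \to \absdom}$ in \autoref{sec:appendix-ana-def}, together with (ii) monotonicity of the auxiliary lifted operations $\extract{\cdot}{\cdot}{\cdot}$ and $\aconcat{\cdot}$ in their $\absdom$-valued arguments.

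\textbf{Key general fact.} First I would record the observation that underlies everything: the order $\ord{\absdom}$ is flat-with-top, i.e.\ $\abs{a} \ord{\absdom} \abs{a'}$ holds exactly when $\abs{a} = \abs{a'}$ or $\abs{a'} = \top$. Hence any operation $g$ with codomain $\absdom$ whose definition (as given in \autoref{sec:appendix-ana-def}) returns $\top$ as soon as one of its $\absdom$-typed arguments is not a concrete natural number, and otherwise agrees with a fixed function on $\NN$, is monotone w.r.t.\ $\ord{\absdom}$ in each such argument: if the argument does not change, $g$ returns the same value; if it increases, it can only increase to $\top$, and then $g$ returns $\top$, which by \autoref{lem:suprema} dominates any element of $\absdom$. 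This pattern applies verbatim to $\extract{\cdot}{l}{r}$ (monotone in $\abs{v}$) and to $\abs{v} \aconcat{n} \abs{w}$ (monotone in both $\abs{v}$ and $\abs{w}$).

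\textbf{Case analysis.} In the aligned case $p \bmod 32 = 0$ we have $\accessword{\mem_i}{p} = \select{\mem_i}{p/32}$, so the claim is precisely the instance at $n = p/32$ of the hypothesis $\mem_1 \ord{\NN \to \absdom} \mem_2$. In the unaligned case, writing $\ell = \lfloor p/32 \rfloor$, $h = \lceil p/32 \rceil$, $k = p \bmod 32$, the hypothesis gives $\select{\mem_1}{\ell} \ord{\absdom} \select{\mem_2}{\ell}$ and $\select{\mem_1}{h} \ord{\absdom} \select{\mem_2}{h}$. Applying monotonicity of $\extract{\cdot}{k}{31}$ and of $\extract{\cdot}{0}{k-1}$ yields $\extract{\select{\mem_1}{\ell}}{k}{31} \ord{\absdom} \extract{\select{\mem_2}{\ell}}{k}{31}$ and the analogous inequality for the low fragment; monotonicity of $\aconcat{p}$ in both arguments then composes these into $\accessword{\mem_1}{p} \ord{\absdom} \accessword{\mem_2}{p}$, which closes the proof.

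\textbf{Main obstacle.} There is no deep obstacle; the one point requiring care is to make sure that the ``lifting to $\absdom$'' of $\extract{\cdot}{\cdot}{\cdot}$ and $\aconcat{\cdot}$ referenced in \autoref{fig:accessword} is exactly the collapse-to-$\top$ lifting spelled out in \autoref{sec:appendix-ana-def}, so that the general monotonicity pattern above genuinely applies; everything else is a routine unfolding plus the pointwise reading of $\ord{\NN \to \absdom}$.
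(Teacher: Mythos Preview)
Your proposal is correct. The paper does not actually provide a proof of \autoref{lem:mem-access-monotone}; it is one of the auxiliary lemmas that are explicitly ``stated but not proved'' in the soundness appendix. Your argument---case split on word alignment, then in the unaligned case chain the pointwise hypothesis through the monotonicity of the lifted $\extract{\cdot}{\cdot}{\cdot}$ and $\aconcat{\cdot}$ operators via the flat-with-top structure of $\ord{\absdom}$---is exactly the routine unfolding the paper has in mind, and it goes through as written.
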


We now give a proof sketch for \autoref{thm:delta-monotonicity}, illustrating the general proof strategy.
\begin{proof}
For showing the monotonicity of $\delta(\cstar)$ for arbitrary $\cstar$ it is sufficient to show the one-step derivability of all rules in $\instabs{\instr}_\lpc$ for all instructions $\instr$ and an arbitrary program counter $\lpc$. Hence, let $\lpc \in \NN$ be arbitrary.
The proof proceeds by case distinction on the instruction set.
\begin{itemize}
\item[$\ADD$]
We now prove the monotonicity of the rules for addition in \autoref{fig:absrules}.
Recall the definition of the clause for addition.
\begin{align*}
& \predmstate{\pc}{\siz}{\stackv}{\mem}{\stor}{\cl}
~\land~ \siz > 1 \\
&~\land~ \abs{x} = \select{\stackv}{\siz-1}
~\land~ \abs{y} = \select{\stackv}{\siz-2} \\
&\implies
\predmstate{\pc +1}{\siz-1}{\update{\stackv}{\siz-2}{\abs{x} \absop{+} \abs{y}}}{\mem}{\stor}{\cl}
\end{align*}
We prove the monotonicity using \autoref{lem:abs-binop-monotone}.
Assume that there is some variable assignment satisfying the rule constraints, meaning that there are values $(\siz, \stackv)$, $\mem$, $\stor$, $\cl$, $\abs{x}$, $\abs{y}$ satisfying $\siz > 1$, $\abs{x} = \select{\stackv}{\siz-1} $ and $\abs{y} = \select{\stackv}{\siz-2}$. We show for any values $(\siz',\stack') \dro{\NN \times (\NN\to \absdom)} (\siz, \stackv)$, $\mem' \dro{\NN \to \absdom} \mem$, $\stor' \dro{\NN \to \absdom} \stor$, $\cl' \dro{\BB} \cl$ that there are $\abs{x'}$, $\abs{y'}$ such that $\siz' > 1$, $\abs{x'} = \select{\stackv'}{\siz'-1} $ and $\abs{y'} = \select{\stackv'}{\siz'-2}$, and
$(\siz-1, \update{\stackv}{\siz-2}{\abs{x} \absop{+} \abs{y}}) \ord{\NN \times (\NN \to \absdom)} (\siz'-1, \update{\stackv'}{\siz'-2}{\abs{x'} \absop{+} \abs{y'}})$.
First we observe that $\siz = \siz'$ and (since $(\siz, \stackv) \ord{\NN \times (\NN\to \absdom)} (\siz', \stackv')$).
We pick $\abs{x'} = \select{\stackv'}{\siz-1} $ and $\abs{y'} = \select{\stackv'}{\siz-2}$ and from $(\siz, \stackv) \ord{\NN \times (\NN\to \absdom)} (\siz', \stackv')$ we know that $\select{\stackv}{\siz-1}  \ord{\absdom} \select{\stackv'}{\siz-1}$ and $\select{\stackv}{\siz-2} \ord{\absdom} \select{\stackv'}{\siz-2}$, so consequently also $\abs{x} \ord{\absdom} \abs{x'}$ and $\abs{y}\ord{\absdom} \abs{y'}$.
So we are left to show that
$(\siz -1, \update{\stackv}{\siz-2}{\abs{x} \absop{+} \abs{y}}) \ord{\NN \times (\NN\to \absdom)} (\siz'-1, \update{\stackv'}{\siz-2}{\abs{x'} \absop{+} \abs{y'}})$. Since $(\siz, \stackv) \ord{\NN \times (\NN\to \absdom)} (\siz', \stackv')$, we only need to show that $\abs{x} \absop{+} \abs{y} \ord{\absdom} \abs{x'} \absop{+} \abs{y'}$ which immediately follows from \autoref{lem:abs-binop-monotone}.
\item[$\MLOAD$]
Recall the definition of the rule for addition:
\begin{align*}
&\predmstate{\pc}{\siz}{\stackv}{\mem}{\stor}{\cl}
~\land~ \siz > 1 \\
&~\land~ \abs{o} = \select{\stackv}{\siz-1}
~\land \abs{v} = \cond{(\abs{o} \in \NN)}{\accessword{\mem}{\abs{o}}}{\top}  \\
&\implies
\predmstate{\pc +1}{\siz}{\update{\stackv}{\siz-1}{\abs{v}}}{\mem}{\stor}{\cl}
\end{align*}
We prove the monotonicity using \autoref{lem:abs-binop-monotone}.
Assume that there is some variable assignment satisfying the rule constraints, meaning that there are values $(\siz, \stackv)$, $\mem$, $\stor$, $\cl$, $\abs{o}$, $\abs{v}$ satisfying
$\siz > 0$, $\abs{o} = \select{\stackv}{\siz-1}$, and $\abs{v} = \cond{(\abs{o} \in \NN)}{\accessword{\mem}{\abs{o}}}{\top}$.
We show for any values $(\siz',\stack') \dro{\NN \times (\NN\to \absdom)} (\siz, \stackv)$, $\mem' \dro{\NN \to \absdom} \mem$, $\stor' \dro{\NN \to \absdom} \stor$, $\cl' \dro{\BB} \cl$ that there are $\abs{o'}$, $\abs{v'}$ such that $\siz' > 1$, $\abs{o'} = \select{\stackv'}{\siz'-1}$ and $\abs{v'} = \cond{(\abs{o'} \in \NN)}{\accessword{\mem}{\abs{o'}}}{\top}$, and $\update{\stackv}{\siz-1}{\abs{v}}\ord{\NN \times (\NN \to \absdom)} \update{\stackv'}{\siz'-1}{\abs{v'}}$.
First we observe $\siz = \siz'$ and $\cl = \cl'$.
We pick $\abs{o'} = \select{\stackv'}{\siz'-1}$ and $\abs{v'} = \cond{(\abs{o'} \in \NN)}{\accessword{\mem'}{\abs{o'}}}{\top}$. We know that $\select{\stackv}{\siz-1} \ord{\absdom} \select{\stackv'}{\siz'-1}$ since $(\siz',\stack') \dro{\NN \times (\NN\to \absdom)} (\siz, \stackv)$ and hence also $\abs{o} \ord{\absdom} \abs{o'}$.
For showing that $(\siz,\update{\stackv}{\siz-1}{\abs{v}})\ord{\NN \times (\NN \to \absdom)} (\siz, \update{\stackv'}{\siz'-1}{\abs{v'}})$
it is sufficient to show that $\abs{v} \ord{\absdom} \abs{v'}$.
We make a case distinction on $\abs{o} \in \NN$
\begin{itemize}
\item[$\abs{o} \in \NN$]
In this case $\abs{v} = \accessword{\mem}{\abs{o}}$. Since $\abs{o} \ord{\absdom} \abs{o'}$ we know that either $\abs{o'} = \abs{o}$ or $\abs{o'} = \top$.
\begin{itemize}
\item[$\abs{o'} = \abs{o}$]
In this case clearly $\abs{o'} \in \NN$ and hence $\abs{v'} = \accessword{\mem'}{\abs{o'}}$.
Since $\mem \ord{\NN \to \absdom} \mem'$, we know from \autoref{lem:mem-access-monotone} that $\accessword{\mem}{\abs{o}} \ord{\absdom} \accessword{\mem'}{\abs{o'}}$ and hence $\abs{v} \ord{\absdom} \abs{v'}$.
\item[$\abs{o'} = \top$]
In this case  $\abs{v'} = \top$. Since $\top$ is the top element of $\absdom$ (\autoref{lem:suprema}), trivially $\abs{v} \ord{\absdom} \abs{v'}$.
\end{itemize}
\item[$\abs{o} = \top$] In this case $\abs{v} = \top$ and since $\abs{o} \ord{\absdom} \abs{o'}$ also $\abs{o'} = \top$ and hence $\abs{v'} = \top$ and consequently $\abs{v} \ord{\absdom} \abs{v'}$.
\end{itemize}
\end{itemize}
\end{proof}

\subsubsection{Soundness of abstract operations}
In addition to their monotonicity, we are also interested in the soundness of abstract operations.
Intuitively, an abstract operation is sound, if its result is at least as abstract than the result of the concrete operation
We formally state soundness for binary operations and comparison operations.

\begin{lemma}[Soundness of abstract binary operations]
\label{lem:abs-binop-sound}
Let $x, y \in \NN$. Then
\begin{align*}
\binop(x, y) \ord{\absdom} \absop{\binop}(x, y)
\end{align*}
\end{lemma}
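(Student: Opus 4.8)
The plan is to reduce the claim to reflexivity of $\ord{\absdom}$ after a single unfolding of the definition of the abstract binary operation. Since $\absop{\binop}$ is defined by a case split on whether its arguments lie in $\NN$, and the hypothesis of the lemma is precisely $x, y \in \NN$, the first step is to observe that we are in the first branch, so that $\absop{\binop}(x, y) = \binop(x, y)$ holds by definition; here $x$ and $y$ are read as elements of $\absdom = \NN \cup \{\top\}$ that happen to be concrete. The second step records that $\binop(x, y) \in \NN$ because $\binop \in \NN \times \NN \to \NN$, hence the right-hand side of the inequality is itself a concrete value.

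After these rewritings the goal $\binop(x,y) \ord{\absdom} \absop{\binop}(x,y)$ becomes $\binop(x,y) \ord{\absdom} \binop(x,y)$, which is an instance of the second disjunct $\abs{a} = \abs{b}$ in the definition $\ord{\absdom} = \{(\abs{a}, \abs{b}) \mid \abs{b} = \top \lor \abs{a} = \abs{b}\}$; that is, $\ord{\absdom}$ is reflexive and the claim follows. Morally, the content of the lemma is simply that on concrete inputs the abstract operation performs no over-approximation at all, so soundness is immediate.

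I do not expect any obstacle here: the statement is a direct consequence of the definitions of $\absop{\binop}$ and of $\ord{\absdom}$, and the only thing worth stating explicitly is the identity $\absop{\binop}(x,y) = \binop(x,y)$ for concrete $x, y$. The complementary situation, where one of the arguments is the abstract value $\top$, is not needed for this lemma, but would be handled by noting that $\absop{\binop}$ then returns $\top$, which over-approximates every natural number by \autoref{lem:suprema}; this is precisely the reasoning that combines with \autoref{lem:abs-binop-monotone} when discharging the $\ADD$ case of the main soundness proof.
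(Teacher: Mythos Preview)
Your proof is correct. The paper does not actually supply a proof for this lemma; it is one of the auxiliary lemmas the authors explicitly state but do not prove, so your argument---unfolding the definition of $\absop{\binop}$ on concrete inputs and invoking reflexivity of $\ord{\absdom}$---is exactly the routine verification the paper leaves to the reader.
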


\begin{lemma}[Soundness of abstract comparison operations]
Let $x, y \in \NN$. Then
\begin{align*}
\compop(x, y) = 1 \implies \absop{\compop}(x, y) = 1
\end{align*}
\end{lemma}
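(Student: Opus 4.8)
The plan is to prove this directly by unfolding the definition of the abstract comparison operator $\absop{\cdot}$ given just above the statement. Recall that for a concrete comparison $\compop \in \NN \times \NN \to \BB$ and abstract arguments $\abs{x}, \abs{y} \in \absdom$ one sets $\absop{\compop}(\abs{x}, \abs{y}) = \compop(\abs{x}, \abs{y})$ whenever both $\abs{x}, \abs{y} \in \NN$, and $\absop{\compop}(\abs{x}, \abs{y}) = 1$ in every other case. In the lemma at hand we are handed genuinely concrete values $x, y \in \NN$, so the side condition of the first branch of the case distinction is satisfied and that branch applies verbatim.

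Concretely, I would first instantiate the definition with $\abs{x} \define x$ and $\abs{y} \define y$; since $x, y \in \NN$ by assumption, this yields the equality $\absop{\compop}(x, y) = \compop(x, y)$. Combining this with the hypothesis $\compop(x, y) = 1$ immediately gives $\absop{\compop}(x, y) = 1$, which is exactly the desired conclusion. The companion Lemma~\ref{lem:abs-binop-sound} for binary arithmetic operations is discharged the same way: for $x, y \in \NN$ the definition forces $\absop{\binop}(x,y) = \binop(x,y)$, and reflexivity of $\ord{\absdom}$ then gives $\binop(x,y) \ord{\absdom} \absop{\binop}(x,y)$.

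There is no genuine obstacle here: the whole content of the lemma is that on fully concrete arguments the abstract operator collapses to the concrete one, which holds by construction. The single point that deserves explicit mention is checking that the case split in the definition of $\absop{\compop}$ resolves to the concrete branch rather than the catch-all branch returning $1$ — and this is guaranteed precisely because both operands lie in $\NN$ and not at the top element $\top$. The lemma is used only as a building block downstream: it is the ``soundness'' counterpart to the monotonicity lemma for comparisons, and together they allow the main soundness argument to replace concrete guard evaluations by their abstract surrogates without discarding any feasible execution branch.
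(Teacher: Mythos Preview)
Your proof is correct. The paper itself does not spell out a proof for this lemma (it explicitly states that auxiliary lemmas of this kind are stated but not proved), and your direct unfolding of the definition of $\absop{\compop}$ on concrete arguments is exactly the intended argument.
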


That the memory access is sound, is captured by the following lemma:
\begin{lemma}[Soundness of memory access]
Let $\mem \in \NN \to \NN$ and $p \in \NN$.
\begin{align*}
 \mem(p) \ord{\absdom}\accessword{\towordmem(\mem)}{p}
\end{align*}
\end{lemma}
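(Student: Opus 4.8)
The plan is to reduce the claimed $\ord{\absdom}$-inequality to a plain equality and then verify it by unfolding the definitions of $\towordmem$ and $\accessword{\cdot}{\cdot}$. First note that, since $\mem \in \NN \to \NN$ has no $\top$ entries, $\towordmem(\mem)$ again maps every index to a natural number, so neither the two $\extract{\cdot}{\cdot}{\cdot}$ applications occurring in $\accessword{\towordmem(\mem)}{p}$ (whose index pairs $[l,r]$ will satisfy $l \leq r$) nor the $\aconcat{}$ combining them can evaluate to $\top$; hence $\accessword{\towordmem(\mem)}{p} \in \NN$, and by reflexivity of $\ord{\absdom}$ it suffices to show the equality $\mem(p) = \accessword{\towordmem(\mem)}{p}$. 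Here $\mem(p)$ on the left abbreviates the concrete $32$-byte word read at byte offset $p$, i.e.\ the big-endian assembly of the bytes $\mem(p),\dots,\mem(p+31)$, whereas $\mem(\cdot)$ on the right is ordinary function application into the byte memory.

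I would start with a normal-form computation for $\towordmem$: unfolding its definition in \autoref{fig:absfun} together with that of $\aconcat{}$ shows, for every word index $x$, that $\select{\towordmem(\mem)}{x}$ equals the big-endian word built from the bytes $\mem(32x),\dots,\mem(32x+31)$ --- this is a finite unfolding, as the number of iterated concatenations is fixed at $31$. Writing $p = 32q + k$ with $0 \leq k < 32$, we have $\lfloor p/32\rfloor = q$ and, when $k \neq 0$, $\lceil p/32\rceil = q+1$.

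Next I would case-split exactly as the definition of $\accessword{\cdot}{\cdot}$ in \autoref{fig:accessword} does. If $k = 0$ then $32q = p$ and $\accessword{\towordmem(\mem)}{p} = \select{\towordmem(\mem)}{q}$, which by the normal form is the big-endian word of $\mem(p),\dots,\mem(p+31)$, i.e.\ $\mem(p)$. If $0 < k < 32$, then using the normal form for $\select{\towordmem(\mem)}{q}$ and $\select{\towordmem(\mem)}{q+1}$ and reading $\extract{v}{l}{r}$ as the segment of $v$ from its $l$th to its $r$th byte in big-endian order, one checks that $\extract{\select{\towordmem(\mem)}{q}}{k}{31}$ is the $(32-k)$-byte value formed by $\mem(32q+k),\dots,\mem(32q+31)$, i.e.\ by $\mem(p),\dots,\mem(p+31-k)$, and that $\extract{\select{\towordmem(\mem)}{q+1}}{0}{k-1}$ is the $k$-byte value formed by $\mem(32q+32),\dots,\mem(32q+31+k)$, i.e.\ by $\mem(p+32-k),\dots,\mem(p+31)$; concatenating the former as the high part with the latter as the low ($k$-byte) part via $\aconcat{}$ reassembles precisely the big-endian word of $\mem(p),\dots,\mem(p+31)$, which is again $\mem(p)$. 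In both cases $\accessword{\towordmem(\mem)}{p} = \mem(p)$, as required.

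The one genuinely error-prone step is the index bookkeeping in the unaligned case: tracking which physical cell of $\mem$ each of the two $\extract$ calls selects, in which big-endian position within its source word, and checking that the split width $32-k$ (top part) and the concatenation width $k$ (bottom part) line the $32$ bytes up contiguously and in the correct order. The cleanest way to avoid slips is to keep the substitution $p = 32q + k$ explicit throughout and to phrase the normal form of $\towordmem$, the semantics of $\extract{\cdot}{\cdot}{\cdot}$, and that of $\aconcat{}$ uniformly in terms of the coefficient list $\bigl(\mem(p+i),\,256^{\,31-i}\bigr)_{0\leq i\leq 31}$, which reduces the whole lemma to an identity between two finite sums. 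No appeal to the small-step semantics is needed --- this is purely a property of the $\towordmem$/$\accessword{\cdot}{\cdot}$ encoding --- and it is consumed, together with \autoref{lem:mem-access-monotone}, in the $\MLOAD$ case of the proof of \autoref{theorem:soundness}.
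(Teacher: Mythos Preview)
The paper does not prove this lemma; it is one of the auxiliary results that are explicitly ``state[d], but not prove[d]'' in the soundness appendix. Your argument is correct and follows the only natural route: observe that starting from a concrete $\mem$ no $\top$ can be introduced by $\towordmem$, $\extract{\cdot}{\cdot}{\cdot}$, or $\aconcat{}$, reduce the $\ord{\absdom}$-claim to an equality, and then verify that equality by unfolding the definitions with a case split on $p \bmod 32$. Your reading of the left-hand $\mem(p)$ as the $32$-byte word loaded at byte offset $p$ (rather than the single byte $\mem(p)$) is the only interpretation under which the statement is meaningful, and it matches how the lemma is consumed in the $\MLOAD$ case of the main soundness argument.

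One small remark: the paper's definition of $\accessword{\cdot}{\cdot}$ writes the concatenation subscript as $p$ where $p \bmod 32$ is evidently intended, and the algebraic clause for $\aconcat{n}$ has its operands swapped relative to the accompanying prose description. You have silently worked with the intended semantics (high part of width $32-k$, low part of width $k$), which is the right call; it may be worth flagging these typos explicitly so the index bookkeeping you identify as the error-prone step is anchored to an unambiguous definition.
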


\subsubsection{Main Proof}
We slightly refine \autoref{theorem:soundness} to consider collision-free executions of $\cstar$, a detail that we omitted in the original formulation for the sake of presentation.
\begin{theorem-non}[Soundness]
Let $\cstar$ be a contract whose code does not contain $\DELEGATECALL$ or $\CALLCODE$. Let $\transenv$ be a transaction environment and let $\callstack$ and $\callstack'$ be annotated callstacks such that $\size{\callstack'} > 0$. Then for all execution states $\exstate$ that are strongly consistent with $\cstar$ such that $\ssteps{\transenv}{\cons{\annotate{\exstate}{\anacontract}}{\callstack}}{\concatstack{\callstack'}{\callstack}}$ is a collision-free execution, it holds that
\par
\nobreak
{
\noindent
\begin{align*}
\forall \absconfig_I.~ \configabs{\predsig}_{\cstar}([\annotate{\exstate}{\cstar}]) \leq \absconfig_I
\implies
\exists \absconfig.~ \absconfig_I, \delta(\cstar) \derives \absconfig \ ~\land~ \configabs{\predsig}_{\cstar}(\callstack') \leq \absconfig
\end{align*}
}%
\end{theorem-non}

We will give a proof sketch for the most interesting cases of the soundness proof, providing formal arguments for the soundness of local operations as well as transaction-initiating instructions. In particular, the proof details out the correctness argument for the abstractions of the $\CALL$ rule by covering call initiation as well as returning from contracts calls.
\begin{proof} (sketch)
By complete induction on the number $n$ of small-steps.
\begin{itemize}
\item Case $n=0$. In the case of the empty reduction sequence, we have that
$\callstack' = [\annotate{\exstate}{\anacontract}]$ and consequently the claim trivially follows by the reflexivity of $\vdash$.
\item Case $n >0$.
Let $\nsteps{\transenv}{\cons{\annotate{\exstate}{c^*}}{\callstack}}{\callstack''}{n-1}$ and $\sstep{\transenv}{\callstack''}{\concatstack{\callstack'}{\callstack}}$. By Lemma~\ref{lem:callstack-preservation}, it holds that $\callstack'' = \concatstack{\callstack^*}{\callstack}$ for some $\callstack^*$ with $\size{\callstack^*} > 0$.
By the inductive hypothesis we know that for all $\absconfig_I \geq \configabs{}_{\cstar}([\annotate{\exstate}{\anacontract}])$ there is some $\absconfig_{\callstack^*} \geq \configabs{}_{\cstar}(\callstack^*)$ such that $\absconfig_I \cup \delta(\cstar) \derives \absconfig_{\callstack^*}$.
Consequently, for proving the claim, it is sufficient to show that there is some $\absconfig_{\callstack'}\geq \configabs{}_{\cstar}(\callstack')$ such that $\absconfig_{\callstack^*} \cup \delta(\cstar) \derives \absconfig_{\callstack'}$.
As $\size{\callstack^*} > 0$, we know that $\callstack^* = \cons{\annotate{s'}{c'}}{\callstack^{**}}$ for some execution state $s'$, contract $c'$ and callstack $\callstack^{**}$.
The proof is by case analysis on the rule applied in the last reduction step. We show here exemplary the cases for arithmetic operations as well as the rule for calling.
\begin{itemize}
\item[$\ADD$] (non exception case).
\sloppy
Then $\exstate' = \regstatefull{\mstate}{\exenv}{\gstate}{\transeffects}$, $\arraypos{\access{\exenv}{\activecode}}{\access{\mstate}{\pc}} = \ADD$ and $\callstack' = \cons{\annotate{\regstatefull{\mstate'}{\exenv}{\gstate}{\transeffects}}{c'}}{\callstack^{**}}$.
We distinguish the two cases on whether the top stack element $\annotate{\exstate'}{c'}$ is translated or not ($c' = \cstar$)
\begin{itemize}
\item[$c' \neq \cstar$]
In this case $\configabs{}_{\cstar}(\callstack^*) = \configabs{}_{\cstar}(\callstack^{**})$.
As $\ADD$ is a local instruction, we know that $\callstack' = \cons{\annotate{s''}{c'}}{\callstack^{**}}$ and hence also $\configabs{}_{\cstar}(\callstack') = \configabs{}_{\cstar}(\callstack^{**})$. The claim hence follows trivially from the reflexivity of $\derives$. The same reasoning applies to all other local instructions.
\item[$c' = \cstar$]
In this case $\configabs{}_{\cstar}(\callstack^*) = \configabs{}_{s}(s', \access{\cstar}{\addr}, \cl) \cup  \configabs{}_{\cstar}(\callstack^{**}) $ for some $\cl \in \BB$.
As $s'$ is strongly consistent with $\cstar$ (by \autoref{lem:agreement}), we know that $\access{\exenv}{\code} = \access{\cstar}{\code}$ and hence $\delta(\cstar) \supseteq \instabs{\ADD}_{\access{\mstate}{\pc}}$. The claim then follows from the monotonicity of $\delta(\cstar)$ (\autoref{thm:delta-monotonicity}) and the soundness of abstract addition (\autoref{lem:abs-binop-sound}). The same argumentation applies to all other local operations.
\end{itemize}
\item[$\CALL$]
\sloppy
(all preconditions satisfied, called account exists). Then $\exstate' = \regstatefull{\mstate}{\exenv}{\gstate}{\transeffects}$, $\arraypos{\access{\exenv}{\activecode}}{\access{\mstate}{\pc}} = \CALL$ and $\callstack' = \cons{\annotate{\regstatefull{\mstate'}{\exenv'}{\gstate'}{\transeffects}}{\dot{c}}}{\callstack^*}$  such that $\mstate'$ is initial, and $\access{\gstate(a)}{\stor} = \access{\gstate'(a)}{\stor}$ for all addresses $a$.
Again we distinguish the cases whether the newly pushed callstack element $\annotate{\regstatefull{\mstate'}{\exenv'}{\gstate'}{\transeffects}}{\dot{c}}$ is abstracted by $\alpha$ or not.
\begin{itemize}
\item[$\dot{c} \neq \cstar$]
\sloppy
Then $\configabs{}_{\cstar}(\callstack') = \configabs{}_{\cstar}(\callstack^*)$ and the claim trivially holds.
\item[$\dot{c} = \cstar$]
We do another case distinction on whether $c' = \cstar$
\begin{itemize}
\item[$c' = \cstar$] In this case, we know that $\alpha_\exstate(\exstate', \access{\addr}{\cstar}, \cl) \leq \absconfig_{\callstack^*}$ (where $\cl = (\callstack^{**} \neq \nil)$). Since $\exstate'$ is strongly consistent with $\cstar$ (by \autoref{lem:agreement}), we have that $\access{\exenv}{\code} = \access{\cstar}{\code}$ and hence $ \instabs{\CALL}_{\access{\mstate}{\pc}} \subseteq \delta(\cstar)$.
Since $\exstate'$ is a call state, we have that
$\instabs{\CALL}_{\access{\mstate}{\pc}} \cup \alpha_\exstate(\exstate', \access{\addr}{\cstar}, \cl) \derives
\{ \predmstate{0}{0}{\fun{x}{0}}{\fun{x}{0}}{\access{\gstate(\access{\cstar}{\addr})}{\stor}}{1} \} $.
As $\mstate'$ is initial and $\access{\gstate(a)}{\stor} = \access{\gstate'(a)}{\stor}$, we know additionally that
$\{ \predmstate{0}{0}{\fun{x}{0}}{\fun{x}{0}}{\access{\gstate(\access{\cstar}{\addr})}{\stor}}{1} \} = \alpha_\exstate(\regstatefull{\mstate'}{\exenv'}{\gstate'}{\transeffects}, \access{\cstar}{\addr}, \cl')$ (for $\cl' = (\callstack^* \neq \nil)$).
By the monotonicity of $\delta(\cstar)$ (\autoref{thm:delta-monotonicity}), we know that there is also some $\absconfig_x \geq \alpha_\exstate(\regstatefull{\mstate'}{\exenv'}{\gstate'}{\transeffects}, \access{\cstar}{\addr}, \cl')$ such that $\absconfig_{\callstack^*}, \delta(\cstar) \derives \absconfig_x$ which concludes the proof since
\begin{align*}
\absconfig_{\callstack^*} \cup \delta(\cstar) &\derives \absconfig_x \cup \absconfig_{\callstack^*} \\
&\geq \alpha_\exstate(\regstatefull{\mstate'}{\exenv'}{\gstate'}{\transeffects}, \access{\cstar}{\addr}, \cl') \cup \alpha(\callstack^*) \\
&= \alpha(\callstack')
\end{align*}
\item [$c' \neq \cstar$]
By \autoref{lem:storage-evolution}, we know (since $\exstate'$ is a regular execution state) that either (1) there exists some $\annotate{\exstate^*}{\cstar} \in \callstack^{**}$ such that $\access{\access{\exstate'}{\gstate}(\access{\cstar}{\addr})}{\stor} = \access{\access{\exstate^*}{\gstate}(\access{\cstar}{\addr})}{\stor}$ or (2) there exist $\callstack^{\dagger}$, $\gstate^*$, $\lgas^*$, $d^*$, $\transeffects^*$, and $m < n$ such that $\nsteps{\transenv}{\cons{\annotate{\exstate}{\cstar}}{\callstack}}{\cons{\annotate{\haltstatefull{\gstate^*}{\lgas^*}{d^*}{\transeffects^*}}{\cstar}}{\concatstack{\callstack^{\dagger}}{\callstack}}}{m}$ and $\nsteps{\transenv}{\cons{\annotate{\haltstatefull{\gstate^*}{\lgas^*}{d^*}{\transeffects^*}}{\cstar}}{\concatstack{\callstack^{\dagger}}{\callstack}}}{\cons{\annotate{\regstatefull{\mstate'}{\exenv'}{\gstate'}{\transeffects}}{\dot{c}}}{\cons{\annotate{\exstate'}{c'}}{\concatstack{\callstack^{**}}{\callstack}}}}{n-1-m}$ and $\access{\access{\exstate'}{\gstate}(\access{\cstar}{\addr})}{\stor} = \access{\gstate^*(\access{\cstar}{\addr})}{\stor}$.
Additionally, we know that then $\access{\gstate^*(\access{\cstar}{\addr})}{\stor} =\access{\gstate'(\access{\cstar}{\addr})}{\stor}$.
We make a distinction on the previously mentioned cases:
\begin{enumerate}
\item
\sloppy
In this case we know that $\alpha_{\exstate}(\exstate^*, \access{\cstar}{\addr}, \cl^*) \subseteq \alpha_{\cstar}(\callstack^{**}) =\alpha_{\cstar}(\callstack^{*})$ for some $\cl^* \in \BB$. Since, we know that $\exstate^*$ is a call state (\autoref{lem:callstates-but-top}), we know that $\exstate^* = \regstatefull{\mstate^*}{\exenv^*}{\gstate^*}{\transeffects^*}$ for some $\mstate^*$, $\exenv^*$, $\gstate^*$, and $\transeffects^*$ such that the conditions in \autoref{def:callstate} are satisfied.
Since $\exstate^*$ is a call state, $\access{\exenv^*}{\code}[\access{\mstate^*}{\pc}] = \CALL$\footnote{\label{note:call}This is a simplifying assumption made here. Actually $\access{\exenv^*}{\code}[\access{\mstate^*}{\pc}] \in \{ \CALL, \STATICCALL, \CREATE \}$. Since, the abstract semantics of these instructions have the same rules (up to minor differences in the preconditions of calling), exactly the same argumentation applies as shown here for the case of $\CALL$.}.
As $\exstate^*$ is strongly consistent with $\cstar$ (by \autoref{lem:agreement}), also $\access{\exenv^*}{\code}=\access{\cstar}{\code}$ and hence $\delta(\cstar) \supseteq \instabs{\CALL}_{\pc^*}$.
In particular, the second abstract $\CALL$ rule (\ref{rule:call-2}) is applicable on $\alpha_{\exstate}(\exstate^*, \access{\cstar}{\addr}, \cl^*) \subseteq \alpha_{\cstar}(\callstack^{**})$ and hence one can derive $\predmstate{0}{0}{\fun{x}{0}}{\fun{x}{0}}{\access{\gstate^*(\access{\cstar}{\addr})}{\stor}}{1}$.
Additionally, we have that $\alpha_{\exstate}(\regstatefull{\mstate'}{\exenv'}{\gstate'}{\transeffects}, \access{\cstar}{\addr}, (\callstack^*\neq \nil)) = \predmstate{0}{0}{\fun{x}{0}}{\fun{x}{0}}{\access{\gstate'(\access{\cstar}{\addr})}{\stor}}{1}$ (since, $\mstate'$ is an initial machine state and $\callstack^*$ is non-empty).
Together with $\access{\access{\exstate'}{\gstate}(\access{\cstar}{\addr})}{\stor} = \access{\access{\exstate^*}{\gstate}(\access{\cstar}{\addr})}{\stor}$ and $\access{\gstate(a)}{\stor} = \access{\gstate'(a)}{\stor}$ for all $a$ (since the call rule does not effect a contract's storage), we can conclude that $\alpha_{\exstate}(\exstate^*, \access{\cstar}{\addr}, \cl^*) \cup \delta(\cstar) \derives \alpha_{\exstate}(\regstatefull{\mstate'}{\exenv'}{\gstate'}{\transeffects}, \access{\cstar}{\addr}, 1)$.
Due to the monotonicity of $\delta(\cstar)$ (\autoref{thm:delta-monotonicity}), we know that there is some $\absconfig_i \geq \alpha_{\exstate}(\regstatefull{\mstate'}{\exenv'}{\gstate'}{\transeffects},\access{\cstar}{\addr}, 1)$, such that
$\absconfig_{\callstack^{*}} \cup \delta(\cstar) \derives \absconfig_i$ (since $\absconfig_{\callstack^{**}} \geq \alpha(\callstack^{**}) \supseteq \alpha_{\exstate}(\exstate^*, \access{\cstar}{\addr}, \cl^*)$.
Consequently:
\begin{align*}
\absconfig_{\callstack^*} \cup \delta(\cstar) &\derives \absconfig_{\callstack^*} \cup \absconfig_i \\
&\geq \alpha(\callstack^{*}) \cup \alpha_{\exstate}(\regstatefull{\mstate'}{\exenv'}{\gstate'}{\transeffects}, \access{\cstar}{\addr}, 1) \\
&= \alpha(\cons{\regstatefull{\mstate'}{\exenv'}{\gstate'}{\transeffects}}{\callstack^{*}}) \\
&= \alpha(\callstack')
\end{align*}
\item In this case, we get from the inductive hypothesis for $m$ (since $m < n$) that there exists some $\absconfig_H$
such that $\absconfig_H \geq \alpha(\cons{\annotate{\haltstatefull{\gstate^*}{\lgas^*}{d^*}{\transeffects^*}}{\cstar}}{\callstack^\dagger})$ and $\absconfig_I \cup \delta(\cstar) \derives \absconfig_H$, and additionally $\size{\callstack^\dagger} > 0$.
Consequently also $\absconfig_H \geq  \{ \predhalt{\access{\gstate^*(\access{\cstar}{\addr})}{\stor}}{1} \} \cup \alpha(\callstack^\dagger)$
Additionally we know that $\callstack^{**} =  \concatstack{\callstack_1}{[\annotate{\exstate^1}{\cstar}]}$ for some $\callstack_1$ and some $\exstate^1$ from \autoref{lem:annotation-persistence} (since the first state on top of $\callstack$ needs to be annotated with $\cstar$). Additionally we can conclude from \autoref{lem:callstates-but-top} that $\exstate^1$ is a call state.
From \autoref{lem:agreement}, we know that $\exstate^1$ is strongly consistent with $\cstar$ and hence $\access{\access{\exstate^1}{\exenv}}{\code} = \access{\cstar}{\code}$. As $\exstate^1$ is a call state, hence also $ \access{\cstar}{\code}[\access{\access{\exstate^1}{\mstate}}{\pc}] = \CALL$ and consequently $\delta(\cstar) \supseteq \instabs{\CALL}_{\access{\access{\exstate^1}{\mstate}}{\pc}}$\footnote{See \autoref{note:call}}. In addition we have that $\alpha_{\exstate}(\exstate^1, \access{\cstar}{\addr}, 0) \leq \absconfig_{\callstack^{**}}$ and since $\exstate^1$ is a call state all pre conditions of rule \ref{rule:call-3} in $\instabs{\CALL}_{\access{\access{\exstate^1}{\mstate}}{\pc}}$ are satisfied. More precisely
$\alpha_{\exstate}(\exstate^1, \access{\cstar}{\addr}, (\callstack^\dagger \neq \nil)) \cup \delta(\cstar) \cup  \{ \predhalt{\access{\gstate(\access{\cstar}{\addr})}{\stor}}{1} \}\derives \{ \predmstate{0}{0}{\arrayinit{0}}{\arrayinit{0}}{\access{\gstate^*(\access{\cstar}{\addr})}{\stor}}{1}\}$ (since $\size{\callstack^\dagger} > 0$).
By the monotonicity of $\delta(\cstar)$ (\autoref{thm:delta-monotonicity}) hence there is some $\absconfig_x$ such that $\absconfig_x \geq \{ \predmstate{0}{0}{\arrayinit{0}}{\arrayinit{0}}{\access{\gstate(\access{\cstar}{\addr})}{\stor}}{1}\}$ and
$\absconfig_H \cup \absconfig_{\callstack^*} \cup \delta(\cstar) \derives \absconfig_x$.
Since $\access{\gstate^*(\access{\cstar}{\addr})}{\stor} = \access{\gstate'(\access{\cstar}{\addr})}{\stor}$ and as $\regstatefull{\mstate'}{\exenv'}{\gstate'}{\transeffects}$ is an initial state we know that $\alpha_\exstate(\regstatefull{\mstate'}{\exenv'}{\gstate'}{\transeffects}, \access{\cstar}{\addr}, (\callstack^* \neq \nil)) =  \{ \predmstate{0}{0}{\arrayinit{0}}{\arrayinit{0}}{\access{\gstate^*(\access{\cstar}{\addr})}{\stor}}{1}\}$ which concludes the proof since
\begin{align*}
\absconfig_{\callstack^{*}} \cup \delta(\cstar)
&\derives  \absconfig_{\callstack^{*}} \cup \absconfig_H \cup \delta(\cstar) \\
&\derives \absconfig_{\callstack^{*}} \cup \absconfig_x \\
&\geq \alpha(\callstack^*) \cup \alpha_\exstate(\regstatefull{\mstate'}{\exenv'}{\gstate'}{\transeffects}, \access{\cstar}{\addr}, (\callstack^* \neq \nil)) \\
&= \alpha(\callstack')
\end{align*}
\end{enumerate}
\end{itemize}
\end{itemize}
\item[$Halt$] (returning from regular halting). Then $\exstate' = \haltstatefull{\gstate'}{\transeffects'}{\lgas'}{d'}$, $\callstack^{**} = \cons{\annotate{\exstate''}{c''}}{\callstack^\dagger}$ and $\callstack' = \cons{\annotate{\exstate'''}{c''}}{\callstack^\dagger}$. We make a case distinction on $c'' = \cstar$:
\begin{itemize}
\item[$c'' \neq \cstar$] In this case clearly  $\alpha(\callstack^*) \supseteq \alpha(\callstack^\dagger)$ and $\alpha(\callstack') = \alpha(\callstack^\dagger)$ and consequently $\absconfig_{\callstack'} \geq \alpha(\callstack')$ and hence the claim trivially follows by the reflexivity of $\derives$.
\item[$c'' = \cstar$] In this case $\alpha(\callstack^*) \supseteq \alpha_\exstate(\exstate'', \access{\cstar}{\addr}, \cl'') \cup \alpha(\callstack^\dagger)$ and $\alpha(\callstack') = \alpha_\exstate(\exstate''', \access{\cstar}{\addr}, \cl'') \cup \alpha(\callstack^\dagger)$.
From \autoref{lem:callstates-but-top}, we know that $\exstate''$ is a call state. With \autoref{lem:agreement}, we additionally have that $\access{\access{\exstate''}{\exenv}}{\code} = \access{\cstar}{\code}$ and hence also $\access{\cstar}{\code}[\access{\access{\exstate''}{\mstate}}{\pc}] =\CALL$\footnote{See \autoref{note:call}}. Consequently $\delta(\cstar) \supseteq \instabs{\CALL}_{\access{\access{\exstate''}{\mstate}}{\pc}}$. In addition we have that $\alpha_\exstate(\exstate'', \access{\cstar}{\addr}, \cl'') \leq \absconfig_{\callstack^*}$ and since $\exstate''$ is a call state, all preconditions of rule \autoref{rule:call-1} in $\instabs{\CALL}_{\access{\access{\exstate''}{\mstate}}{\pc}}$ are satisfied. More precisely $\alpha_\exstate(\exstate'', \access{\cstar}{\addr}, \cl'') \cup \delta(\cstar) \derives \pmstate{\access{\access{\exstate''}{\mstate}}{\pc} + 1}{\size{\access{\access{\exstate''}{\mstate}}{\stack}} - 6}{\update{\stacktoarray(\access{\access{\exstate''}{\mstate}}{\stack})}{\size{\access{\access{\exstate''}{\mstate}}{\stack}} - 7}{\top}}{\fun{x}{\top}}{\fun{x}{\top}}{\cl''} = p$.
\sloppy
We know additionally that $\alpha_\exstate(\exstate''', \access{\cstar}{\addr}, \cl'') =
\pmstate{\access{\access{\exstate'''}{\mstate}}{\pc}}{\allowbreak \size{\access{\access{\exstate'''}{\mstate}}{\stack}}}{\allowbreak\stacktoarray(\access{\access{\exstate'''}{\mstate}}{\stack})}{\allowbreak\towordmem(\access{\access{\exstate'''}{\mstate}}{\mem})}{\allowbreak\access{\access{\access{\exstate'''}{\mstate}}{\gstate}(\access{\cstar}{\addr})}{\stor}}{\allowbreak\cl''}$
Since $\access{\access{\exstate'''}{\mstate}}{\pc} = \access{\access{\exstate''}{\mstate}}{\pc}+1$, $\size{\access{\access{\exstate'''}{\mstate}}{\stack}} = \size{\access{\access{\exstate''}{\mstate}}{\stack}}-6$ and for all $i \in \{0, \dots, \size{\access{\access{\exstate''}{\mstate}}{\stack}}-8\}$ we have $\access{\access{\exstate''}{\mstate}}{\stack}[i] = \access{\access{\exstate''}{\mstate}}{\stack}[i]$, it holds that $p \geq \alpha_\exstate(\exstate''', \access{\cstar}{\addr}, \cl'')$ (since $\fun{x}{\top}$ is the top element for mappings $f \in \NN \to \absdom$ and $\top \geq \access{\access{\exstate'''}{\mstate}}{\stack}[0]$, cf. \autoref{lem:suprema}).
So since $\alpha(\callstack^*) \cup \delta(\cstar) \derives p$ there is by the monotonicity of $\delta(\cstar)$ (\autoref{thm:delta-monotonicity}) some $\absconfig_p$ such that
$\absconfig_{\callstack^*} \cup \delta{\cstar} \derives \absconfig_p$ and $\absconfig_p \geq p$.
Consequently we can conclude the proof:
\begin{align*}
\absconfig_{\callstack^*} \cup \delta(\cstar) &\derives \absconfig_p \cup \absconfig_{\callstack^*} \\
&\geq \{p \} \cup \alpha(\callstack^*)\\
&\geq \alpha_\exstate(\exstate''', \access{\cstar}{\addr}, \cl'') \cup \alpha(\callstack^\dagger) \\
&= \alpha(\callstack')
\end{align*}
\end{itemize}
The same arguments apply for returning from exceptional halting.
\end{itemize}
\end{itemize}
\end{proof}

\section{Checking Security Properties with \ethor{}}
In this section, we discuss how the security properties presented in~\autoref{subsec:props} are implemented in \ethor{} using \horst{}. In particular, we explain how reachability properties can be abstracted as queries using the example of the call reachability property. Afterwards , we illustrate the infrastructure for proving functional correctness queries as well as the one for automated soundness and precision testing.

\label{app:security-properties}
\subsection{From reachability properties to queries}
\label{sec:appendix-queries}
\label{subsec:queries}
All reachability properties introduced in \autoref{subsec:reachability} can be seen as instances of properties of the following form:
\par
\nobreak
{ \small
\noindent
\begin{align*}
\prop{P}{\bad} \define
\forall \exstate. \, P([\exstate])
\implies
\neg \exists \callstack',  \,
\ssteps{\transenv}{\cons{\annotate{\exstate}{\cstar}}{\callstack}}{\concatstack{\callstack'}{\callstack}}
~\land~   \bad(\callstack')
\end{align*}
}%
where $\exstate$ is assumed to be strongly consistent with $\cstar$ and $\callstack'$ is assumed to be non-empty. We will refer to properties of this form in the following as \emph{unreachability} properties.

For the sake of presentation, we will in the following interpret predicates $P, \bad$ as the sets of those elements satisfying these predicates.
Additionally, we will overload the abstraction function $\configabs{}$ to operate on sets of configurations hence writing
$\configabs{}_{\cstar}(\bad)$ for
$\bigcup_{\callstack' \in \bad}{\configabs{}_{\cstar}(\callstack')}$ and
$\configabs{}_{\cstar}(P)$ for
$\bigcup_{[\annotate{\exstate}{\cstar}] \in P}{\configabs{}_{\cstar}([\annotate{\exstate}{\cstar}])}$.

Following \autoref{theorem:soundness} for proving such properties it is sufficient to give some set $\absconfig_{P}$ such that $\absconfig_P \geq \configabs{}_{\cstar}(P)$ and to show, for any set $\absconfig_{\bad}$ over-approximating
$\configabs{}_{\cstar}(\bad)$
that $\absconfig_{P} \not \derives \absconfig_{\bad}$.
Instead of showing this property for all possible sets $\absconfig_{\bad}$, it is sufficient to find a query set $\queryset$ that shares at least one element with all possible sets $\absconfig_{\bad}$:
\par
\nobreak
{\small
\noindent
\begin{align}
\forall \absconfig_{\bad}.~ \absconfig_{\bad} \geq \configabs{}_{\cstar}(\bad)
\implies \absconfig_{\bad} \cap \queryset \neq \emptyset \label{eq:querysetprop}
\end{align}
}
Proving $\absconfig_\bad, \delta(\cstar) \not \derives \queryset$ then implies that $\prop{P}{\bad}$ holds.

Under certain conditions, such a set can be easily constructed from $\bad$ as follows:
\par
\nobreak
{\small
\noindent
\begin{align}
\queryset(\bad) \define
\{ p' ~|~ \exists p.~ p \in \configabs{}_{\cstar}(\bad) ~\land~ p \ord{p} p' \}
\label{eq:queryset}
\end{align}
}%
Intuitively, it is sufficient to query for the most concrete abstraction (as given by $\configabs{}_\cstar$) of the concrete configurations in $\bad$ and all predicate-wise ($\ord{p}$) coarser abstractions of those.
The set $\queryset(\bad)$ however is only a valid query set for $\bad$ if for some $\callstack' \in \bad$ it holds that $\configabs{}_{\cstar}(\callstack')$ is non-empty. Otherwise \autoref{eq:querysetprop} is trivially violated.
Intuitively this means that that only postconditions $\bad$ that make some restrictions on those callstack components that are modeled by the analysis (namely executions of contract $\cstar$) can be reasonably analyzed using this technique.
We formally state this property in the following lemma:
\begin{lemma}
\label{thm:query}
Let $\smalls \subseteq \configs \times \configs$ be a small step semantics and $(\absdoms, \predsig, \configabs{\predsig}, \abss)$ a sound abstraction thereof.
Furthermore let $P$, $\bad \subseteq \configs$ be predicates on configurations and $\absconfig_P$ be an abstract configuration such that $\absconfig_P \geq \configabs{}(P)$. Then if there is some $c' \in \bad$ such that  $\configabs{}(c') \neq \emptyset$ it holds that
\par
\nobreak
{ \small
\noindent
\begin{align*}
\absconfig_P, \abss \not \derives \queryset(\bad)
\implies
\prop{P}{\bad}
\end{align*}
}%
\end{lemma}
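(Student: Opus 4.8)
The plan is to prove the contrapositive of the implication constituting $\prop{P}{\bad}$, threading the soundness theorem (\autoref{theorem:soundness}) through the constructed query set. So I assume $\neg \prop{P}{\bad}$, which by the definition of unreachability properties means there exist $\exstate$ strongly consistent with $\cstar$, a transaction environment $\transenv$, a callstack $\callstack$, and a non-empty $\callstack'$ such that $\ssteps{\transenv}{\cons{\annotate{\exstate}{\cstar}}{\callstack}}{\concatstack{\callstack'}{\callstack}}$, with $[\annotate{\exstate}{\cstar}] \in P$ and $\callstack' \in \bad$. The goal is then to derive $\absconfig_P, \abss \derives \queryset(\bad)$, i.e., to exhibit an abstract configuration derivable from $\absconfig_P$ that lies in $\queryset(\bad)$.

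**Key steps.** First I would instantiate \autoref{theorem:soundness} with this witnessing execution: since $\configabs{}_{\cstar}([\annotate{\exstate}{\cstar}]) \leq \configabs{}_{\cstar}(P) \leq \absconfig_P$, the theorem yields some $\absconfig$ with $\absconfig_P, \delta(\cstar) \derives \absconfig$ and $\configabs{}_{\cstar}(\callstack') \leq \absconfig$. Second, I would use the hypothesis that there is some $c' \in \bad$ with $\configabs{}(c') \neq \emptyset$ — but more carefully, I need the chosen witness $\callstack'$ itself to have non-empty abstraction; here the plan is to argue that because $\callstack'$ is non-empty and contains an execution state annotated with $\cstar$ on top (a consequence of the callstack shape in the soundness setting, cf.\ \autoref{lem:annotation-persistence}), $\configabs{}_{\cstar}(\callstack')$ is non-empty, so pick some $p \in \configabs{}_{\cstar}(\callstack')$. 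Third, from $\configabs{}_{\cstar}(\callstack') \leq \absconfig$ and the definition of $\leq$ on abstract configurations, there exists $p' \in \absconfig$ with $p \ord{p} p'$; by construction \eqref{eq:queryset} of $\queryset(\bad)$, this $p'$ belongs to $\queryset(\bad)$. Fourth, since $p' \in \absconfig$ and $\absconfig_P, \delta(\cstar) \derives \absconfig$, monotonicity of derivation (or simply that $\derives$ derives every element of a derivable configuration) gives $\absconfig_P, \delta(\cstar) \derives \{p'\}$, hence $\absconfig_P, \abss \derives \queryset(\bad)$, contradicting the assumption. Along the way I would also need to check the side conditions of \autoref{theorem:soundness}: that $\cstar$ contains no $\DELEGATECALL$ or $\CALLCODE$ and that the execution is collision-free — these are standing assumptions of the analysis framework and I would state that they are inherited here.

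**Main obstacle.** The delicate point is establishing that $\configabs{}_{\cstar}(\callstack')$ is non-empty \emph{for the particular witnessing $\callstack'$}, not merely for some abstract element of $\bad$. The lemma's hypothesis only promises $\configabs{}(c') \neq \emptyset$ for \emph{some} $c' \in \bad$, whereas the proof needs it for the execution-reachable witness. I expect the resolution to be that $\bad$ in all the concrete instances (call unreachability, static assertion checking, Hoare triples) constrains exactly the program-counter component of a $\cstar$-execution state at the top of $\callstack'$, so that \emph{every} $\callstack' \in \bad$ reachable via the soundness execution has a $\cstar$-annotated top frame and hence non-empty abstraction; the general lemma statement is then slightly looser than what is used, and the intended reading is that $\bad$ restricts only components modeled by the analysis (as the surrounding prose says explicitly). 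I would phrase the proof so that the non-emptiness of $\configabs{}_{\cstar}(\callstack')$ follows from $\callstack'$ being a non-empty callstack whose bottom-to-top structure guarantees a modeled frame, invoking \autoref{lem:annotation-persistence} to pin down that the relevant frame is annotated with $\cstar$. Everything else is a routine unwinding of the definitions of $\leq$, $\queryset$, and $\derives$.
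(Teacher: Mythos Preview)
The paper provides no formal proof of this lemma; the argument is the prose surrounding equations~\eqref{eq:querysetprop} and~\eqref{eq:queryset}. Your contrapositive strategy---take a violating execution, apply soundness to obtain a derivable $\absconfig \geq \configabs{}(\callstack')$, and then extract from $\absconfig$ an element of $\queryset(\bad)$---is exactly that argument made explicit, so the approaches coincide.

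You have also correctly isolated the one real difficulty, and it is a gap in the lemma \emph{as stated}, not merely in your sketch: the hypothesis asks that $\configabs{}(c') \neq \emptyset$ for \emph{some} $c' \in \bad$, whereas the argument requires it for the \emph{particular} reachable witness $\callstack'$. The paper's own justification via property~\eqref{eq:querysetprop} suffers from the same defect, since soundness yields only $\absconfig \geq \configabs{}(\callstack')$, not $\absconfig \geq \configabs{}(\bad)$, so~\eqref{eq:querysetprop} cannot be applied to $\absconfig$ directly. Your proposed workaround via \autoref{lem:annotation-persistence} does succeed for the concrete EVM instantiations (call unreachability, assertion checking, Hoare triples), because there every $\callstack' \in \bad$ constrains a $\cstar$-annotated top frame and hence has non-empty abstraction. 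But this imports EVM structure into a lemma phrased for an arbitrary sound abstraction; the honest repair is to strengthen the hypothesis to ``for every $c' \in \bad$, $\configabs{}(c') \neq \emptyset$'', which is what the paper's informal remark about postconditions restricting modeled components is really assuming. On a related note, since the lemma is stated for a generic sound abstraction, the appropriate soundness statement to invoke is \autoref{def:soundness} rather than \autoref{theorem:soundness}; the EVM side conditions (no $\DELEGATECALL$/$\CALLCODE$, collision-freeness) belong to the instantiation, not to this lemma's proof.
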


As a consequence, it is generally sufficient to query for the reachability of $\queryset(\bad)$ in order to prove an unreachability property $\prop{P}{\bad}$.

We will next show how this theoretical result can be used in practice and in particular at the level of \horst{}.

\subsubsection{Initialization}
For checking an unreachability property $\prop{P}{\bad}$, we need to show the non-derivability of a valid query set $\queryset$ from some abstract configuration $\absconfig_P \geq \configabs{}(P)$. Hence we need to axiomatize such an abstract configuration $\absconfig_P$.
This can be easily done in \horst{} by providing rules having \lstinline|true| as a single premise.
For axiomatizing that the execution starts in an initial machine state as required for the call unreachability property defined in \autoref{def:callunreachability} we can add the following rule to the analysis specification:
\begin{lstlisting}
rule initOp :=
	clause
		true => MState{0}(0, [@V(0)], [@V(0)], [@T], false);
\end{lstlisting}
As the precondition $P$ of the call unreachability property requires the top state $\exstate$ (that also serves as the zero-bar for the call level) to be initial, $\configabs{}(\exstate)$ can contain only predicate applications of the form $\predmstate{0}{0}{\fun{x}{0}}{\fun{x}{0}}{\mem}{0}$ where $\mem$ is some memory mapping. However, $\fun{x}{\top}$ (corresponding to \lstinline|[@T]|) over-approximates all memory arrays and hence $\absconfig_P = \{ \predmstate{0}{0}{\fun{x}{0}}{\fun{x}{0}}{\fun{x}{\top}}{0} \} \geq \configabs{}(P)$.

\subsubsection{Queries}
In addition to syntax for writing an analysis specification, \horst{} also provides mechanisms for the interaction with the underlying SMT-solver.
More precisely it supports syntax for specifying \emph{queries} and \emph{tests}.
Syntactically, queries consist of a list of premises (as in a clause). A query leads to the invocation of the  SMT solver to test whether conjunction of those premises is derivable from the given initialization using the specified rules. The query will result in \texttt{SAT} in case that all premises are derivable and in \texttt{UNSAT} in case that the conjunction of premises can be proven to be non-derivable.

In order to check for reachability of abstract configurations, \horst{} allows for the specification of (reachability) queries that can also be generated from selector functions. The query shown in \autoref{fig:reentrancy-query} for instance checks for reentrancy by checking if
any \CALL{} instruction is reachable at call level $1$\footnote{To be fully correct there are also corresponding queries for the other relevant call instructions $\CREATE$ and $\STATICCALL$.}.
It therefore is an implementation of the reachability property introduced in \autoref{subsec:reachability}.
This query can be obtained from the call unreachability property defined in \autoref{def:callunreachability} which is of the form $\prop{P}{\bad}$ with $\bad:= \{\cons{\annotate{\exstate}{\cstar}}{\callstack'} ~|~ \size{\callstack'} > 0 ~\land~ \cstar.\code[\exstate,\mstate.\pc]  \in \callinstructions\}$.
Intuitively, we can split this property into a set of different properties $\prop{P}{\bad_i}$ where $i$ ranges over the set of $\CALL$ instructions in $\cstar$. More precisely, let $\bad_i := \{\cons{\annotate{\exstate}{\cstar}}{\callstack'} ~|~ \size{\callstack'} > 0 ~\land~ \exstate,\mstate.\pc = i \}$ then it holds that 
\par
\nobreak
{\small
\noindent
\begin{align*}
\prop{P}{\bad} \Leftrightarrow \forall i \in \{ i ~|~ \cstar.\code[i]  \in \callinstructions \}. \, \prop{P}{\bad_i}
\end{align*}
}%
Then each instance of the \lstinline|reentrancyCall| query specifies one query set $\queryset^i$ that satisfies \autoref{eq:querysetprop} for $\bad_i$.
Hence showing the underivability of all those sets from $\absconfig_P$ proves the claim.
Intuitively, $\queryset^i$ satisfies \autoref{eq:querysetprop} for $\bad_i$ because $\configabs{}_\cstar(\bad_i)$ contains an application of a predicate $\pnmstate{i}$ with argument $\cl = 1$ and so it needs to contain all abstractions of $\absconfig_{\bad_i} \geq \configabs{}_\cstar(\bad_i)$ as the $\cl$ component ranging over $\BB$ cannot further be abstracted. Consequently, the set $\queryset^i$, which contains all predicates of that form, has a trivial intersection with $\absconfig_{\bad_i}$.

\subsection{Functional correctness}
\label{sec:functional-correctness}
For checking functional correctness, some modifications to the abstract semantics are necessary.

This is as the different contract executions need to be bound the the corresponding input data of the call and since we want to reason about return data.
We will in the following shortly overview the relevant changes and motivate that similar modifications can easily be incorporated for reasoning about other dependencies with the execution or blockchain environment. We will present the relevant modifications in \horst{} syntax so that the explanations serve as a guide to the enhanced version of the semantics ~\cite{extended}.

First, the relevant predicates need to be enriched with a corresponding representation of the call data.
We decided to represent call data as a word array with the particularity that the array's first element represents only 4 bytes. This is due to the call conventions enforced by the Solidity compiler which interpret the first 4 bytes of input data as the hash of the called function's signature to properly dispatch function calls.
In addition to the call data, we introduce a new predicate representing the return data of a call.

Formally, we arrive at the following predicate definitions:
\begin{lstlisting}
datatype CallData := @D<int*array<AbsDom>>;
pred MState{int*int}: int * array<AbsDom> * array<AbsDom> * array<AbsDom> * bool * CallData;
pred Exc{int}: bool;
pred Halt{int}: array<AbsDom> * AbsDom * bool * CallData;
pred ReturnData{int}: int * AbsDom * bool * CallData;
\end{lstlisting}
Note that we represent call data as a tuple of its size and an array of abstract words. Also, we added to the \lstinline|Halt| predicate an argument representing the return data size. This argument stems from the abstract domain with \lstinline|@T| indicating that the concrete size of the return data is unknown.
The \lstinline|ReturnData| predicate maps the positions of the return data (word) array to the corresponding values that it holds.

The existing rules simply propagate the the call data array with the only addition that the $\CALLDATALOAD$ instruction now accesses the call data array instead of over-approximating the loaded value.
The new rule for $\CALLDATALOAD$ is depicted in \autoref{fig:rule-calldataload}.
\begin{figure*}
\begin{lstlisting}
rule opCallDataLoad :=
	for (!id: int) in ids(), (!pc: int) in pcsForIdAndOpcode(!id, CALLDATALOAD), (!a: int) in argumentsOneForIdAndPc(!id, !pc)
	clause [?x: AbsDom, ?size: int, ?sa: array<AbsDom>, ?mem: array<AbsDom>, ?stor: array<AbsDom>, ?cl: bool, ?p: int, ?v: AbsDom, ?cdata: CallData]
		MState{!id, !pc}(?size, ?sa , ?mem, ?stor, ?cl, ?cdata), ?size > 0,
		!a != ~1, // in case that the position could be pre-computed, use it for accessing the position more precisely
		?v = accessWordCalldata{!a}(?cdata) // accesses word at the corresponding position of the call data
		=> MState{!id, !pc +1}(?size, store ?sa (?size -1) (?v), ?mem, ?stor, ?cl, ?cdata),
	clause [?x: AbsDom, ?size: int, ?sa: array<AbsDom>, ?mem: array<AbsDom>, ?stor: array<AbsDom>, ?cl: bool, ?cdata: CallData, ?p: int, ?v: AbsDom]
		MState{!id, !pc}(?size, ?sa , ?mem, ?stor, ?cl, ?cdata), ?size > 0,
		!a = ~1, // if the argument could not be preecomputed, extract the argument from stack
		?x = select ?sa (?size - 1),
		?v = (isConcrete(?x)) ? (accessWordCalldataEven(extractConcrete(?x), ?cdata)) : (@T) // if the offset is concrete, try to access the word at the given position. This will only result in a concrete result if the value is a word position
		=> MState{!id, !pc +1}(?size, store ?sa (?size -1) (?v), ?mem, ?stor, ?cl, ?cdata);
\end{lstlisting}
\caption{Rule for $\CALLDATALOAD$ in the enhanced abstract semantics.}
\label{fig:rule-calldataload}
\end{figure*}

The $\CALLDATALOAD$ operation takes as argument a value from the stack that specifies the byte position starting from which one word of the call data byte array shall be loaded (to the stack).
The rule is split into two clauses for taking advantage of the pre-analysis.
More precisely, in case that the position of call data to is known upfront, the call data array \lstinline|?call| can be assessed more precisely. Since we model the call data as a word instead of a byte array (similar to our memory abstraction), either a word loaded from it consists of a full word in the word array or needs to be composed out of two neighboring words. Composing to integers (interpreting them as byte arrays) however requires exponentiation  as defined in the append function in~\autoref{sec:appendix-ana-def}. \zz{} is not able to handle general exponentiation - for this reason we can only compute such exponentiations (by unfolding to multiplications) whose exponent is known upfront.
Consequently, the first rule in \autoref{fig:rule-calldataload} handles the case where the argument to the call is known upfront: the \lstinline|accessWordCallData| function expects the position as a parameter and computes the accessed word precisely from the call data array since exponentiation can be unrolled. The second rule handles the case where the argument to the call is not known upfront. In case that during the analysis it can be detected to be concrete (by the function \lstinline|isConcrete|), the \lstinline|accessWordCalldataEven| function is used to access the call data at the corresponding position. This function however only yields a precise result in case that the provided position corresponds to the beginning of a word in the calldata array, otherwise it over-approximates the result as \lstinline|T|.

The \lstinline|ReturnData| predicate is inhabited by the rules that model regular halting. We exemplarily show the rule of the $\RETURN$ opcode depicted in \autoref{fig:rule-return}.

{
\begin{figure*}[]
{
\begin{lstlisting}
rule opHaltOnReturn :=
    for (!id: int) in ids(), (!pc: int) in pcsForIdAndOpcode(!id, RETURN) 
    let 	
      macro #StackSizeCheck := MState{!id,!pc}(?size, ?sa, ?mem, ?stor, ?cl, ?cdata), ?size > 1
    in
    clause [?sa: array<AbsDom>, ?mem: array<AbsDom>, ?stor: array<AbsDom>, ?size:int, ?cl: bool, ?cdata: CallData, ?length: AbsDom]
        #StackSizeCheck,
        ?length = select ?sa (?size-2)
        => Halt{!id}(?stor, ?length, ?cl, ?cdata),
    clause [?sa: array<AbsDom>, ?mem: array<AbsDom>, ?stor: array<AbsDom>, ?size:int, ?cl: bool, ?offset: AbsDom, ?length: AbsDom, ?o: int, ?l:int, ?p:int, ?v: AbsDom, ?cdata: CallData]
        #StackSizeCheck,
        ?offset = select ?sa (?size-1), // select top values on the stack
        ?length = select ?sa (?size-2),
        isConcrete(?offset),
        isConcrete(?length), 
        ?o = extractConcrete(?offset),
        ?l = extractConcrete(?length),
        ?p >= 0, 
        (?p * 32) < ?l, // write all words that are still within the length 
        ?v = accessWordMemoryEven(?o + ?p, ?mem)
        => ReturnData{!id}(?p, ?v, ?cl, ?cdata), // careful: the Return data predicate is also inhabited in words!
    clause [?sa: array<AbsDom>, ?mem: array<AbsDom>, ?stor: array<AbsDom>, ?size:int, ?cl: bool, ?offset: AbsDom, ?length: AbsDom, ?o: int, ?l:int, ?p:int, ?v: AbsDom, ?cdata: CallData]
        #StackSizeCheck, 
        ?offset = select ?sa (?size-1), // select top values on the stack
        ?length = select ?sa (?size-2),
        ~isConcrete(?offset), // if we don't know the offset, but only the length, we write top at the places in the specified range
        isConcrete(?length), 
        ?l = extractConcrete(?length),
        ?p >= 0, 
        ?p * 32 < ?l
        => ReturnData{!id}(?p, @T, ?cl, ?cdata),
    clause [?sa: array<AbsDom>, ?mem: array<AbsDom>, ?stor: array<AbsDom>, ?size:int, ?cl: bool, ?offset: AbsDom, ?length: AbsDom, ?o: int, ?l:int, ?p:int, ?v: AbsDom, ?cdata: CallData]
        #StackSizeCheck,
        ?length = select ?sa (?size-2),
        ~isConcrete(?length), 
        ?p >= 0 
        => ReturnData{!id}(?p, @T, ?cl, ?cdata);
\end{lstlisting}
\caption{Rule for $\RETURN$ in the enhanced abstract semantics.}
\label{fig:rule-return}
}
\end{figure*}
}

The $\RETURN$ instruction in EVM reads a memory offset and length from the stack and returns the corresponding memory fragment as byte array. In our abstraction the return data is modeled by an own predicate that holds words instead of bytes. This design choice follows the one made for the word-indexed memory and the call data array which hold words instead of bytes as well for performance reasons.
The $\RETURN$ semantics is closely reflected in the abstract $\RETURN$ rule: the first clause of the rule inhabits the \lstinline|Halt| predicate, reading the size of the return data from the stack. The next three clauses inhabit the \lstinline|ReturnData| predicate, differentiating depending on how much information on the return data (size and memory offset) are known: If both memory offset and length of the data are known, for each word position \lstinline|?p| the corresponding memory word is read from the memory array \lstinline|?mem| (using the function \lstinline|accesswordMemoryEven|) and written into the \lstinline|ReturnData| predicate.
The next clause describes the case where the memory offset is unknown, but the size of the return data is known. In this case we cannot know which (concrete values) form the return data, but can only approximate all possible return data words (as determined by the size of the array) with \lstinline|@T|.
The last clause covers the case where the length of the return data is not known. Since it is fully unclear in this case whether data should be returned in the first place (since the length could be $0$), all potential positions of the return data array are over-approximated by \lstinline|@T|.

Finally, the functional correctness queries for the addition function of the SafeMath library can be posed as follows:
{
\begin{lstlisting}
op callAdd(x: int, y: int): CallData := 
	@D(68, store (store (store [@T] 0 @V(1997931255)) 1 (@V(x))) 2 (@V(y)));


test addOverflowNoHalt expect UNSAT
	for (!id: int) in ids() 
      [?x:int, ?y: int, ?z:int, ?p:int, ?stor: array<AbsDom>, ?rdsize:AbsDom]
         ?x >= 0,
         ?y >= 0,
         ?x < MAX, 
         ?y < MAX,
         ?x + ?y >= MAX,
         Halt{!id}(?stor, ?rdsize, false, callAdd(?x, ?y));

test addNoOverflowCorrect expect SAT
    for (!id: int) in ids() 
      [?res: AbsDom, ?x:int, ?y: int, ?z:int, ?rdsize:AbsDom, ?stor: array<AbsDom>]
         ?x >= 0,
         ?y >= 0,
         ?x + ?y < MAX,
         ReturnData{!id}(0, ?res, false, callAdd(?x, ?y)),
         Halt{!id}(?stor, ?rdsize, false, callAdd(?x, ?y)), 
         abseq(?rdsize, @V(32)),
         abseq(?res, @V(?x + ?y));

test addNoOverflowHalt expect UNSAT
    for (!id: int) in ids() 
      [?res: AbsDom, ?x:int, ?y: int, ?z:int, ?rdsize: AbsDom, ?stor: array<AbsDom>]
         ?x >= 0,
         ?y >= 0,
         ?x + ?y < MAX,
         Halt{!id}(?stor, ?rdsize, false, callAdd(?x, ?y)),
         ?rdsize != @V(32);

test addNoOverflowUnique expect UNSAT
    for (!id: int) in ids() 
      [?res: AbsDom, ?x:int, ?y: int, ?z:int, ?rdsize: AbsDom, ?stor: array<AbsDom>]
         ?x >= 0,
         ?y >= 0,
         ?x + ?y < MAX,
         ReturnData{!id}(0, ?res, false, callAdd(?x, ?y)), 
         ?res != @V(?x + ?y);
\end{lstlisting}
}

We first specify the call data for a call to the \texttt{add} function of the SafeMath library as an operation \lstinline|callAdd| returning an \lstinline|CallData| element when being provided with the arguments to the call. Since the \texttt{add} function expects 2 integer arguments the \lstinline|callAdd| function return a call data of size $68$ ($4 + 2*32)$ bytes) where the a \lstinline|T| array is initialized with the hash of the corresponding function signature as first element (which represents the first 4 bytes of the call data) and the arguments \lstinline|x| and \lstinline|y| as following to elements.
Note that the hash of the function signature and its hash is provided by Solidity compilers via the so called Ethereum Contract ABI (Contract Application Binary Interface). In the future we plan to automatically generate an infrastructure for functional correctness queries on Solidity contracts from the contract's ABI.
The first functional correctness test \lstinline|addOverflowNoHalt| requires that it is impossible to reach a \lstinline|Halt| state (which indicates regular halting) from a call to to the \texttt{add} function in case that the summands \lstinline|?x| and \lstinline|?y| provided as arguments produce an overflow.

The second functional correctness test (\lstinline|addNoOverflowCorrect|) checks whether it is possible (in case that no overflow occurs) to compute the expected result (or an over-approximation thereof) in the first place. Here \lstinline|abseq| is the function implementing an equality test on the abstract domain, hence considering every concrete element to be potentially equal to \lstinline|@T|. 
By the soundness of the analysis, if this query would turn out to be unsatisfiable, it would be impossible for the function to produce the correct result under any circumstances. This query of course does not prove that the function will always provide a result: This indeed is and should not be provable, since any smart contract can always halt exceptionally when running out of gas. This test case serves as a sanity check that only becomes meaningful in conjunction with the following tests.
The third and fourth functional correctness tests (\lstinline|addNoOverflowHalt| and \lstinline|addNoOverflowUnique|) prove that given non-overflowing arguments, if the function execution halts successfully, nothing but the correct result can be produced. In other words, it is impossible to halt successfully without producing the correct result. This property is composed out of two queries since it needs to be shown that 1) It is impossible for the function to halt without returning a result of length $32$ (corresponding to one word)  as recorded in the \lstinline|Halt| predicate and 2) It is impossible that the actual return value (as recorded in the \lstinline|ReturnData| predicate) differs from the sum of the two arguments.

The functional correctness tests for the other functions of the SafeMath library follow the same pattern.

\subsection{Automated Testing in \horst{}}
\label{subsec:automated-testing-appendix}
The setup for automated testing (see \autoref{subsec:testing}) shown in \autoref{fig:automated-testing} presents a use case for the Hoare-Logic-style reasoning capabilities of \ethor{} and furthermore provides us with the opportunity to showcase some further features of \horst{}.

We first shortly overview the form of the test cases in the official EVM test suite:
Test cases come in two flavors: the first group consists of 490 test cases specifying a storage configuration as postcondition, the second group, consisting of 108 test cases, lacks a post condition (which we interpreted as requiring an exceptional halt).

To account for this test structure we declare four additional selector functions:
The selector functions \lstinline|preStorageForId| and \lstinline|postStorageForId| provide tuples of storage offsets and values which specify the storage contents before and after the execution of the contract.
\lstinline|emptyListIfNoPostConditionForId| and \lstinline|dummyListIfNoPostConditionForId| generate an empty list, respectively a list with one element, depending on if there is a postcondition specified or not.
Since rules are generated for the cross product of their selector functions return values, we can use these functions to generate different rules for different test cases while still using the same \horst{} inputs.

The rule for initialization, \lstinline|initOp|, differs in one value from the definition used in the other experiments. In line 11, we populate an array with the values returned by \lstinline|preStorageForId|, starting from an array containing only zeroes.

In case we want to check for the reachability of a certain storage configuration, we generate the two queries \lstinline|correctValues| and \lstinline|uniqueValues|.
\lstinline|correctValues| is successful, if a \lstinline|Halt| predicate is reachable whose storage contains 1) values abstractly equal to the values returned by \lstinline|postStorageForId| at the offsets returned by \lstinline|postStorage| and 2) a value abstractly equal to $0$ for all offsets not returned by \lstinline|postStorageForId|.
\lstinline|uniqueValues| is successful, if no \lstinline|Halt| predicate is reachable whose storage contains any value abstractly unequal to the values returned by \lstinline|postStorageForId|. This is only the case, if every value in the memory is concrete.
Summing up, such a test case is considered to be solved \emph{correctly} if \lstinline|correctValues| is successful and considered to be solved \emph{precisely} if \lstinline|correctValues| and \lstinline|uniqueValues| are successful.

In case we want to check for exceptional halting, we just query for the unreachability of a regular \lstinline|Halt| predicate (see \lstinline|irregularHalt|).
Such a query is considered solved precisely on success and imprecisely on failure, since reaching additional program states (\lstinline|Halt| in this instance), which are not reachable in the concrete execution, is a sign of over-approximation.

\begin{figure*}[!t]
\lstset{escapechar=,language=HoRSt, numbers=left,xleftmargin=2em,frame=lines,framexleftmargin=1.5em}
\begin{lstlisting}
sel preStorageForId: int -> [int*int];
sel postStorageForId: int -> [int*int];
sel emptyListIfNoPostConditionForId: int -> [bool];
sel dummyListIfNoPostConditionForId: int -> [bool];

rule initOp :=
  for (!id:int) in ids()
  clause
    true
    => MState{!id, 0}(0, [@V(0)], [@V(0)],
                      for (!offset: int, !value:int) in preStorageForId (!id): x: array<AbsDom> -> store x !offset @V(!value), [@V(0)],
                      false);

test correctValues expect SAT
    for (!id: int) in ids(),
        (!b: bool) in emptyListIfNoPostConditionForId(!id)
    [?stor: array<AbsDom>, ?i: int]
    for (!offset: int, !value:int) in postStorageForId(!id): && abseq(select ?stor !offset,@V(!value)),
    (for (!offset: int, !value:int) in postStorageForId(!id): || ?i = !offset) ? (true) : (abseq(select ?stor ?i,@V(0))),
    Halt{!id}(?stor, false);

test uniqueValues expect UNSAT
    for (!id: int) in ids(),
        (!b: bool) in emptyListIfNoPostConditionForId(!id)
    [?stor: array<AbsDom>]
    for (!offset: int, !value:int) in postStorageForId(!id): || absneq(select ?stor !offset,@V(!value)),
    Halt{!id}(?stor, false);

test irregularHalt expect UNSAT
    for (!id: int) in ids(),
        (!b: bool) in dummyListIfNoPostConditionForId(!id)
    [?stor: array<AbsDom>]
    Halt{!id}(?stor, false);
\end{lstlisting}
\caption{Setup for automated testing.}
\label{fig:automated-testing}
\end{figure*}

\section{Soundness Issues in Related Work}
This section reviews the soundness problems of other works on static smart contract analysis.
We thereby focus on those works that make soundness claims.
We first overview soundness problems in the reconstruction of smart contracts' control flow graphs (which particularly affects the Securify analyzer~\cite{securify}) and afterwards successively discuss the issues in the analyses performed by~\cite{securify},~\cite{lu2019neucheck},~\cite{GMS::CAV18}, and~\cite{kalra2018zeus}. Where possible, we provide reproducible evidence in form of concrete counter-examples for the spotted sources of unsoundness.

\label{app:soundness-issues}
\subsection{Control Flow Reconstruction}
\label{sec:appendix-cfg}
Most tools that analyze Ethereum smart contracts at the level of bytecode base their analysis on the contract's control flow graph (CFG). However, the design of the EVM bytecode language does not allow for an easy reconstruction of a contract's control flow since jump destinations are not statically fixed, but might be dynamically computed.
More precisely, in EVM bytecode jump destinations are read from the stack and hence can be subject to prior computations. Even though the set of potential jump destinations is statically determined (since only such program counters with a $\JUMPDEST$ instruction constitute valid jump destinations), the concrete destination of a jump instruction might only be dispatched at runtime. The challenge hence lies in statically narrowing down the set of possible jump destinations for each branch instruction ($\JUMP$ or $\JUMPI$).
To this end, the state-of-the-art analyzer~\cite{securify} employs a custom algorithm, another popular solution~\cite{trailofbits-manticore} uses an external open-source tool~\cite{cfgbuilder} for control flow graph reconstruction. While correctness for both of them has never been discussed, flaws in the CFG reconstruction can lead to catastrophic consequences: An unsound reconstruction that erroneously excludes possible jump destinations, can deem parts of the contract code unreachable that carries critical and potentially unsafe functionality (e.g., reentrant calls).

When reviewing the algorithms used in~\cite{securify} and~\cite{cfgbuilder}, we found soundness issues in both approaches as we will discuss in the following. In ~\autoref{fig:cfg-example} we show a compact example of a smart contract's control flow that is recovered incorrectly by~\cite{securify, cfgbuilder} with no errors reported.
\begin{figure}[h!]
\centering
\includegraphics[width=\linewidth]{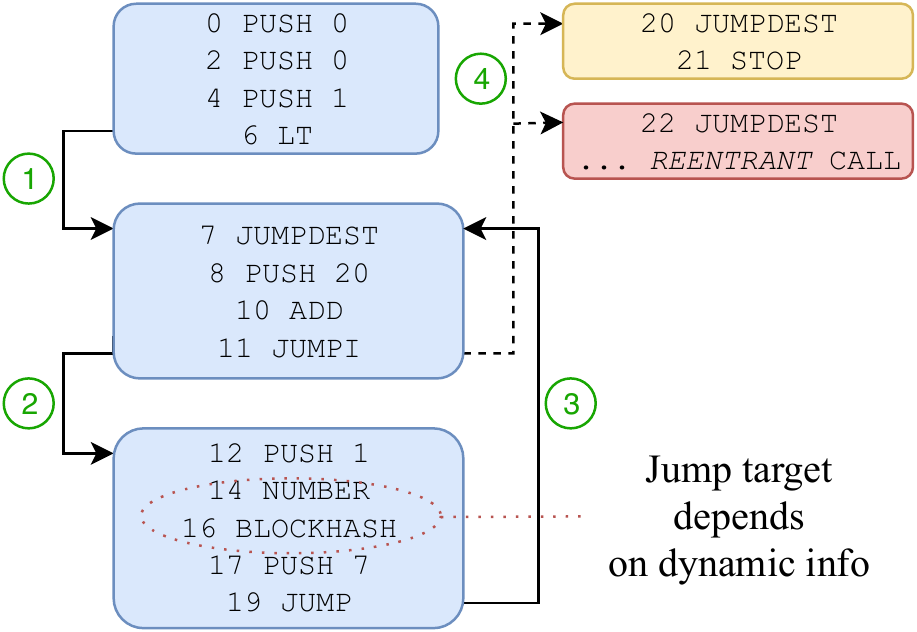}
\caption{Problematic Control Flow Example.}
\label{fig:cfg-example}
\end{figure}
Intuitively, the control flow of this contract should not be fully recoverable because one of its jump destinations depends on some blockchain information (the block hash and the block number) which cannot be statically predicted, but will only be fixed once the contract has been published on the blockchain. 

The smart contract is structured into five basic blocks.
The first block (starting at program counter $0$), initializes the local machine check with two $0$ values and continues with the execution of the second block starting at program counter $7$ ({\color{darkgreen}\circled{1}}). The second block can be entered via a jump (since it starts with a $\JUMPDEST$ instruction). It intuitively takes two stack values as arguments, the first one functioning as jump offset and the second being the jump condition: it computes the next jump destination as the sum of $20$ and the top stack element and conditionally jumps to this destination based on the second stack value. In the first iteration since both of these values are $0$ (and so particularly the condition is $0$), no jump is performed, but instead the execution proceeds with block three (starting at program counter $12$) with the empty stack ({\color{darkgreen}\circled{2}}).
This block pushes the current block number and hash to the stack and jumps back to the second block ({\color{darkgreen}\circled{3}}). Since at this point the input to the second block are values that are not statically determinable, it needs to be assumed that the jump condition as well as the jump offset could have any value.
It is hence possible during the real execution to jump to arbitrary jump destinations from program counter $10$ ({\color{darkgreen}\circled{4}}). This includes the block starting  at program counter $20$ where the execution of the contract is stopped and most importantly the block starting at program counter $22$ that executes a reentrant call. Thus, if this jump destination is undiscovered, false correctness results for reentrancy can be produced in a subsequent analysis.

There are two sound approaches for handling the usage of unpredictable information in jump destination reconstruction: Conservatively, a smart contract can be rejected by the analysis and hence be considered potentially vulnerable in this case (which is our approach) or the analysis could assume that all $\JUMPDEST$ instructions of the contract are potentially reachable.
The tools that we reviewed, however, did not follow any of these options, but produced the following results: ~\cite{cfgbuilder} correctly discovers the basic blocks, but cannot recover jumps to the targets $20$ and $22$ ({\color{darkgreen}\circled{4}}). The result of ~\cite{securify} is even more surprising: the algorithm does not manage to recover any of the blocks shown in ~\autoref{fig:cfg-example}, but reports as CFG of this contract a single block consisting of a modulo instruction followed by the $\STOP$ opcode. Consequently, all analyses that use either of these CFG reconstruction solutions will consider the reentrant call of the example contract to be unreachable and will based on that label the contract as safe to reentrancy attacks.

\subsection{Securify}
\label{sec:securify}
The Securify tool~\cite{securify} encodes dependencies inferred from a contract's control flow graph as logical facts and specifies security properties in terms of compliance and violation patterns using these facts. It is claimed that the satisfaction of a compliance pattern is sufficient for proving a security property, while matching a violation pattern guarantees that a security property is indeed violated.
We will in the following review most of the provided patterns and give counterexamples, showing that most of these patterns indeed are not sound. We validated as far as possible the patterns reported in the paper with the provided online tool (\url{https://securify.chainsecurity.com})\footnote{We accessed the website January 19th and validated all properties with Solidity Compiler version 0.4.25.}
. Unfortunately, some of the patterns introduced in~\cite{securify} were changed or renamed in the online tool. We will note this when discussing the corresponding pattern. Also it should be noted that the online tool only reports security problems. More precisely, an alarm (red) is produced, if a violation pattern is matched, a warning (orange) is produced if neither a violation nor a compliance pattern is matched. The lack of a report for a certain security property indicates that the property's compliance pattern was matched.

\subsubsection{Ether Liquidity}
The LQ (Ether liquidity) property ensures that a property cannot lock Ether (for this reason it is called Locked Ether in the online tool) meaning that for all the contract's executions either leave the contract's balance unaffected or there is a trace that allows to reduce the contract's balance.

The property formulates three different compliance patterns.
The first two compliance patterns ensure that all halting instructions are preceded by a successful conditional check on the value given to the call being $0$. This ensures that only such executions can complete successfully that have been guaranteed to have gotten no money transferred. These patterns are probably sufficient to guarantee that a contract can never receive money and hence for showing the LQ property.
The third compliance pattern checks whether there is a call that is reachable while at the same time (meaning that the call and an exception opcode are not reachable from the same conditional branch) no exception is reachable and this call transfers non-zero value or a value that is settable by the environment. This shall ensure that the contract has at least one way of successfully transferring money.

This compliance pattern is not sufficient for ensuring Ether Liquidity. Despite the problem that the corresponding Ether transferring call could be restricted to a certain address which can never initiate such a call (as it belongs to a contract without a functionality to call other contracts), the pattern also does not consider that an exception that reverts the transaction might not only occur conditionally.

Consider the following contract:
{
\lstset{escapechar=,language=JavaScript, numbers=left,xleftmargin=2em,frame=lines,framexleftmargin=1.5em}
\begin{lstlisting}
contract Bob {
  function sendMoney(address c) {
    c.send(2);
    throw;
  }
  function receive() payable {
  }
}
\end{lstlisting}
}
This contract is labeled not to lock Ether even though it can receive money (via the \lstinline|receive| function) and every Ether transfer to another contract (via \lstinline|sendMoney|) will be reverted.
Note that the absence of Solidity's \lstinline|payable| will be translated to a conditional check on the call value, and cause a revert once if the value given to the call is non-zero.

The violation pattern for LQ requires that there is no \lstinline|CALL| instruction that transfers a non-zero amount of Ether and that there is some halting instruction such that if its reachability is dependent on a conditional branching, this condition can be determined by the transaction data, hence can be enabled by the transaction initiator. This shall ensure that there is at least one execution trace that does not halt exceptionally and hence reverts the execution effect.

However, the following contract is reported to lock Ether (matches the violation pattern):
{
\lstset{escapechar=,language=JavaScript, numbers=left,xleftmargin=2em,frame=lines,framexleftmargin=1.5em}
\begin{lstlisting}
contract Bob {
  function receive(uint x) payable {
    if (x > 0 || x <= 0) {
      throw;
    }
  }
}
\end{lstlisting}
}
This contract clearly cannot lock Ether since it cannot receive any Ether. Its only function \lstinline|receive| throws an exception depending on a conditional which is always true. Still, the dependency analysis labels this condition to be determined by transaction data, so that the pattern is matched. This can be also considered as a soundness flaw in the Securify's definition of determinability since in this case clearly for different values of transaction data the value of the condition is the same (which contradicts ~\cite{securify}'s definition of determinability.)

\subsubsection{No writes after calls}
The NW (No writes after calls) property says that a contract's storage when terminating the execution should always be the same as at the point of a previous contract call (so the contract shall not be altered between a $\CALL$ instruction and the contract's successful termination).
The online tool does not implement a property with such a name, but instead implements similar patterns for a property called Gas-dependent Reentrancy. This property uses the same intuition, but puts an additional requirement that the amount of gas given to the call shall be dependent of the remaining gas.
One should note that it is very misleading that this property is in the online tool called Gas-dependent Reentrancy, even though~\cite{securify} explicitly claims that the NW property is different from reentrancy. We will detail out later why the NW property indeed is not a sound or complete approximation of the single-entrancy property.

The corresponding compliance pattern requires that the $\CALL$ instruction may not be followed by an $\SSTORE$ instruction.
This pattern however does not consider that there are other ways of modifying the storage than the $\SSTORE$ instruction, e.g., by using $\DELEGATECALL$ for calling a library function that alters the storage.

Consider the following example:
{
\lstset{escapechar=,language=JavaScript, numbers=left,xleftmargin=2em,frame=lines,framexleftmargin=1.5em}
\begin{lstlisting}
library Lib {
  struct Data { bool bvalue; }
  function write(Data storage self, bool value) {
    self.bvalue = value;
  }
}

contract Bob {
  Lib.Data sent;

  constructor () {
    sent.bvalue = false;
   }

  function ping(address c) {
    if (!(sent.bvalue)) {
      if (!c.call.value(2)()) {
        throw;
      }
      Lib.write(sent, true);
    }
  }
}
\end{lstlisting}
}
This example clearly matches the compliance pattern (since the call to the library will be translated to a $\DELEGATECALL$ instruction, hence no $\SSTORE$ instruction appears in the first place). This can also be verified with the online tool which does not report a violation of the Gas Dependent Reentrancy property.

The violation pattern for the NW property requires that there is a $\CALL$ that must be proceeded by an $\SSTORE$ instruction.
This pattern is also not sufficient as illustrated by the following example:
{
\lstset{escapechar=,language=JavaScript, numbers=left,xleftmargin=2em,frame=lines,framexleftmargin=1.5em}
\begin{lstlisting}
contract Alice {
  bool sent = false;

  function ping(address c) {
    if (!sent) {
      sent = true;
      c.call.value(2)();
      sent = sent;
    }
  }
}
\end{lstlisting}
}

This contract clearly does not violate the property (since the contract storage at the point of terminating is not altered as compared to the point of calling). Still it matches the violation pattern (and is reported by the online tool) indicating a guaranteed property violation.

Next, we shortly discuss why the NW property (independently of the fact that the patterns are not sufficient) is neither sound nor complete for single-entrancy.

We will first give an example of a contract satisfying the NW property while still being reentrant.
{
\lstset{escapechar=,language=JavaScript, numbers=left,xleftmargin=2em,frame=lines,framexleftmargin=1.5em}
\begin{lstlisting}
contract Bank {
  address a; address b;
  uint balA; uint balB;

  function setBalA (uint v) {
    balA = v;
  }

  function drainA(address ben) {
    if (msg.sender != a) { throw; }
    if (balA > 0) {
      uint v = balA;
      setBalA(0);
      ben.call.value(v)();
    }
  }
}
\end{lstlisting}
}
This contract implementing a simple bank functionality for two parties (identified by their addresses \lstinline|a| and \lstinline|b|) is vulnerable to a reentrancy attack even though no writes after the call are performed. Similar to the initial example in \autoref{fig:reentrancy}, given that \lstinline|a| is the address of a malicious contract, this contract can use the public \lstinline|setBalA| function in a reentering execution to disable the guard (here \lstinline|balA|) before reentering the contract's \lstinline|drainA| function to retransfer money that \lstinline|a| does not own.

For an example of a contract that does not satisfy the NW property, but that is still safe, we give a contract with a simple locking functionality (similar to the example in \autoref{fig:reentrancy}).
{
\lstset{escapechar=,language=JavaScript, numbers=left,xleftmargin=2em,frame=lines,framexleftmargin=1.5em}
\begin{lstlisting}
contract Bank {
  uint lock;
  mapping (address => uint) bal;

  function drain(address a) {
    if (lock == 1) { throw; }
    lock = 1;
    a.call.value(bal[msg.sender])();
    bal[msg.sender] = 0;
    lock = 0;
  }
}
\end{lstlisting}
}
The locking ensures that whenever the function is reentered an exception occurs and hence no further call can be performed. Still since the \lstinline|lock| needs to be released at the end of the execution, clearly the NW property is violated.

\subsubsection{Restricted write}
The RW (restricted write) property requires that all write accesses are restricted, meaning there is at least one address that when initiating the call cannot reach the corresponding write access.

This property needs to be questioned in its semantic definition in that this definition explicitly requires that $\SSTORE$ instructions are not reachable even though (as discussed before), the $\SSTORE$ instruction is not the only way of manipulating storage.

So for example when analyzing the following contract, there is no RW violation or warning produced for the \lstinline|Bob| contract even though the \lstinline|ping| functions allows to set contract's \lstinline|data| filed containing the owner to be set to an arbitrary value by anyone.

{
\lstset{escapechar=,language=JavaScript, numbers=left,xleftmargin=2em,frame=lines,framexleftmargin=1.5em}
\begin{lstlisting}
library Lib {
  struct Data { address owner; }
  function write(Data storage self, address value) {
    self.owner = value;
  }
}

contract Bob {
  Lib.Data data;

  function ping(address c) {
    Lib.write(data, c);
  }
}
\end{lstlisting}
}

The given compliance pattern requires that the storage offset specified in a $\SSTORE$ instruction needs to be determined by the caller of the contract. This pattern might indeed be sufficient for the semantic property only considering $\SSTORE$ instructions.

The violation pattern requires that the reachability of $\SSTORE$ instructions as well as the offset given to them may not depend on the caller of the contract.

However, in the following contract an unrestricted write is detected:
{
\lstset{escapechar=,language=JavaScript, numbers=left,xleftmargin=2em,frame=lines,framexleftmargin=1.5em}
\begin{lstlisting}
contract Test {
  bool test = false;

  function flipper () {
    if (msg.sender != 0)
      flip();
  }
  function flip () internal {
    test = !test;
  }
}
\end{lstlisting}
}
This contract should be safe with respect to the semantic definition since \lstinline|flip|, the only function containing a write access is an internal function, meaning that it can only be invoked within the contract. Given that the only place where it is invoked (in the \lstinline|flipper| function), it is done with a restriction on the caller (\lstinline|msg.sender|), also this storage access is restricted. However, the contract is reported to match the violation pattern. A reason for that could be an unsoundness in the underlying dependency analysis.

\subsubsection{Restricted transfer}
The RT (restricted transfer property) excludes that Ether transfers (via $\CALL$) cannot be invoked by any user. Again one could criticize that the property does not consider other ways of transferring money (e.g., by $\CALLCODE$).
The following contract, for example, is considered safe by this definition:
{
\lstset{escapechar=,language=JavaScript, numbers=left,xleftmargin=2em,frame=lines,framexleftmargin=1.5em}
\begin{lstlisting}
contract Bob {
  function sendMoney(address c) {
    c.callcode.value(5)();
  }
}
\end{lstlisting}
}

The corresponding compliance pattern requires that all calls transfer $0$ Ether. Given that the property only considers $\CALL$ instructions, this pattern is probably sufficient.

There are two violation patterns for the RT property, the first one requires that there is a $\CALL$ instruction transferring a non-zero amount and whose reachability may be dependent on the caller.
We can give a counterexample similar to the one for the RW violation pattern:
{
\lstset{escapechar=,language=JavaScript, numbers=left,xleftmargin=2em,frame=lines,framexleftmargin=1.5em}
\begin{lstlisting}
contract Test {
  function sendMoney() {
    if (msg.sender != 1)
      sendM();
  }
  function sendM () internal {
     msg.sender.send(1);
  }
}
\end{lstlisting}
}
Again, the tool does not detect that effectively the money transfer is restricted since the internal function \lstinline|sendM| can only be invoked in a restricted fashion.

The second violation pattern for the RT property requires instead of the transferred value to be non-zero, that the value is determined by the input to the call while at the same point the input might not affect the reachability of the $\CALL$ instruction.

We can again give a simple counterexample similar to the previous one:
{
\lstset{escapechar=,language=JavaScript, numbers=left,xleftmargin=2em,frame=lines,framexleftmargin=1.5em}
\begin{lstlisting}
contract Test {
  function sendMoney(uint x) {
    if (msg.sender != 1)
      sendM(x);
  }
  function sendM (uint y) internal {
     msg.sender.send(y);
  }
}
\end{lstlisting}
}
This example is detected as insecure while having only restricted money transfers.

\subsubsection{Handled exception}
The HE property (Handled exception) is not semantically defined, but intuitively shall ensure that exceptions that occurred in function calls shall be handled.
Due to the lack of a formal definition, it is hard to argue to which extend the given patterns really are sufficient, but we give here examples of proper/problematic exception handling which are wrongly classified.

The compliance pattern requires that every call must be followed by some branching instruction whose condition is determined by the call's return value.
Clearly, the following contract is matched by this pattern even though it does not perform a proper exception handling.
{
\lstset{escapechar=,language=JavaScript, numbers=left,xleftmargin=2em,frame=lines,framexleftmargin=1.5em}
\begin{lstlisting}
contract SimpleBank {
  mapping(address => uint) balances;
  uint successes;

  function withdraw() {
    bool success = msg.sender.send(balances[msg.sender]);
    if (success) { successes++; }
    balances[msg.sender] = 0;
  }
}
\end{lstlisting}
}
Even though this contract branches on the return value of the call, this branching does not influence the critical instruction, namely the following storage update that assumes a successful call.

The violation pattern for HE requires that all branching instructions following a $\CALL$ instruction do not have a condition that depends on the outcome of the call.
We give an example of a contract matching this pattern that however implements a useful form of exception handling:
{
\lstset{escapechar=,language=JavaScript, numbers=left,xleftmargin=2em,frame=lines,framexleftmargin=1.5em}
\begin{lstlisting}
library Lib {
  function toInt(bool b) returns (uint n) {
    if (b)
      return 1;
    else
      return 0;
  }
}

contract SimpleBank {
  mapping(address => uint) balances;

  function withdraw() {
    bool success = msg.sender.send(balances[msg.sender]);
    balances[msg.sender] = Lib.toInt(success) * balances[msg.sender];
  }
}
\end{lstlisting}
}
This contract uses the return value of the call to update the callee's balance after the call depending on that. Since the branching on the return value is outsourced to the library function \lstinline|toInt|, it can not be captured by the corresponding pattern.

In general, it is hard to imagine how proper exception handling should be generically defined, since this is a property which depends in the end on the contract's desired functionality.

\subsubsection{Transaction ordering dependency}
The TOD (Transaction ordering dependency) property is again not formally defined, but requires that the order of other transactions shall not influence the calls of the contract. More precisely, calls shall not depend on state that can be altered by other transactions.
The paper says that actually different types of dependency will be considered distinguishing whether the amount to be transferred (TA), the receiver (TR) or the reachability of the $\CALL$ as a whole are affected (TT).
However, it seems that TT is not implemented since not even the following straight forward TT violating contract is detected by the online tool:
{
\lstset{escapechar=,language=JavaScript, numbers=left,xleftmargin=2em,frame=lines,framexleftmargin=1.5em}
\begin{lstlisting}
contract SimpleGame {
  uint counter = 0;

  function play() {
    counter = 10;
  }

  function getReward() {
    if (counter > 0) {
      msg.sender.send(10);
    }
  }
}
\end{lstlisting}
}

The compliance pattern for TOD requires that calls shall not depend on the contract's storage or balance.
Again this property does not consider that there are different ways of calling, e.g., using $\CALLCODE$.

The following contract is considered secure:
{
\lstset{escapechar=,language=JavaScript, numbers=left,xleftmargin=2em,frame=lines,framexleftmargin=1.5em}
\begin{lstlisting}
contract Bob {
  uint price;

  function setPrice(uint v) {
    price = v;
  }

  function sendMoney(address c) {
    c.callcode.value(price)();
  }
}
\end{lstlisting}
}
This contract transfers an amount of money (\lstinline|price|) that it reads from the storage and that could have been modified by another transaction before. Still, no warning about a TOD violation is triggered by the online tool.

The violation pattern requires that there is a $\CALL$ which depends on a read of a constant storage cell that can be written.

Consider the following example contract:
{
\lstset{escapechar=,language=JavaScript, numbers=left,xleftmargin=2em,frame=lines,framexleftmargin=1.5em}
\begin{lstlisting}
contract Bob {
  uint price = 5;

  function sendMoney(address c) {
    price = price;
    c.send(price);
  }
}
\end{lstlisting}
}
This contract is labeled to be TOD even though the transferred amount is constantly $5$ and cannot be influenced by any other transaction.

\subsubsection{Validated arguments}
The VA (Validated arguments) property is again not semantically specified, but shall ensure that arguments to a function are checked for meeting desired preconditions.
Similarly to the HE property, it is unclear how such a goal should be captured by a generic property.

The compliance pattern requires that such values that depend on input value may only be written to the global storage if they have previously been checked, meaning that must have been a conditional branching before whose condition depended on the argument.

The following contract is an easy example of a contract matching the compliance pattern while not performing proper argument validation:
{
\lstset{escapechar=,language=JavaScript, numbers=left,xleftmargin=2em,frame=lines,framexleftmargin=1.5em}
\begin{lstlisting}
contract Test {
  uint test;
  uint count = 0;

  function setTest (uint x) {
    if (x < 10) {
      count++;
    }
    test = x;
  }
}
\end{lstlisting}
}
Even though the write to storage of argument variable \lstinline|x| is preceded by a corresponding conditional branch, this check does not influence whether the variable is indeed written to storage.

The violation pattern for VA requires that there is a storage instruction writing a value dependent on an argument that is not preceded by a corresponding conditional branch with a condition dependent on the argument.

The following contract performs a proper argument validation, but is still matched by the violation pattern.
{
\lstset{escapechar=,language=JavaScript, numbers=left,xleftmargin=2em,frame=lines,framexleftmargin=1.5em}
\begin{lstlisting}
library Lib {
  function validateArgument(uint i) {
    if (!(i >= 0 && i < 100))
      throw;
  }
}

contract Test {
  uint test;

  function setTest (uint x) {
    Lib.validateArgument(x);
    test = x;
  }
}
\end{lstlisting}
}

Since the validation is performed by the library function \lstinline|validateArgument|, the conditional branch which performs the validation cannot be detected.

\subsection{NeuCheck}
\label{sec:neucheck}
The tool NeuCheck\cite{lu2019neucheck} analyses Ethereum Smart contracts written in the Solidity by checking the contract's syntax graph for specific patterns. The tool is claimed to be sound even though no concrete soundness claim is formulated.

The formulated patterns are purely syntactic and can be rather seen as a check for the compliance with certain style guidelines.
Take as an example the access control pattern: this pattern checks whether all functions have modifiers such as \lstinline|private| or \lstinline|internal| defined which restrict general access.
It is unclear which semantic property should be implied by this pattern, and clearly there are safe usages of public functions as well as incorrect access control (e.g., due to a wrong party being allowed to call a certain contract function) even though a contract function is restricted by some modifier.

For illustrating the issues of this syntactic pattern-based approach further, we will in the following review the reentrancy pattern as this is particularly interesting for our case:
The reentrancy pattern checks for the occurrences of Solidity's \lstinline|call| function that does not have a gas limit checks and checks whether this occurrence is followed by the assignment of a state variable.
First, the absence of checking a gas limit does not ensure that reentrancy attacks are not possible.
If not gas limit is set, the gas given to the call is computed with respect to the remaining gas.
So one could easily set a high gas limit or even set all remaining gas of the execution as a gas limit (if the amount specified exceeds the remaining gas, the same gas as in the case of a lacking specification is given to the call).
On top of this, as discussed for Securify, setting a variable assignment is not the only way of changing the state (this can also be done via a library call).
The counterexample for Securify's NW compliance pattern would also be a valid counterexample for this case. Similarly, the Securify's NW violation pattern would serve as a counterexample for the patterns completeness.
As a consequence, matching the reentrancy pattern clearly does not guarantee the absence of a reentrancy attack.

Unfortunately, we could not experimentally assess the unsoundness of the provided patterns, since we did not find a way to build the tool from the provided sources\footnote{The sources are made available at \url{https://github.com/Northeastern-University-Blockchain/NeuCheck}. We contacted the authors at the end of November for clarification of the building process but received no reply as of January 20th, 2020.}.
The assessment of the tool is further aggravated by the fact that the paper gives the corresponding patterns in PseudoCode that leaves many crucial details (in particular how the dependency structure between syntactic constructs is established) undefined.

\subsection{EtherTrust}
\label{sec:ethertrust}
The approach to sound smart contract analysis presented in~\cite{GMS::CAV18} exhibits an unsoundness when it comes to modeling reentering executions. 
More precisely, the proposed abstraction assumes that the contract's storage at the point of reentering is the same as at the point of calling. This is not necessarily the case, since another (malicious) contract might in the meanwhile manipulate the storage of the corresponding contract by invoking other state-changing functions of it. 
An example is the \lstinline|Bank| contract depicted in \autoref{fig:reentrancy}. This contract would be deemed secure according to the abstraction presented in~\cite{GMS::CAV18} since the described attack requires to change the value of the contract's \lstinline|lock| variable by an invocation of the \lstinline|release| function prior to reentering the \lstinline|drain| function. If it is assumed that the storage at the point of reentering the \lstinline|drain| function is the same as at the point of invoking the \lstinline|call| method, the contract would be secure since the \lstinline|lock| variable is always set to \lstinline|1| when calling hence preventing to reach the \lstinline|call| method when reentering.

\subsection{ZEUS}
\label{sec:zeus}
A recently published work is the analysis tool ZEUS~\cite{kalra2018zeus} that analyses smart contracts written in Solidity using symbolic model checking.
The analysis proceeds by translating Solidity code to an abstract intermediate language that again is translated to LLVM bitcode. Finally, existing symbolic model checking tools for LLVM bitcode are leveraged for performing the analysis.
The security properties are defined in terms of XACML style policies that are translated to state reachability assertions in the intermediate language (and finally to assertions in LLVM bitcode). The authors evaluate their tool for generic security properties (such as reentrancy) which are however not expressed in terms of policies (which are contract specific), but by an informal description of how to add specific assertions to contracts of interest. For some properties, e.g. reentrancy, the insertion of assertions is not sufficient and additional program modifications need to be applied to the original contracts.
The authors claim their tool to be sound which they support by a proof sketch and empirical results. This claim however has several shortcomings:
\begin{itemize}
\item There is no formal soundness statement made. In particular, there is no formal relation between the policy compliance of Solidity contracts and the analysis results established and also not covered in the proof sketch. 
\item The proof is sketchy and exhibits several holes and at least two flaws: While there is an intuitive argument why given the translation from Solidity to the abstract intermediate language are correct and adding assertions does not influence semantics, there is no proof provided for the statement that the translation from the intermediate language to LLVM bitcode preserves soundness. That this property does not hold is (indirectly) admitted by the authors as they discuss that the compiler optimizations on LLVM bitcode remove relevant contract behavior. Consquently, assuming that compiler optimizations on LLVM bitcode are semantics preserving, this clearly contradicts that the translation from the intermediate language preserves semantics. For one particular optimization, a fix is hard coded, but there is no formal argument given that this particular fix is sufficient for establishing soundness.
Also the claim that the provided translation from Solidity to the intermediate language is faithful can be clearly contradicted. This is due to a clear deviation in the call semantics of the intermediate language from the Solidity semantics. The mechanism underlying Solidity's call functionalities is the one of the $\textsf{CALL}$ instructions in EVM bytecode. In particular, this mechanism determines that the failing of a contract call causes the revocation of the global state to the point of calling. The proposed semantics of the intermediate language however does not allow for such a revocation (even by design). Grishchenko et al.~\cite{GMS::POST18} spotted a similar issue in the semantics used in Oyente~\cite{luu2016making}.
\item The final results for the predefined properties (such as reentrancy) are not covered by the soundness claim as there is no (formal) argument made that the performed program modifications are sound.
In particular the presented method for detecting same-function reentrancies is faulty: For detecting same-function reentrancy of a function $f$, $f$ is replicated (resulting in $f'$) and the Solidity's \lstinline|call| construct in $f$ is replaced by a call to $f'$ whose occurrence of \lstinline|call| is preceded by a false assertion for proving the unreachability of the corresponding call. This treatment is problematic in several ways: First, the use of the \lstinline|call| construct is not the only way of calling another contract, indeed it is way more common to use direct calls.
Second, similar to the problem discussed for~\cite{GMS::CAV18}, such an abstraction fails to detect the example of \autoref{fig:reentrancy}, even though this is clearly a case of same-function reentrancy. The problem is that for the used approach of replacing calls by invocations to $f'$ it is assumed that a call can at most be preceded by a direct invocation of $f$ without any other state changing function calls being happening in the meanwhile.  Consequently, single-entrancy (and even same-function single-entrancy) is a property that cannot be assessed by considering certain contract parts in isolation.
Consider the following to contracts:
{
\lstset{escapechar=,language=JavaScript, numbers=left,xleftmargin=2em,frame=lines,framexleftmargin=1.5em}
\begin{lstlisting}
contract Bank{
  uint lock;
  mapping (address => uint) bal;

  function take () {
    lock = 1;
  }

  function release () {
    lock = 0;
  }

  function drain(address a) {
    if (lock == 1) { throw; }
      lock = 1;
      a.call.value(bal[msg.sender])();
      bal[msg.sender] = 0;
      lock = 0;
   }
}
\end{lstlisting}
\begin{lstlisting}
contract Bank{
  uint lock;
  mapping (address => uint) bal;

  function drain(address a) {
    if (lock == 1) { throw; }
      lock = 1;
      a.call.value(bal[msg.sender])();
      bal[msg.sender] = 0;
      lock = 0;
  }
}
\end{lstlisting}
}
Even though the implementation of the \lstinline|drain| function is identical in both contracts, the first contract allows for a (same-function) reentrancy attack while the second does not. ZEUS, however, would label both of these contracts to be safe.
\end{itemize}
Unfortunately, we were not able to conduct an empirical evaluation of the described issues since no sources for ZEUS are made available. Our request to the authors of~\cite{kalra2018zeus} to provide us with sources or binaries that would allow us to experimentally access ZEUS has been denied. For this reason we were forced to conduct our comparison with ZEUS on the publicly available dataset for which~\cite{kalra2018zeus} reports numbers. We further discuss this dataset in the following.

\subsubsection{Problems in the ZEUS dataset}
\label{sec:appendix-dataset-problems}
While comparing \horst{} against the dataset used in \cite{kalra2018zeus}\footnote{\url{https://docs.google.com/spreadsheets/d/12_g-pKsCtp3lUmT2AXngsqkBGSEoE6xNH51e-of_Za8}} we encountered several problems.
The dataset is a list of $1524$ contracts with the classification provided by ZEUS and the assessment whether the authors consider this classification correct. No source or bytecode is provided.

Of these $1524$ contracts, $21$ have a name that does not resemble a Ethereum address (e.g. \texttt{Code\_3\_fdf6d\_faucet}).
Of the remaining $1503$, $397$ actually have a truncated address (i.e., $39$ instead of $40$ hexadecimal digits).
The remaining $1106$ addresses contain duplicates.
After removing them we arrive at $1033$ addresses.
For $286$ of these addresses we were not able to obtain the bytecode: $53$ have been self-destructed according to \url{https://etherscan.io} which makes retrieving their bytecode non-trivial, $232$ have no recorded transaction (in particular no transaction that created them) and 1 is an external account (i.e., an address with no code deployed).
This leaves us with $747$ addresses.
After removing contracts with the same bytecode we arrive at $720$ contracts\footnote{Note that the authors of \cite{kalra2018zeus} deduplicated their dataset on the source level, therefore it may well be that these same bytecodes were produced by different source codes}.
We contacted the authors of \cite{kalra2018zeus} on July 16th 2019 about these problems and received no answer as of January 20th 2020.

\end{document}